\documentclass[10pt]{article}

\usepackage{amsmath, amssymb, amsthm, amsfonts, graphicx}
\usepackage{cite}
\usepackage{geometry}

\usepackage{color}

\definecolor{orange}{rgb}{1,0.5,0}

\newtheorem{theorem}{Theorem}[section]

\newtheorem{proposition}[theorem]{Proposition}

\newtheorem{lemma}[theorem]{Lemma}

\newcommand{\eps}{\varepsilon}
\newcommand{\R}{\mathbb{R}}

\renewcommand{\AA}{{\mathcal A}}

\newcommand{\TTT}{\mathrm T}



\def\XXint#1#2#3{{\setbox0=\hbox{$#1{#2#3}{\int}$}
\vcenter{\hbox{$#2#3$}}\kern-.5\wd0}}

\newcommand{\upref}[2]{\hspace{-1.3ex}\stackrel{\eqref{#1}}{#2}\hspace{-0.6ex}}

\newcommand{\lupref}[2]{\hspace{0ex}\stackrel{\eqref{#1}}{#2}}

\newcommand{\lupupref}[3]{\hspace{0ex}\stackrel{\eqref{#1},\eqref{#2}}{#3}}

\newcommand{\mm}{{\mathrm {\mathbf m}}}
\newcommand{\hh}{{\mathrm {\mathbf h}}}
\newcommand{\uu}{{\mathrm {\mathbf u}}}
\newcommand{\vv}{{\mathrm {\mathbf v}}}
\newcommand{\ee}{\textbf e}
\newcommand{\nnu}{\boldsymbol \nu}

\newcommand{\OOO}{\mathcal O}
\newcommand{\LLL}{\mathcal L}
\newcommand{\EEE}{\mathcal E}
\newcommand{\SSS}{\mathcal S}

\newcommand{\MM}{\mathbf{M}}


\newcommand{\dist}{{\rm dist}\,} 
\renewcommand{\phi}{\varphi}


\newcommand{\nltL}[2]{\|#1\|_{L^2({#2})}}

\newcommand{\cciL}[1]{C_c^{\infty}(#1)}

\numberwithin{equation}{section}
\parindent=0cm

\begin{document}

\title{Domain structure of bulk ferromagnetic crystals in applied
 fields near saturation}

\author{Hans Kn\"upfer \thanks{Courant Institute of Mathematical
   Sciences, New York University, New York, NY 10012} \and Cyrill
 B. Muratov \thanks {Department of Mathematical Sciences, New Jersey
   Institute of Technology, Newark, NJ 07102}}
\maketitle
\begin{abstract}
 We investigate the ground state of a uniaxial ferromagnetic plate
 with perpendicular easy axis and subject to an applied magnetic
 field normal to the plate. Our interest is the asymptotic behavior
 of the energy in macroscopically large samples near the saturation
 field. We establish the scaling of the critical value of the applied
 field strength below saturation at which the ground state changes
 from the uniform to a branched domain magnetization pattern and the
 leading order scaling behavior of the minimal energy. Furthermore,
 we derive a reduced sharp-interface energy giving the precise
 asymptotic behavior of the minimal energy in macroscopically large
 plates under a physically reasonable assumption of small deviations
 of the magnetization from the easy axis away from domain walls. On
 the basis of the reduced energy, and by a formal asymptotic analysis
 near the transition, we derive the precise asymptotic values of the
 critical field strength at which non-trivial minimizers (either
 local or global) emerge. The non-trivial minimal energy scaling is
 achieved by magnetization patterns consisting of long slender
 needle-like domains of magnetization opposing the applied field.
\end{abstract}



\tableofcontents

\section{Introduction}

Ferromagnetic materials offer a fascinating example of physical
systems capable of producing an extraordinarily rich variety of
spatial patterns \cite{hubert}. By a pattern in a ferromagnet, one
usually understands a stable spatial distribution of the magnetization
vector in the sample. This definition reflects the mesoscopic nature
of the magnetization patterns: they are observed on the length scales
significantly exceeding the atomic scale (making the definition of the
magnetization per unit volume meaningful), yet they are susceptible to
small random fluctuations due to thermal noise, with the noise
providing a selection mechanism for observable patterns.

\medskip

On the mesoscopic level, the theory describing the spatio-temporal
dynamics of the magnetization patterns in ferromagnetic materials is
formulated in terms of partial differential equations (with a possible
addition of stochastic forcing \cite{brown63}) for the magnetization
vector $\mathbf M = \mathbf M(\mathbf r, t)$
\cite{landau8,landau9,hubert}.  At the center of the theory is the
micromagnetic energy functional $\EEE [\mathbf M]$ describing the
contributions of different physical interactions (for specifics, see
the following section) \cite{hubert,desimone00,kohn07iciam}.
Magnetization patterns are viewed as global or, more generally, local
minimizers of $\EEE$, forming mainly due to the competition of the
exchange, anisotropy, and the magnetostatic interactions, with the
applied external field playing a significant role
\cite{hubert,landau8}. Because of the non-local nature of the
magnetostatic forces, their effect can depend significantly on the
geometry of the ferromagnetic sample
\cite{desimone02,desimone00,hubert,kohn07iciam}.

\medskip

In bulk crystalline materials the local anisotropy energy and the
short-ranged exchange energy act jointly to favor magnetization
distributions in the form of extended {\em magnetic domains} in which
the magnetization vector stays nearly constant, separated by {\em
  domain walls}, where the magnetization direction changes
abruptly. It was already realized in the pioneering works of Landau
and Lifshitz \cite{landau35} and Kittel \cite{kittel46} that, while
the structure of the domain walls may not be significantly affected by
the long-range magnetostatic forces, these forces should determine the
relative spatial arrangement of the domains with different orientation
of the magnetization. In fact, since the total magnetostatic energy
scales faster than volume as the size of the system increases, in
large samples the effect of long-range magnetostatic interactions
becomes dominant. As a result, the magnetization patterns develop
rapid oscillations to cancel out the induced magnetic field and form
intricate structures, which are generally referred to as {\em branched
  domains}, even though the actual topological branching of the
domains is not really required.

\medskip

Despite a long history of observations of branched domain structures
in ferromagnetic materials \cite{hubert} and related systems (see e.g.
\cite{strukov,landau8,prozorov05,prozorov07,shur00}), mathematical
understanding of the branching phenomenon started to emerge only
recently with the ansatz-free analysis of energy minimizing structures
\cite{choksi99, choksi08} (there is, of course, an extensive
literature of ansatz-based studies, see e.g.
\cite{privorotskii72,gabay85,kaczer64}). In particular, for bulk
crystalline ferromagnets in the absence of an applied field the first
rigorous analysis of the branched domain structures was performed in
the work of Choksi and Kohn \cite{choksi98}. They studied a sharp
interface version of the micromagnetic energy and were able to obtain
matching (in the sense of scaling with the sample thickness) upper and
lower bounds for the energy of minimizers of the reduced energy. We
note that the connection of the sharp interface energy to the full
micromagnetic energy in the limit of high anisotropy was recently
established in \cite{OttoViehmann-2008}. The results of
\cite{choksi99} are suggestive that the energy minimizers of the sharp
interface micromagnetic energy are in some sense not very different
from the branched domain ansatz used as a trial function in the
calculation of the upper bound of the energy of the minimizers. The
latter shares many common features with the branched domain structures
observed in experiments \cite{hubert}.  Since then, similar results
have also been obtained for models describing type-I superconductors
in the intermediate state \cite{choksi08,choksi04} and
diblock-copolymers undergoing microphase separation
\cite{choksi01,choksi09,m:cmp09}.

\medskip

Note that the presence of a moderate applied magnetic field does not
alter the situation qualitatively. On the other hand, if a very strong
external magnetic field is applied to the sample, then it will
obviously overwhelm all other effects and result in a uniform
magnetization pattern in the direction of the applied field.  It is
then clear that a bifurcation from the uniform to a non-uniform
magnetization pattern will occur when the field strength is gradually
reduced. Let us point out that this transition would typically occur
via nucleation and growth of new domains and is, therefore,
accompanied by a hysteresis. In other words, in a certain range of
applied fields one should find coexistence of different types of
patterns. Their relative stability and the transition pathways between
them are, therefore, important questions to be addressed. Note that
these questions also naturally arise in various other problems of
energy driven pattern formation, such as type-I and type-II
superconductors and Ginzburg-Landau models with Coulomb repulsion
\cite{choksi04,choksi08,aftalion07,m:cmp09}.

\subsection*{Main results}

We investigate the properties of the magnetization patterns in bulk
uniaxial crystalline ferromagnets in the presence of external magnetic
field applied along the material's easy axis.  We are interested in
the transition to non-trivial energy minimizers occurring near the
saturation field in ferromagnetic plates with perpendicular easy axis.

\medskip

In Section \ref{sec-bulk}, we establish the scaling behavior of the
minimal energy in dependence of the plate thickness and the exterior
field. The precise result is stated in Theorems \ref{thm-bulk} and
\ref{thm-diffuse}. As a consequence, we get that in macroscopically
large plates the transition from the monodomain to the branched
magnetization pattern occurs when the strength $H_{\rm ext}$ of the
applied field $H_{\rm ext}$ satisfies
\begin{align} \label{eq:47} %
  H_s - H_\mathrm{ext} \ \sim \ \left\{ {A K^2 \over L^2 M_s (K + 4
      \pi M_s^2) } \ln \left( {4 \pi^2 L^2 M_s^4 \over A K} \right)
  \right\}^{\frac 13},
\end{align}
where $H_s = 4 \pi M_s$ is the saturation field. We refer to Section
\ref{sec-model} for the precise definitions of the physical parameters
in \eqref{eq:47}. For smaller applied fields the energy of the
minimizers per unit area in macroscopically large plates is
\begin{align} \label{eq:48} %
  {\mathrm{Energy} \over \mathrm{Area}} \ \sim \ \left\{ {L A K^2
      M_s^2 \over K + 4 \pi M_s^2} \left( 1 - {H_\mathrm{ext} \over
        H_s} \right)^3 \ln \left( 1 - {H_\mathrm{ext} \over H_s}
    \right)^{-1} \right\}^{\frac 13}.
\end{align}
In particular, the energy per unit area of the plate scales as
$L^{\frac 13}$ with the plate thickness and linearly (up to a slow
logarithmic dependence) with the deviation of the applied field from
the saturation field. This energy is achieved by trial functions
consisting of periodic patterns of slender needle-like disconnected
domains of magnetization opposing the applied field. In each unit cell
of such a trial function the magnetization pattern refines toward the
plate boundaries in a self-similar fashion (see
Fig. \ref{fig-needles}). This class of magnetization patterns is,
therefore, a natural candidate for the precise form of the energy
minimizers (see also Fig. \ref{cell3d}).

\begin{figure}
\hspace{3.2cm} {\bf a)} \\ \\
\centerline{\includegraphics[width=2cm,angle=90]{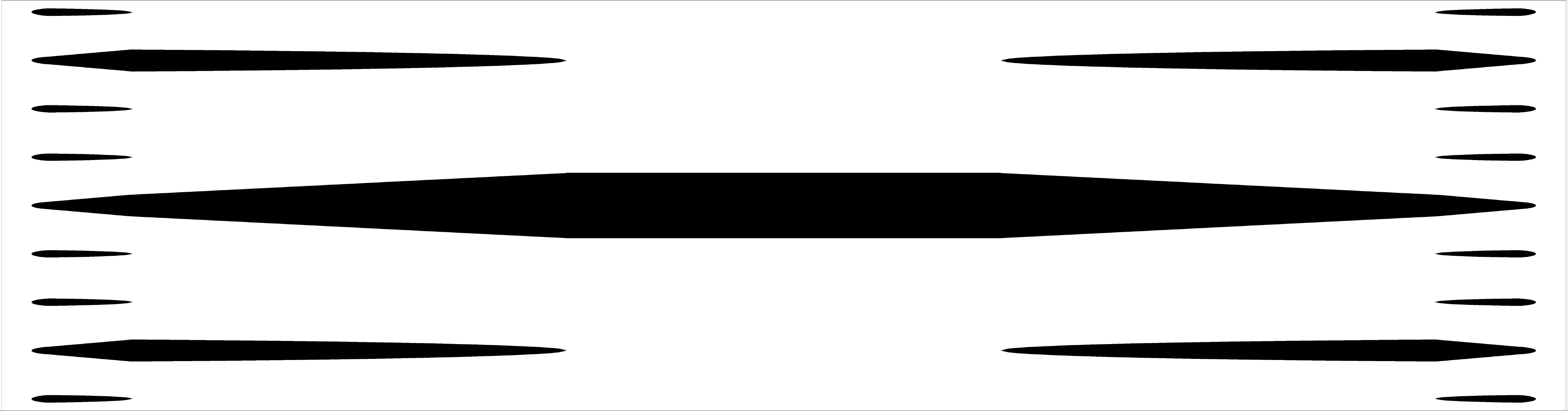}} \\

\hspace{3.2cm} {\bf b)} \\ 
\centerline{\includegraphics[width=7.5cm]{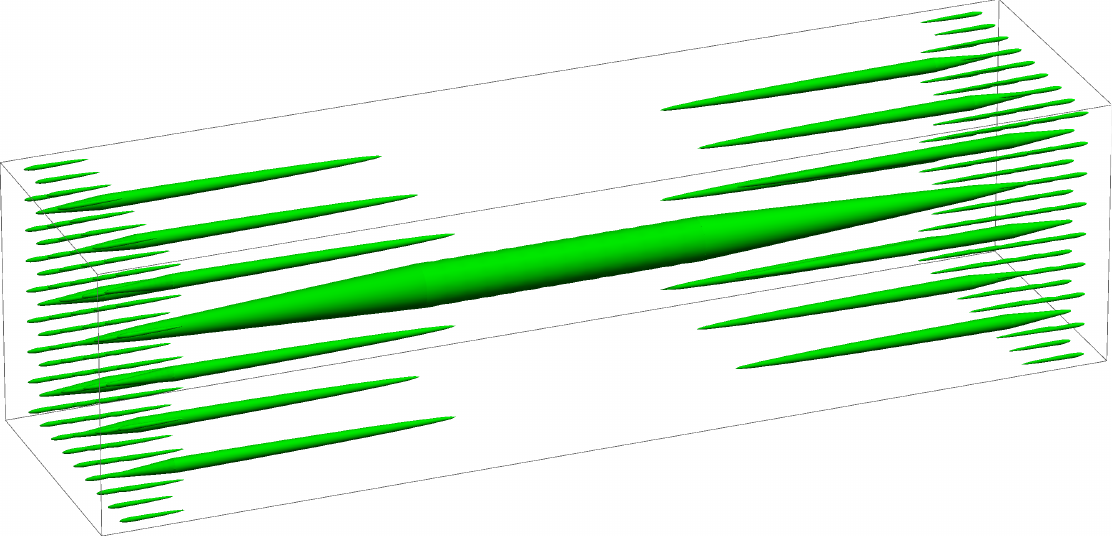}}
\caption{A sketch of the refining needle configuration. (a) The
  projection of one period of the domain pattern on the
  $x_1x_2$-plane. (b) A three-dimensional sketch of one period of the
  domain pattern. Shaded regions indicate the domains of magnetization
  opposing the applied field.}
\label{fig-needles}
\end{figure}

\medskip

In Section \ref{sec-reduced}, we further investigate the asymptotic
behavior of the energy in macroscopic samples. Under a physically
reasonable assumption that the magnetization vector does not deviate
strongly from the easy axis, we rigorously derive a reduced energy,
whose minimum agrees asymptotically with the sharp interface version
of the energy, see Theorem \ref{t:E0}. The obtained result assigns a
mathematical meaning to the $\mu^*$-method for computing the energy
contributions away from the domain walls in a magnetization pattern,
which was proposed more than half a century ago in the physics
literature \cite{williams49}.  The obtained reduced energy, given by
(\ref{eq:21}), practically coincides with that of an infinitely hard
material in which the strength of magnetostatic interaction has been
suitably renormalized. The latter explains why the behavior of minimal
energy in both hard and soft materials is the same up to a certain
factor in the macroscopic limit. Let us also note that for the same
reason the energy per unit area becomes essentially independent of the
saturation magnetization $M_s$ in soft materials with fixed value of
$H_\mathrm{ext} / H_s$, see \eqref{eq:48}.

\medskip

In Section \ref{sec-transition}, we perform a formal asymptotic
analysis of the reduced energy in (\ref{eq:21}) and establish a
precise asymptotic behavior of the critical field $H_{c_0}$ at which
the only minimizer (global or local) is expected to be the uniform
state, and the critical field $H_{c_1}$ at which non-trivial
minimizers emerge, see Theorem \ref{thm-needle}. It turns out that
asymptotically for macroscopically large plates
\begin{align} \label{eq:49} %
  1 - {H_{c_{0,1}} \over H_s} \ \simeq \ C_{0,1} \left\{ {A K^2 \over
      L^2 M_s^4 (K + 4 \pi M_s^2) } \ln \left( {4 \pi^2 L^2 M_s^4
        \over A K} \right) \right\}^{\frac 13},
\end{align}
where $C_0 \approx 0.4368$ and $C_1 \approx 0.5403$. At $H \sim
H_{c_{0,1}}$ the magnetization patterns are expected to consist of
slender, approximately radially-symmetric needle-like domains spanning
the entire plate thickness and separated by large distances compared
to the needle radius. Equation \eqref{eq:49} is obtained from a
reduced one-dimensional expression for the energy of needles, see
\eqref{eq:25}.  Solving the respective Euler-Lagrange equation
exactly, we obtain the precise shape of the needle and,
correspondingly, the expression in \eqref{eq:49}.

\subsection*{Structure of the paper and notations}

The paper is structured as follows: In Section \ref{sec-model}, we
present the micromagnetic energy functional and introduce its sharp
interface version. In Section \ref{sec-bulk}, we prove matching upper
and lower bounds for bulk samples near the critical field. In Section
\ref{sec-reduced}, we derive a reduced model that captures the leading
order energy in the macroscopic limit. In Section
\ref{sec-transition}, we perform a further reduction of the energy and
find the precise location of the transition to non-trivial minimizers
by solving the reduced minimization problem exactly.

\medskip

We will denote a generic point in space by $x = (x_1, x_2, x_3) =
(x_1, x_\perp)$, where $x_1$ is the component in the direction of the
easy axis and $x_\perp = (x_2, x_3)$ is the component projection onto
the plane normal to the easy axis.  Similarly, we will denote the
component of a vector $\mathbf v \in \mathbb R^3$ in the direction of
the easy axis by $v_1$ and its projection to the plane normal to the
easy axis by $\mathbf v_\perp = (v_2, v_3)$. The spatial gradient is
similarly separated into the components along and perpendicular to the
easy axis: $\nabla = (\partial_1, \nabla_\perp)$.

\medskip

We use the symbols $\sim, \lesssim$ and $\gtrsim$ to indicate that an
estimate holds up to a universal constant. For example $A \sim B$
means that there are universal constants $c, C > 0$ such that $cA \leq
B \leq CA$. The symbols $\ll$ and $\gg$ indicate that an estimate
requires a small universal constant. For example, if we say that $A
\lesssim B$ for $\eps \ll 1$, this is a short way of saying that $A
\leq CB$ holds for all $\eps \leq \eps_0$ where $\eps_0 > 0$ is a
small universal constant. By the symbol $\simeq$, we indicate
asymptotic equivalence of two expressions: E.g. by writing $A \simeq
B$ for $\eps \ll 1$ and $\lambda \ll 1$, we mean that for every
$\delta > 0$, there are $\eps_0, \lambda_0 > 0$ such that $|A/B - 1|
\leq \delta$ for all $\eps < \eps_0$ and all $\lambda < \lambda_0$.

\begin{figure}
\hspace{3cm} {\bf a)} \hspace{4cm} {\bf b)}
\vspace{-8mm}
\begin{center}
\includegraphics[width=3.35cm]{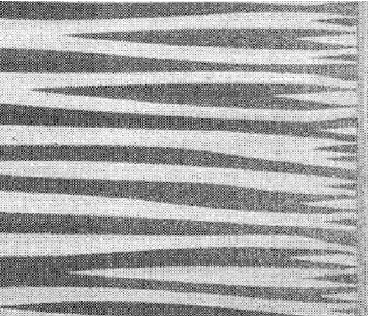}
\hspace{1cm}
\includegraphics[width=3.5cm]{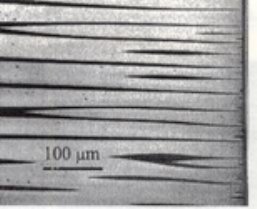}
\end{center}
\caption{Side views of magnetization domain patterns refining towards
  the boundary in bulk cobalt crystals: in the absence of the magnetic
  field (a) and in the applied field at 60\% to saturation (b). From
  Ref.  \cite{hubert}.}
\label{cell3d}
\end{figure}

\section{Physical model and sharp interface energy}
\label{sec-model}

{\it Micromagnetic energy:} Up to an additive constant, the
micromagnetic energy for a mono-crystalline uniaxial ferromagnet (see
e.g. \cite{hubert}, using CGS units) is given by
\begin{align}
 \label{eq:1}
 \begin{aligned}
   \EEE [\MM] &= \int_{\widetilde\Omega} \Bigl( {A \over 2 M_s^2}
   |\nabla \MM|^2 + {K \over 2 M_s^2} |\MM_\perp|^2 -
   \mathbf H_\mathrm{ext} \cdot \MM \Bigr) \, d^3 r \\
   &\qquad+ \frac12 \int_{\mathbb R^3} \int_{\mathbb R^3} {\nabla
     \cdot \MM(\mathbf r) \, \nabla \cdot \MM(\mathbf r') \over
     |\mathbf r - \mathbf r'|} \, d^3 r \, d^3 r' + {1 \over 8 \pi}
   \int_{\widetilde\Omega} |\mathbf H_\mathrm{ext}|^2 \, d^3 r
 \end{aligned}
\end{align}
Here, $\widetilde\Omega \subset \mathbb R^3$ describes the region of
space occupied by the ferromagnetic material, and the magnetization
vector $\MM : \R^3 \to \mathbb R^3$ satisfies $|\MM| = M_s$ in
$\widetilde\Omega$ and $\MM = 0$ outside.  The terms in the energy, as
they appear in the formula, are:
\begin{enumerate}
\item The {\em exchange} energy favoring a uniform magnetization.

\item The {\em anisotropy} energy favoring alignment of the
 magnetization with the easy axis.

\item The {\em Zeeman} energy favoring alignment with the external
 field $\mathbf H_\mathrm{ext}$.

\item The {\em stray field} energy describing long-range Coulomb
  interactions of the ``magnetic charges'' $\nabla \cdot \MM$.

\item A {\em constant} term, added for convenience.
\end{enumerate}
Also in (\ref{eq:1}), $A$ is the exchange constant, $M_s$ is the
saturation magnetization and $K$ is the uniaxial anisotropy
constant. The subscript ``$\perp$'' denotes the components of a vector
in the plane normal to the easy axis.

\medskip

{\it Geometry of the sample:} We consider a plate of constant
thickness $L$, whose surfaces are oriented in the direction normal to
the easy axis. To simplify the issues associated with the treatment of
the lateral boundaries of the sample, we assume periodicity with
period $\mathcal L$ in the plane normal to the easy axis.  We hence
write $\widetilde\Omega = (0, L) \times \widetilde\TTT$, where
$\widetilde\TTT = [0, \LLL )^2$ is a torus with periodicity $\LLL$.
The periodicity assumption, however, is not essential, as long as the
energy of the minimizers is extensive in $\mathcal L$, i.e. we have
$\inf \mathcal E = O(\mathcal L^2)$ as $\mathcal L \to \infty$.  As we
will show below, this will indeed be the case. Also, the external
field is assumed to be in the direction of the easy axis (and hence
also normal to the material surface):
\begin{align*}
 \mathbf H_\mathrm{ext} = H_\mathrm{ext} \, \mathbf e_1,
\end{align*}
where $\mathbf e_1$ is the unit vector in the direction of the easy
axis.

\medskip


{\it Rescaling:} As usual, we first introduce the exchange length
$l_\mathrm{ex} = \sqrt{A/(4 \pi M_s^2)}$ and the dimensionless quality
factor $Q = K/(4 \pi M_s^2)$ \cite{hubert}.  Introducing $\mm = \MM /
M_s $, $ \mathbf h_\mathrm{ext} = \mathbf H_\mathrm{ext}/( 4 \pi
M_s)$, $\ell = \mathcal L / L$, and measuring lengths and energy in
units of $L$ and $2 \pi M_s^2 L^3$, respectively, we can then rewrite
\eqref{eq:1} as
\begin{align}
\label{eq:8}
\EEE [\mm] = \int_\Omega {l_\mathrm{ex}^2 \over L^2} |\nabla \mm|^2 +
Q \int_\Omega |\mathbf m_\perp|^2 - \int_\Omega \left( 2 \mathbf
  h_\mathrm{ext} \cdot \mm - \mathbf h \cdot \mm + h_\mathrm{ext}^2
\right),
\end{align}
where $\Omega = (0,1) \times \TTT$ is the rescaling of
$\widetilde\Omega$, with $\TTT = [0, \ell)^2$, and $|\mm| =
\chi_\Omega$, where $\chi_\Omega$ denotes the characteristic function
of $\Omega$. The dimensionless stray field $\hh$ is defined as the
unique (see e.g. \cite{choksi98}) solution in $L^2(\mathbb R \times
\TTT; \mathbb R^3)$ of
\begin{align} \label{def-h}
\nabla \times \mathbf h = 0, \qquad \nabla \cdot \mathbf h = -
\nabla \cdot \mm && \mathrm{in} \quad \mathbb R \times \TTT,
\end{align} 
where \eqref{def-h} is understood in the distributional sense.

\medskip

{\it Sharp interface energy:} In a bulk uniaxial material, the
magnetization is expected to lie mostly in the direction of the easy
axis, i.e. $\mm \approx \pm \mathbf e_1$. The regions with different
orientations of the magnetization are expected to be separated by thin
Bloch walls \cite[p. 215]{hubert}. A Bloch wall is characterized by a
transition layer of thickness $w \approx Q^{-\frac 12} l_\mathrm{ex}$ in
which the magnetization rotates in the wall plane, thus avoiding the
creation of a stray field.  Taking advantage of the observation in
\cite[p. 367]{kohn07iciam}, one
can directly estimate the anisotropy and exchange terms in the energy
from below for any $\delta \ge 0$ fixed as
\begin{align} \label{eq:9} %
\int_{|\mm_\perp| \geq \delta} \left( {l_\mathrm{ex}^2 \over
    L^2} |\nabla \mm|^2 + Q |\mm_\perp|^2 \right) \geq \
\eps \int_{|\mm_\perp| \geq \delta} |\nabla m_1| ,
\end{align}
with the notation $\mm = m_1 \mathbf e_1 + \mm_\perp$ and
where we have introduced
\begin{align}
\label{eq:10}
\eps = {2 l_\mathrm{ex} \sqrt{Q} \over L}. 
\end{align}
In fact, the one-dimensional Bloch wall profile attains equality in
\eqref{eq:9} (see \cite{hubert}), which implies that the term in the
right-hand side of \eqref{eq:9} should actually well approximate the
term in the left-hand side for the energy minimizers. The condition of
validity of this approximation is that the wall thickness $w$ remains
much smaller than the characteristic length scale of the magnetization
pattern. In particular, one should have $w \ll L$. This condition is
achieved for sufficiently thick plates. In fact, large thickness is
also a necessary condition for branched domain patterns to be observed
in ferromagnetic materials \cite{hubert}. Therefore, in the present
context one is naturally interested in the asymptotic behavior of
energy for large values of $L$ or, equivalently, in the limit $\eps
\to 0$ with all other dimensionless parameters fixed.

\medskip

Dropping the gradient term in (\ref{eq:8}) where $|\mm_\perp| <
\delta$ and combining it with (\ref{eq:9}), one can see that
\begin{align}
 \EEE [\mm] \geq \ \eps \int_{|\mm_\perp| \geq \delta} |\nabla
  m_1| \, + Q \int_{|\mm_\perp| < \delta} |\mm_\perp|^2 - \int_\Omega
 \left( 2 \mathbf h_\mathrm{ext} \cdot \mm + \mathbf h \cdot \mm -
  h_\mathrm{ext}^2 \right) .  \label{eq:22}
\end{align}
This motivates the introduction of a sharp interface energy, in which
the gradient-squared term in \eqref{eq:8} is replaced by the total
variation of $m_1$ (see also \cite{choksi98,choksi99}). We note,
however, that the sharp interface energy is basically a tool to
approximate the behavior of the full physical energy in \eqref{eq:8}
and, therefore, can be tailored to our advantage. We choose the sharp
interface energy in the form
\begin{align} \label{eq:5} %
  E[\mm] = \eps \int_\Omega |\nabla m_1^\delta| + Q \int_{\{|\mathbf
    m_\perp| < \delta\}}|\mm_\perp|^2 + \delta^2 Q \int_{\{|\mm_\perp|
    \geq \delta\}}
 |\mm_\perp|^2 
 - \int_\Omega \left( 2 \mathbf h_\mathrm{ext} \cdot \mm + \mathbf h
  \cdot \mm - h_\mathrm{ext}^2 \right).
\end{align}
Here $0 < \delta \ll 1$ is an arbitrary ``cutoff'' parameter, whose
precise value is inessential (hence the index $\delta$ is dropped from
the definition of $E$), and $m_1^\delta$ is the truncated version of
$m_1$:  
\begin{align}
  \label{eq:2}
  m_1^\delta(x) = \left\{
  \begin{array}{cc}
    1 - \delta^2, & m_1(x) > 1 - \delta^2, \\
    m_1(x), & -1 + \delta^2 \leq m_1(x) \leq 1 - \delta^2, \\
    -1 + \delta^2, & m_1(x) < -1 + \delta^2.
  \end{array} \right.
\end{align}
The advantage of using $m_1^\delta$ in (\ref{eq:5}) instead of $m_1$
is that, consistently with (\ref{eq:22}), the interfacial term does
not contribute to the energy away from the domain walls, where $m_1
\approx \pm 1$.  Importantly, $E$ provides an ansatz-free lower bound
for $\EEE$:
\begin{align}
  \label{eq:18}
  \EEE[\mm] \geq (1 - \delta^2) E[\mm],
\end{align}
which can be easily seen by retracing the arguments leading to
\eqref{eq:22}.  Furthermore, since in the limit $\eps \to 0$ the
transition regions between different directions of $\mm$ are expected
to become $O(\eps Q^{-1})$ thin and the inequality in (\ref{eq:22}) to
become an equality for minimizers, in view of arbitrariness of
$\delta$ one should expect that $\inf \EEE \simeq \inf E$ for $\eps
\ll 1$. In the following, we will prove that this relation holds in
the sense of scaling, i.e., for sufficiently small $\eps > 0$, we have
$\inf \EEE \sim \inf E$.

\medskip

{\it Critical external fields:} Clearly, when the applied field
$h_\mathrm{ext}$ is sufficiently large, the minimal energy
configuration will be such that all magnetic moments are aligned with
the field, i.e. $\mm = \mathbf e_1 \chi_\Omega$. For smaller external
fields, the minimizer is attained by other configurations. The
external field strength, at which the uniform magnetization $\mm =
\mathbf e_1 \chi_\Omega$ looses its optimality is denoted by
$h_{c_1}$. Let us also note that appearance and disappearance of
patterns as a function of the control parameter in systems of this
kind is often accompanied by a hysteresis. Therefore, non-trivial
critical points of the energy may persist even for fields larger than
$h_{c_1}$. The critical field at which these critical points disappear
will be denoted by $h_{c_0}$.

\medskip

The saturation field $h_s$ is defined similarly in terms of the {\it
  relaxed energy}. In our setting, it is the variant of $\EEE$ where
the surface energy is not penalized, i.e.
\begin{align*} %
  \EEE_\mathrm{rel}[\mm] = \int_\Omega Q |\mm_\perp|^2 - \int_\Omega
  \left( 2 \mathbf h_\mathrm{ext} \cdot \mm - \mathbf h \cdot \mathbf
    m + h_\mathrm{ext}^2 \right).
\end{align*}
The set of admissible functions for $\EEE_\mathrm{rel}$ is given by
all $\mm$ satisfying $|\mm| \leq 1$ in $\Omega$. This relaxed
constraint in the above calculation can be justified by looking at
small-scale oscillations of $\mm$.  It is related to the fact that
$\EEE_\mathrm{rel}$ is non-convex, see e.g.  \cite{Dacorogna-Book,
  KohnStrang-1986}.  One expects that the relaxed energy gives the
leading order behavior of the minimal energy, i.e.
\begin{align*}
 \lim_{\eps \to 0} \inf_{|\mm| = 1} \EEE[\mm] \ = \ \inf_{|\mm| \leq 1}
 \EEE_\mathrm{rel}[\mm].
\end{align*}
The saturation field strength $h_s$ is defined as the field strength
at which $\mm = \mathbf e_1 \chi_\Omega$ looses its energetic
optimality in terms of the relaxed energy $\EEE_\mathrm{rel}$ and is
expected to be close to $h_{c_1}$ when $\eps \ll 1$.

\medskip

To understand better the behavior of $\EEE_\mathrm{rel}$, let us first
introduce the notation for the average of a quantity $f = f(x_1, x_2,
x_3)$ over $\TTT$ at fixed $x_1 \in \mathbb R$.  We use the notation
$\overline f(x_1) \ := \ {1 \over \ell^2} \int_\TTT f(x_1, \cdot)$. We
note that the solutions of (\ref{def-h}) have the following basic
properties (the proof is by an elementary integration by parts):
\begin{lemma} \label{l:mh} %
 Let $\mathbf h \in L^2(\mathbb R \times \TTT; \mathbb R^3)$ be a
 solution of (\ref{def-h}). Then
 \begin{gather} \label{eq:53} %
   \overline h_1(x_1) = - \overline m_1(x_1) \qquad \text{ for a.e. }
   x_1 \in \mathbb R,\\
   \label{eq:57}
   \int_\Omega \mathbf h \cdot \mm = - \int_{\mathbb R \times \TTT}
   |\mathbf h|^2, \qquad\qquad \int_{\R \times \TTT} |\hh|^2 \ \leq
   \int_{\Omega} |\mm|^2.
 \end{gather}
\end{lemma}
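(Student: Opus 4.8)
The plan is to prove Lemma~\ref{l:mh} directly from the distributional equations \eqref{def-h}, exploiting the periodicity in $\TTT$ and a Helmholtz-type representation of $\hh$. Since $\nabla \times \hh = 0$ on the simply-connected-in-$x_1$ domain $\mathbb R \times \TTT$, I would write $\hh = \nabla u$ for a scalar potential $u \in \dot H^1(\mathbb R \times \TTT)$, so that \eqref{def-h} becomes the single Poisson equation $-\Delta u = \nabla \cdot \mm$ in the distributional sense, with $\mm \in L^2$ compactly supported in the $x_1$-direction; existence, uniqueness in $L^2$ of $\hh$, and the fact that $u$ can be taken periodic in $x_\perp$ and decaying in $x_1$ are the content of the cited \cite{choksi98}, which I may assume.

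For \eqref{eq:53}, I would test the equation $\nabla\cdot\hh = -\nabla\cdot\mm$ against a function $\phi(x_1)$ depending only on $x_1$ (and, strictly, a cutoff in $x_1$ that is removed at the end, using the decay of $\hh$). Integrating by parts, the transverse derivatives integrate to zero over the torus $\TTT$, leaving $\int_{\mathbb R} \phi'(x_1)\,\overline h_1(x_1)\,\ell^2\,dx_1 = \int_{\mathbb R}\phi'(x_1)\,\overline m_1(x_1)\,\ell^2\,dx_1$ after dividing by $\ell^2$ (the averaging convention $\overline f = \ell^{-2}\int_\TTT f$). Hence $\partial_1(\overline h_1 + \overline m_1) = 0$ distributionally in $x_1$, so $\overline h_1 + \overline m_1$ is a constant; since both $\overline h_1$ and $\overline m_1$ vanish as $x_1 \to \pm\infty$ (the former by $L^2$-decay of $\hh$, the latter since $\mm$ is supported in $(0,1)\times\TTT$), that constant is zero, giving \eqref{eq:53}.

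For \eqref{eq:57}, the first identity comes from testing $-\Delta u = \nabla\cdot\mm$ against $u$ itself: the left side gives $\int |\nabla u|^2 = \int|\hh|^2$, and integrating the right side by parts gives $-\int \nabla u \cdot \mm = -\int \hh\cdot\mm$ (boundary terms vanish by decay), so $\int_{\mathbb R\times\TTT}|\hh|^2 = -\int_{\mathbb R\times\TTT}\hh\cdot\mm = -\int_\Omega \hh\cdot\mm$ since $\mm$ is supported in $\Omega$; this is exactly the first equation of \eqref{eq:57} (after moving the sign). The second, the bound $\int|\hh|^2 \le \int_\Omega|\mm|^2$, then follows from Cauchy--Schwarz applied to $\int_\Omega \hh\cdot\mm$: $\int_{\mathbb R\times\TTT}|\hh|^2 = -\int_\Omega\hh\cdot\mm \le \|\hh\|_{L^2(\Omega)}\|\mm\|_{L^2(\Omega)} \le \|\hh\|_{L^2(\mathbb R\times\TTT)}\|\mm\|_{L^2(\Omega)}$, and dividing by $\|\hh\|_{L^2(\mathbb R\times\TTT)}$ gives $\|\hh\|_{L^2(\mathbb R\times\TTT)} \le \|\mm\|_{L^2(\Omega)}$, i.e. the claimed inequality.

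The only real technical point — and thus the main obstacle — is making the integrations by parts rigorous at the level of distributions: $\mm$ is merely $L^2$ (so $\nabla\cdot\mm$ is only $H^{-1}$), and the torus has no boundary in $x_\perp$ but $\mathbb R$ is unbounded in $x_1$, so one must justify the vanishing of the $x_1 \to \pm\infty$ boundary terms via a cutoff-and-limit argument using the $L^2$-decay of $\hh$ (or, equivalently, approximate $\mm$ by smooth compactly supported fields and pass to the limit, using the continuous dependence of $\hh$ on $\mm$ in $L^2$). Once one grants the well-posedness result of \cite{choksi98} and the potential representation, everything else is the elementary integration by parts promised in the statement.
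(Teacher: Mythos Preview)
Your approach is correct and is exactly the ``elementary integration by parts'' the paper alludes to without spelling out. One small slip: in your argument for \eqref{eq:53}, the intermediate displayed identity has the wrong sign (testing $\nabla\cdot\hh=-\nabla\cdot\mm$ against $\phi(x_1)$ gives $-\int h_1\phi' = \int m_1\phi'$, i.e.\ $\int(\overline h_1+\overline m_1)\phi'=0$), though your stated conclusion $\partial_1(\overline h_1+\overline m_1)=0$ is correct. Otherwise the proof is complete as written.
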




Using \eqref{eq:53} and \eqref{eq:57}, one easily computes that
\begin{align} \label{eq:52} %
  \EEE_\mathrm{rel}[\mm] \ %
  &\lupupref{eq:53}{eq:57}\geq \ \int_\Omega (h_\mathrm{ext} + h_1)^2
  \ %
  \lupref{eq:53}\geq \ \int_\Omega (h_\mathrm{ext} - \overline m_1)^2,
\end{align}
where we have used Jensen's inequality in the second
inequality. Hence,
\begin{align} \label{eq:20} %
  \EEE_\mathrm{rel}[\mm] \geq \inf_{|\mm| \leq 1} \int_\Omega (
  h_\mathrm{ext} - \overline m_1)^2 =
 \begin{cases}
   0, & 0 < h_\mathrm{ext} \leq 1, \\
   \ell^2 (1 -h_\mathrm{ext})^2, & h_\mathrm{ext} > 1.
 \end{cases}
\end{align}
On the other hand, equality in \eqref{eq:20} is achieved by using the
trial function $\mm = \min\{1, h_\mathrm{ext} \} \, \ee_1
\chi_\Omega$.  Therefore, $\mm = \ee_1 \chi_\Omega$ is the minimizer
of \eqref{eq:20} if and only if $h_\mathrm{ext} \geq h_s = 1$, which
is precisely the saturation field.

\medskip

Since we are interested in the bifurcation from the uniform to a
patterned magnetization occurring near saturation, we introduce a
parameter $\lambda$ which measures the deviation from saturation:
\begin{align} \label{eq:24} %
 h_\mathrm{ext} = 1 - \lambda,
\end{align}
where $0 < \lambda \ll 1$ means the applied field is just below the
saturation threshold. One question we want to address is how to
calculate $\lambda_{c_0}$ and $\lambda_{c_1}$ corresponding to the
critical fields $h_{c_0}$ and $h_{c_1}$.

\medskip

{\it Reformulation of the sharp interface energy:} We now derive an
expression for energies $\EEE$ and $E$ in new variables which make our
analysis more convenient. We introduce
\begin{align}
\label{def-uv}
\uu = \mm - (1 - \lambda) \chi_\Omega \mathbf e_1,
&& \mathbf v = \mathbf h + (1 - \lambda) \chi_\Omega \mathbf
e_1. 
\end{align}
Then, using Lemma \ref{l:mh}, \eqref{eq:24}, \eqref{eq:57} and
\eqref{def-uv}, one gets
\begin{align*}
  \int_\Omega \left( h_\mathrm{ext}^2 - \hh \cdot \mm - 2
    {\hh_\mathrm{ext}} \cdot \mm \right)
  &= \ \int_{\R \times \TTT} |\vv|^2 - 2 (1-\lambda) \int_{\Omega}
  (v_1 + u_1) \ \upref{eq:53}= \ \int_{\R \times \TTT} |\vv|^2.
\end{align*}
Therefore, we can rewrite the energy $E$ from \eqref{eq:5} as follows
(with a slight abuse of notation, we view $E$ from now on as a
function of $\uu$ instead of $\mm$)
\begin{align} \label{energy} %
E[\uu] = \eps \int_\Omega |\nabla u_1^\delta | + Q
\int_{\{|\uu_\perp| < \delta\}} |\uu_\perp|^2 + \int_{\mathbb R
  \times \TTT} |\mathbf v|^2 + \delta^2 Q \int_{\{|\uu_\perp| \geq
 \delta\}} |\uu_\perp|^2,
\end{align}
where, as before, $u_1$ and $\uu_\perp$ denote the components of
$\uu$ along and normal to the easy axis, respectively,
$u_1^\delta = m_1^\delta - 1 + \lambda$, and $\mathbf v$ solves
\begin{align} \label{def-v} %
\mathbf v = -\nabla \varphi, \qquad \Delta \varphi = \nabla \cdot
\uu && \mathrm{in} \quad \mathbb R \times \TTT,
\end{align}
 The set of admissible functions for \eqref{energy} is given by
\begin{align*} %
  \AA = \big \{ \uu \in BV({\R \times \TTT}; \R^3) \ : \ |\uu + (1 -
  \lambda) \chi_\Omega \ee_1 |= \chi_\Omega \big \}.
\end{align*}
Similarly, the expression in \eqref{eq:8} can be rewritten as
\begin{align}
  \label{eq:45}
 \EEE[\uu] = \int_\Omega {\eps^2 \over 4 Q} |\nabla \uu|^2 + Q
    \int_\Omega |\uu_\perp|^2 + \int_{\mathbb R \times \TTT} |\mathbf
  v|^2. 
\end{align}
For simplicity of notation, we take the same admissible class $\AA$
for $\EEE$ as well, setting $\EEE[\uu] = +\infty$, whenever
$\uu_{|\Omega} \not\in H^1(\Omega)$.

\section{Scaling of the energy in bulk samples}
\label{sec-bulk}

In this section, we investigate the scaling behavior of the energy of
minimizers in the case of bulk samples corresponding to the limit
$\eps \to 0$. The main part of this section will be concerned with the
sharp interface energy $E$ defined in \eqref{energy}. The connection
to the diffuse interface energy $\mathcal E$ is then shown in Section
\ref{ss-diffuse}.

\medskip

The model has three dimensionless parameters: $\eps$, $\lambda$,
$Q$. In particular, we are interested in the case of macroscopically
large samples near critical fields, i.e. $\eps \ll 1$ and $\lambda \ll
1$. Our result shows that for sufficiently small $\eps$ and $\lambda$
with fixed $Q$ there are exactly two different scaling regimes, each
corresponding to a particular pattern of magnetization attaining the
minimal energy scale. Introducing
\begin{align} \label{def-gamma} %
\gamma = \frac Q{1 + Q},
\end{align}
we have the following result for the sharp interface energy $E$:
\begin{theorem} \label{thm-bulk} %
  Let $\lambda \lesssim \gamma^2 |\ln \lambda|^2$ and $\ell \gtrsim
  \gamma^{-\frac 13} \eps^{\frac 13} \lambda^{-\frac 12} |\ln
  \eps|^{-\frac 13}$. Then for $\eps \ll 1$ and $\lambda \ll 1$, we
  have
 \begin{equation*} %
   \frac 1{\ell^2} \inf_{\uu \in \AA} E[\uu] \ \sim \ %
   \min \left \{ \lambda^2, \ \gamma^{\frac 13} \eps^{\frac 23} \lambda |\ln
    \lambda|^{\frac 13} \right\}.
 \end{equation*}
\end{theorem}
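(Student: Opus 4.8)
The plan is to prove matching upper and lower bounds for the rescaled energy, following the paradigm of Choksi--Kohn for branched domains but carefully tracking the near-saturation parameter $\lambda$ and the anisotropy parameter $\gamma$. The quantity $\min\{\lambda^2, \gamma^{1/3}\eps^{2/3}\lambda|\ln\lambda|^{1/3}\}$ distinguishes two regimes: when $\lambda$ is small relative to $\gamma^{1/3}\eps^{2/3}|\ln\lambda|^{1/3}$, the first term wins and the uniform state (or a small perturbation) is optimal up to constants; otherwise a branched needle pattern does better. So first I would establish the trivial upper bound $E[\mathbf 0] \lesssim \ell^2\lambda^2$: for $\uu = \mathbf 0$ we have $\mm = (1-\lambda)\chi_\Omega\ee_1$, but this violates $|\mm| = \chi_\Omega$; instead one uses a competitor with $|\uu_\perp|$ of size $\sqrt{2\lambda-\lambda^2}$ concentrated where needed, or more simply the admissible competitor whose magnetization is $\ee_1\chi_\Omega$ tilted appropriately — in any case a near-uniform competitor costs $\int|\vv|^2 \sim \ell^2\lambda^2$ from the stray field created by $\overline{m_1} = 1-\lambda \ne 1$ interacting with the plate geometry, plus anisotropy cost $\lesssim Q\ell^2\lambda$; one checks the anisotropy term is lower order under the hypothesis $\lambda \lesssim \gamma^2|\ln\lambda|^2$.

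Next, for the non-trivial upper bound $E[\uu] \lesssim \ell^2\gamma^{1/3}\eps^{2/3}\lambda|\ln\lambda|^{1/3}$, I would construct an explicit refining needle trial function as sketched in Figure \ref{fig-needles}: a periodic array of slender domains where $m_1 = -1$ (opposing the field), embedded in a background $m_1 \approx 1-\lambda$, with the pattern refining self-similarly toward both plate faces $x_1 = 0,1$. The volume fraction of reversed domains is tuned to $\sim\lambda$ so that $\overline{m_1} \approx 1$ and the leading stray-field term $\int_\Omega(h_{\mathrm{ext}} - \overline{m_1})^2$ is killed; the residual stray field from the $x_\perp$-variation of the pattern, the surface energy $\eps\int|\nabla u_1^\delta|$ of the needle walls, and the anisotropy cost inside the walls are then balanced across refinement generations. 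This is the standard branching construction: the $|\ln\lambda|^{1/3}$ factor comes from summing a geometric-type series over $O(|\ln\lambda|)$ refinement scales (the refinement must stop at a scale comparable to the wall width, i.e. when $\eps$-dependent terms saturate), and the $\gamma = Q/(1+Q)$ dependence enters because the effective stray-field strength is renormalized by the soft-material response (as the paper notes, this is the $\mu^*$-type renormalization). I would optimize the needle spacing and the refinement ratio to land on the claimed exponent $\eps^{2/3}$ and prefactor.

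The lower bound is the heart of the matter. Starting from $E[\uu] = \eps\int_\Omega|\nabla u_1^\delta| + Q\int_{\{|\uu_\perp|<\delta\}}|\uu_\perp|^2 + \int_{\R\times\TTT}|\vv|^2 + \delta^2 Q\int_{\{|\uu_\perp|\ge\delta\}}|\uu_\perp|^2$, I would split into the easy case and the branching case. First, always $E[\uu] \geq \int_\Omega(h_{\mathrm{ext}} - \overline{m_1})^2 \geq 0$, which is not directly $\gtrsim \ell^2\lambda^2$ — so for the $\lambda^2$ lower bound one argues that either $\overline{m_1}$ stays near $1-\lambda$ on a large portion of $\Omega$, giving a small stray-field contribution but then the full $3$D stray field $\int|\vv|^2$ is bounded below by a nonlocal interpolation estimate (the fractional-Sobolev / $H^{-1/2}$ bound on the trace of $m_1$ on the plate faces) combined with the anisotropy term, which forces the total to be $\gtrsim \ell^2\lambda^2$; or $\overline{m_1}$ deviates substantially, in which case $\int_\Omega(h_{\mathrm{ext}}-\overline{m_1})^2$ is already large. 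For the branching lower bound $\gtrsim \ell^2\gamma^{1/3}\eps^{2/3}\lambda|\ln\lambda|^{1/3}$, I would adapt the Choksi--Kohn interpolation inequality: the stray-field energy controls a negative-order Sobolev norm of $\partial_1 m_1$ (equivalently of the magnetic charge distribution) on dyadic slabs parallel to the plate faces, while the surface energy controls its $BV$ norm; interpolating and summing over dyadic scales from $1$ down to the cutoff scale $\sim(\eps/\gamma)^{?}$ yields the $|\ln\lambda|^{1/3}$ and the exponent $2/3$, with the $\lambda$ arising because the "amount of reversed phase" forced by the near-saturation condition is $O(\lambda)$. The main obstacle is getting this interpolation to simultaneously produce the correct power of $\lambda$ and the sharp logarithm: one must carefully identify how the constraint $h_{\mathrm{ext}} = 1-\lambda$ limits the admissible oscillation amplitude and localize the argument so that only needles of total volume fraction $\lesssim\lambda$ contribute, while still extracting a lower bound that matches the construction. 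Handling the truncation $u_1^\delta$ and the two anisotropy terms (the $\delta^2 Q$ one being a device to make $E$ a genuine lower bound for $\EEE$) without losing constants, and ensuring the hypotheses $\lambda \lesssim \gamma^2|\ln\lambda|^2$ and $\ell \gtrsim \gamma^{-1/3}\eps^{1/3}\lambda^{-1/2}|\ln\eps|^{-1/3}$ are exactly what is needed to make the two regimes exhaustive and the finite-$\ell$ corrections negligible, is the remaining bookkeeping.
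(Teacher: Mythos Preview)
Your upper-bound strategy is broadly correct, but two details are off. First, the trivial competitor is simpler than you suggest: take $\mm=\ee_1\chi_\Omega$, i.e.\ $\uu=\lambda\ee_1\chi_\Omega$; this is admissible, has no anisotropy or interfacial cost, and $\int|\vv|^2=\lambda^2\ell^2$ exactly. Second, the $|\ln\lambda|^{1/3}$ in the branched construction does \emph{not} come from summing over $O(|\ln\lambda|)$ refinement generations. It comes from the stray-field energy of a single slender needle on each tangential slice: the two-dimensional potential of a disk of radius $r$ in a cell of side $a\sim r\lambda^{-1/2}$ contributes $\sim r^4|\ln(r/a)|/h\sim r^4|\ln\lambda|/h$ per cell, and this logarithm survives the optimization. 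The number of refinement generations enters only through a geometric sum that is $O(1)$.

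The lower bound is where your plan has a genuine gap. The ``Choksi--Kohn interpolation over dyadic slabs'' you describe is the technique from the zero-field case and yields $\eps^{2/3}$ but not the correct $\lambda|\ln\lambda|^{1/3}$ prefactor; you acknowledge this is the main obstacle but do not say how to overcome it. The paper's argument is different in structure and borrows instead from the type-I superconductor analysis. One argues by contradiction (assuming $E\ll\ell^2\min\{\lambda^2,\gamma^{1/3}\eps^{2/3}\lambda|\ln\lambda|^{1/3}\}$) and works on a \emph{single} good tangential slice $\{a\}\times\TTT$. On that slice one shows the reversed-magnetization set $A_-=\{u_1<-1+\lambda\}$ has measure $\sim\lambda\ell^2$ and, via the co-area formula, is contained in a set $S$ with small perimeter. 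A De Giorgi-type covering lemma then produces a regularized set $\overline S$ whose $t$-neighborhoods grow in a controlled way; this is used to build a \emph{logarithmic cutoff} test function $\psi$ supported near $\overline S$, with scales $r\sim\eps^{1/3}\gamma^{-1/3}|\ln\lambda|^{-1/3}$ and $a\sim r\lambda^{-1/2}$. Finally, a transition-energy lemma (controlling $\int_{\{a\}\times\TTT}u_1\psi$ by $E^{1/2}(\gamma^{-1/2}\|\nabla\psi\|_{L^2}+\|\psi\|_{L^2})$ via the stray-field equation) applied to this $\psi$ yields the contradiction. The logarithm in the lower bound thus arises from the logarithmic test function, mirroring its origin in the upper construction; your dyadic-slab interpolation does not see this mechanism and would give only a suboptimal power of $\lambda$.
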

The first regime corresponds to a uniform magnetization along the
applied field, while the second regime is achieved by branched
magnetization patterns. Note that as long as $Q \gtrsim 1$, the
particular value of $Q$ does not affect the scaling of the minimal
energy. This indicates that the restricted model corresponding to $Q =
\infty$, i.e.  when $\mm = \pm \mathbf e_1 \chi_\Omega$, captures the
essential features of the general model in \eqref{energy}. On the
other hand, for $Q \ll 1$ the effect of anisotropy only has the effect
of renormalizing the minimal energy scaling by a factor of $Q^{\frac
  13}$. This will be further discussed in Sec. \ref{sec-reduced} with
the help of a reduced sharp interface model.

\medskip

Combining the results in Theorem \ref{thm-bulk} with \eqref{eq:18} and
the constructions of Sec. \ref{ss-diffuse}, the full micromagnetic
energy $\mathcal E$ satisfies the same scaling
\begin{theorem} \label{thm-diffuse} %
  Let $\lambda \lesssim \gamma^2 |\ln \lambda|^2$ and $\ell \gtrsim
  \gamma^{-\frac 13} \eps^{\frac 13} \lambda^{-\frac 12} |\ln \eps|^{-\frac 13}$. Then for
  $\eps \ll 1$ and $\lambda \ll 1$, we have
  \begin{align*} %
    \frac 1{\ell^2} \inf_{\uu \in \mathcal A} \EEE[\uu] \ \sim \
    \begin{cases}
      \lambda^2 & \text{for } \ \lambda \ \lesssim \ \gamma^{\frac 13}
      \eps^{\frac 23} |\ln
      \eps|^{\frac 13}, \\
      \gamma^{\frac 13} \eps^{\frac 23} \lambda |\ln \lambda|^{\frac
        13} \qquad&\text{for } \ \lambda \ \gtrsim \ \gamma^{\frac 13}
      \eps^{\frac 23} |\ln \eps|^{\frac 13}.
    \end{cases}
 \end{align*}
\end{theorem}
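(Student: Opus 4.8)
\medskip

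The plan is to deduce Theorem \ref{thm-diffuse} from the sharp-interface result of Theorem \ref{thm-bulk}: the lower bound will come essentially for free from the pointwise inequality \eqref{eq:18}, while the upper bound reduces to regularizing the trial functions already built for $E$ into unit-length maps with genuine Bloch walls.

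\medskip

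\emph{Lower bound.} I would start from \eqref{eq:18}, which gives $\EEE[\uu] \ge (1-\delta^2) E[\uu]$ for every admissible $\uu$ and every $\delta \in (0,1)$; taking the infimum over $\AA$, letting $\delta \to 0$, and inserting Theorem \ref{thm-bulk} yields $\ell^{-2}\inf_\AA \EEE \gtrsim \min\{\lambda^2,\ \gamma^{1/3}\eps^{2/3}\lambda|\ln\lambda|^{1/3}\}$. It then remains to observe that under the standing hypotheses this minimum agrees, up to universal constants, with the two-case expression in the statement: the two candidates coincide when $\lambda \sim \gamma^{1/3}\eps^{2/3}|\ln\lambda|^{1/3}$, and taking logarithms (with $\eps \ll 1$ and $\gamma$ fixed) shows $|\ln\lambda| \sim |\ln\eps|$ in that regime, so the crossover sits at $\lambda \sim \gamma^{1/3}\eps^{2/3}|\ln\eps|^{1/3}$ and one term dominates away from it.

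\medskip

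\emph{Upper bound.} For the branch $\lambda^2$ I would use the uniform competitor $\uu = \lambda\,\ee_1\chi_\Omega$, which is also admissible for $\EEE$ (it is constant, hence $H^1$, on $\Omega$); by translation invariance in $\TTT$ and Lemma \ref{l:mh} the associated $\vv$ from \eqref{def-v} equals $-\lambda\,\ee_1\chi_\Omega$, so \eqref{eq:45} gives $\EEE[\uu] = \ell^2\lambda^2$. For the branch $\gamma^{1/3}\eps^{2/3}\lambda|\ln\lambda|^{1/3}$, valid when $\lambda \gtrsim \gamma^{1/3}\eps^{2/3}|\ln\eps|^{1/3}$, I would take the self-similar needle-refinement trial function used in the proof of Theorem \ref{thm-bulk} and replace each sharp interface by a one-dimensional Bloch-wall profile of width $w \sim \eps Q^{-1}$, with $\mm$ rotating in the plane spanned by $\ee_1$ and the in-plane tangent to the wall so that $|\mm| \equiv 1$ is preserved. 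Since that profile realizes equality in \eqref{eq:9}, the exchange-plus-anisotropy energy of the regularized map is controlled (modulo the harmless $\delta^2 Q$-terms) by $\eps$ times the total variation of $m_1$, i.e. by the interfacial part of $E$; and the magnetostatic term $\int_{\R \times \TTT}|\vv|^2$ is essentially unchanged because the Bloch rotation is divergence-free across the wall, so regularization produces neither new surface charge nor, to leading order, new volume charge, and the perturbation is localized in wall layers of total volume $\lesssim w \cdot (\text{interfacial area})$. Taking the smaller of the two competitors, and using the equivalence of the two forms of the bound noted above, finishes the argument. The constructions and the quantitative versions of these estimates are carried out in Section \ref{ss-diffuse}.

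\medskip

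\emph{Main obstacle.} The delicate point --- and the reason the construction occupies its own section --- is the magnetostatic estimate for the regularized branched competitor across all refinement levels simultaneously. The crude bound $\|\hh_{\mm_{\mathrm{mol}}} - \hh_{\mm_{\mathrm{sharp}}}\|_{L^2}^2 \lesssim \|\mm_{\mathrm{mol}} - \mm_{\mathrm{sharp}}\|_{L^2}^2$ from Lemma \ref{l:mh}, together with the cross term against $\hh_{\mm_{\mathrm{sharp}}}$, is too lossy; one must instead quantify that the Bloch rotation carries essentially no charge and that the wall width $w \sim \eps Q^{-1}$ stays much below the minimal feature size of the needle pattern --- this is precisely where the hypotheses $\lambda \lesssim \gamma^2|\ln\lambda|^2$ and $\ell \gtrsim \gamma^{-1/3}\eps^{1/3}\lambda^{-1/2}|\ln\eps|^{-1/3}$ (already imposed in Theorem \ref{thm-bulk}), together with $\lambda \gtrsim \gamma^{1/3}\eps^{2/3}|\ln\eps|^{1/3}$, enter.
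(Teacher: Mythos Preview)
Your proposal is correct and follows essentially the same approach as the paper: the lower bound via \eqref{eq:18} and Theorem \ref{thm-bulk}, and the upper bound by regularizing the sharp-interface needle construction with Bloch-type transition layers of width $\sim \eps Q^{-1}$ whose exchange-plus-anisotropy cost reproduces the interfacial term while leaving the stray-field charges essentially unchanged. Section \ref{ss-diffuse} implements this with a linear transition profile and an explicit radial displacement $d(x_1)$ chosen to preserve slice-wise charge neutrality (the analog of \eqref{eq:29}), which is the precise mechanism behind your heuristic that the Bloch rotation introduces no new charge; the check $\eps Q^{-1} \ll r_M$ that the wall fits inside the smallest needle is exactly where the hypothesis $\lambda \lesssim \gamma^2|\ln\lambda|^2$ is used, as you anticipate.
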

This theorem implies that for small enough values of $\lambda(\eps)$
the minimal energy scaling is achieved by uniform magnetization
pattern (the monodomain state: $\mm = \ee_1 \chi_\Omega$), while for
sufficiently large values of $\lambda(\eps)$ the optimal energy
scaling is achieved by a branched domain pattern, as $\eps \to 0$. The
transition occurs at $\lambda_c \sim \gamma^{\frac 13} \eps^{\frac 23}
|\ln \eps|^{\frac 13}$.

\medskip 

The analysis techniques we employ in this section go back to the work
of Choksi and Kohn in \cite{choksi98,choksi99}, who analyzed the
energy of ferromagnetic plates in the absence of a magnetic field. In
our analysis we identify the optimal dependence of the minimal energy
on the parameter $\lambda$ which is not addressed in \cite{choksi98,
  choksi99}. In our analysis, we also apply tools from related works
in the framework of type-I superconductors \cite{choksi08}.  There the
authors derive the scaling of the energy for the type-I superconductor
near critical field. We note that the super conductor model is more
rigid, since there the two different phases are described by the
characteristic function $\chi$ which only takes the discrete values 0
and 1 and a divergence free magnetic field $B$, whereas in our model
the magnetization $\mm$ is allowed to take all values on the unit
sphere.

\subsection{Preliminaries}  %

In this section we collect some useful results before addressing the
proof of the upper and lower bound in the next two sections.

\medskip

Control on \eqref{energy} yields information about $u_1$ and $v_1$ on
each slice. As expected, the stray field favors zero average of $u_1$
on each tangential slice:
\begin{lemma} \label{lem-slice} %
  Let $\uu \in \AA$ and let $\vv \in L^2(\R \times \TTT; \R^3)$
  satisfy \eqref{def-v}. Then for every $c_1 > 0$ and $c_2 > 0$ there
  exists a constant $c > 0$, such that if $E[\uu]/\ell^2 \leq c
  \lambda^2$ then we have
  \begin{equation} \label{eq-l2} %
    \int_0^1 \ |\overline u_1|^2 dx_1 \ \leq \ c_1^2 \lambda^2.
\end{equation}
Furthermore there exists $I \subseteq (0,1)$ with $|I| > 1 - c_2$,
such that for all $a \in I$
\begin{equation} \label{eq-l3} %
|\overline u_1(a)| \ \leq \ c_1 \lambda.
\end{equation}
\end{lemma}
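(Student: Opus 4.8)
The plan is to extract from the energy bound a pointwise-in-$x_1$ control on $\overline u_1$ by comparing two different lower bounds for the stray-field term $\int_{\R\times\TTT}|\vv|^2$, one of which detects the oscillation of $\overline m_1 = \overline u_1 + (1-\lambda)$ along the easy axis. The starting observation is Lemma~\ref{l:mh}: since $\overline h_1 = -\overline m_1$ and, in the rescaled variables, $\overline v_1 = \overline h_1 + (1-\lambda) = -\overline u_1$ on $(0,1)$ while $\overline v_1 = -\overline m_1$ outside $\Omega$, the function $\overline v_1$ is essentially $-\overline u_1$ inside the plate and decays outside. Dropping the nonnegative anisotropy and interfacial terms from \eqref{energy}, we get $E[\uu]/\ell^2 \ge \dashint_{\TTT}\int_\R |\vv|^2\,dx_1 \ge \int_\R |\overline v_1|^2\,dx_1 \ge \int_0^1 |\overline u_1|^2\,dx_1$ by Jensen. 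If $E[\uu]/\ell^2 \le c\lambda^2$, this already gives \eqref{eq-l2} with $c_1^2 = c$; choosing $c$ small enough makes $c_1$ as small as desired, which handles the first assertion.

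For the pointwise bound \eqref{eq-l3}, the idea is that \eqref{eq-l2} alone does not forbid large values of $\overline u_1$ on a small set, so we need a second ingredient controlling the \emph{variation} of $\overline u_1$. Here I would use the interfacial term: since $|u_1^\delta(a)| \le |\overline{u_1^\delta}(a)| + \text{(oscillation within the slice)}$ and $\eps\int_\Omega|\nabla u_1^\delta| \ge \eps \int_0^1 \big|\tfrac{d}{dx_1}\overline{u_1^\delta}\big|\,dx_1$, one controls the total variation of $\overline{u_1^\delta}$ along $x_1$ by $E[\uu]/(\eps\ell^2) \lesssim \lambda^2/\eps$. On its own this is too weak, so instead the cleaner route is purely measure-theoretic: from \eqref{eq-l2}, $\int_0^1|\overline u_1|^2 \le c_1^2\lambda^2$, so by Chebyshev the set $\{x_1 \in (0,1): |\overline u_1(x_1)| > c_1\lambda/\sqrt{c_2'}\}$ has measure at most $c_2'$; renaming constants (replace $c_1$ by a smaller one and absorb $1/\sqrt{c_2'}$), we obtain a set $I$ of measure $>1-c_2$ on which $|\overline u_1(a)| \le c_1\lambda$. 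This is exactly \eqref{eq-l3}, with the constant $c$ in the hypothesis chosen in terms of the prescribed $c_1, c_2$ (namely $c \le c_1^2 c_2$).

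The main subtlety — and where one must be slightly careful rather than invoke a routine estimate — is the identification $\overline v_1 = -\overline u_1$ on $(0,1)$ and the decay of $\overline v_1$ outside $\Omega$, so that $\int_\R |\overline v_1|^2 \ge \int_0^1 |\overline u_1|^2$ genuinely holds; this follows from \eqref{def-v} by averaging the equation $\Delta\varphi = \nabla\cdot\uu$ over $\TTT$, which kills the tangential divergence and leaves $\da_1^2\overline\varphi = \da_1 \overline u_1$, so $\da_1\overline\varphi = \overline u_1 + \text{const}$, and the $L^2$ decay of $\vv = -\nabla\varphi$ forces the constant appropriately. I expect the rest of the argument to be elementary, with the only bookkeeping being the dependence of the threshold constant $c$ on $c_1$ and $c_2$.
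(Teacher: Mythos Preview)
Your proposal is correct and takes essentially the same approach as the paper: the paper uses exactly the identification $\overline v_1 = -\overline u_1$ (via Lemma~\ref{l:mh}) together with Jensen's inequality to get $\ell^2\int_0^1|\overline u_1|^2 \le \int_\Omega |\vv|^2 \le E[\uu]$, and then obtains \eqref{eq-l3} by the same Chebyshev/measure-theoretic argument you describe (the paper phrases it as ``Fubini''). Your detour through the interfacial term to control the variation of $\overline u_1$ is unnecessary, as you yourself note; and your re-derivation of $\overline v_1 = -\overline u_1$ from the averaged equation is precisely the content of \eqref{eq:53} applied to the shifted variables.
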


\begin{proof}
  We first note that in terms of $\uu$ and $\vv$, in view of
  \eqref{eq:53}, we have $\bar v_1 = - \bar u_1$ for a.e. $a \in \R$.
  By Jensen's inequality, it then follows that
  \begin{equation*}
    \ell^2 \int_0^1 |\overline u_1|^2 dx_1  \ = \ \int_{\Omega} 
    |\overline u_1|^2  \ = \ \int_{\Omega} |\overline \vv|^2  
    \ \leq \  \int_{\Omega} |\vv|^2 \ \leq \ E[\uu],
  \end{equation*}
  and \eqref{eq-l2} follows.  Inequality \eqref{eq-l3} follows from
  \eqref{eq-l2} by an application of Fubini's Theorem.
\end{proof}

The main ingredient for the proof of the lower bound is an estimate
that characterizes the transition energy, i.e. the cost for the
magnetization to vary between a tangential slice $\{a \} \times \TTT$
and its value zero outside of the sample. The idea to estimate such
transition energies was introduced in \cite{KohnMueller-1992} and has
been subsequently applied also in e.g. \cite{choksi04,choksi08}.

\begin{lemma}[Transition energy] \label{lem-transition} %
  For every $\uu \in \AA$ there exists $I \subset (0,1)$ with $|I| >
  \frac 12$, such that for all $a \in I$ and for all $\psi \in H^1(\TTT)$,
  we have
\begin{align*}
  \left| \int_{\{a \} \times \TTT} u_1 \psi \right| \ \lesssim \
  E^{\frac 12} [\uu] \left( \gamma^{-\frac 12} \nltL{\nabla
      \psi}{{\TTT}} + \nltL{\psi}{{\TTT}} \right).
\end{align*}
\end{lemma}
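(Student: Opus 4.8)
The plan is to control the "transition cost" by comparing the magnetization trace on a slice $\{a\}\times\TTT$ with its value $0$ outside the sample, paid for by the stray-field term $\int_{\R\times\TTT}|\vv|^2$ together with the anisotropy term $Q\int_\Omega|\uu_\perp|^2$ (of which $E[\uu]$ controls $\gamma$-weighted pieces). The starting point is the identity $\vv=-\nabla\varphi$ with $\Delta\varphi=\nabla\cdot\uu$ in $\R\times\TTT$. For a test function $\psi\in H^1(\TTT)$, I want to bound $\int_{\{a\}\times\TTT}u_1\psi$. The natural device is to integrate $\partial_1(u_1\psi)$ over the half-infinite cylinder $(a,\infty)\times\TTT$ (using $\psi=\psi(x_\perp)$ independent of $x_1$), so that
\begin{align*}
  \int_{\{a\}\times\TTT} u_1\psi \ = \ -\int_{(a,\infty)\times\TTT} \partial_1 u_1 \,\psi ,
\end{align*}
since $u_1\to 0$ as $x_1\to\infty$ (indeed $\uu=0$ outside $\Omega$). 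Now I replace $\partial_1 u_1$ using the equation: $\partial_1 u_1 = \nabla\cdot\uu - \nabla_\perp\cdot\uu_\perp = -\Delta\varphi - \nabla_\perp\cdot\uu_\perp$. Substituting and integrating by parts in $x_\perp$ on the $\nabla_\perp\cdot\uu_\perp$ term (no boundary terms on the torus) gives
\begin{align*}
  \int_{\{a\}\times\TTT} u_1\psi \ = \ \int_{(a,\infty)\times\TTT}\bigl(\Delta\varphi\,\psi - \uu_\perp\cdot\nabla_\perp\psi\bigr).
\end{align*}
The second term is immediately estimated by Cauchy–Schwarz: $\lesssim \|\uu_\perp\|_{L^2((a,\infty)\times\TTT)}\|\nabla_\perp\psi\|_{L^2(\TTT)}$, and since $\uu_\perp$ is supported in $\Omega$ (for $x_1>a$), $\|\uu_\perp\|_{L^2}^2 \lesssim \gamma^{-1}\cdot \gamma\int_\Omega|\uu_\perp|^2 \lesssim \gamma^{-1}E[\uu]$, producing the $\gamma^{-1/2}E^{1/2}[\uu]\|\nabla_\perp\psi\|_{L^2(\TTT)}$ contribution.

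The remaining term $\int_{(a,\infty)\times\TTT}\Delta\varphi\,\psi$ is the delicate one. Here I integrate by parts in all variables over the half-cylinder: $\int\Delta\varphi\,\psi = -\int\nabla\varphi\cdot\nabla\psi + (\text{boundary term on }\{a\}\times\TTT)$. With $\psi$ depending only on $x_\perp$ we get $-\int_{(a,\infty)\times\TTT}\nabla_\perp\varphi\cdot\nabla_\perp\psi$ plus a boundary term $-\int_{\{a\}\times\TTT}\partial_1\varphi\,\psi = \int_{\{a\}\times\TTT}v_1\psi$. The bulk piece is bounded by $\|\nabla_\perp\varphi\|_{L^2}\|\nabla_\perp\psi\|_{L^2} \le \|\vv\|_{L^2(\R\times\TTT)}\|\nabla_\perp\psi\|_{L^2(\TTT)} \le E^{1/2}[\uu]\|\nabla_\perp\psi\|_{L^2(\TTT)}$, which is of the desired form. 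The boundary term $\int_{\{a\}\times\TTT}v_1\psi$ is a trace of $v_1=-\partial_1\varphi$ on a slice; to control it I use that $v_1\in L^2$ with $\partial_1 v_1 = -\partial_1^2\varphi$ also controllable via the equation, so a slice-averaging / Fubini argument produces a good set $I$ of values $a$ (with $|I|>\tfrac12$) on which $\|v_1(a,\cdot)\|_{L^2(\TTT)} + (\text{derivative control}) \lesssim E^{1/2}[\uu]$, giving $|\int_{\{a\}\times\TTT}v_1\psi|\lesssim E^{1/2}[\uu]\|\psi\|_{L^2(\TTT)}$. Collecting the three contributions yields the claimed bound.

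The main obstacle is precisely the boundary/trace term $\int_{\{a\}\times\TTT}v_1\psi$: one must show that on a slice from a set of positive measure of $a$'s, $v_1(a,\cdot)$ has $L^2(\TTT)$-norm bounded by $E^{1/2}[\uu]$ (this needs $\int_\R\|v_1(x_1,\cdot)\|_{L^2(\TTT)}^2\,dx_1 \le \|\vv\|_{L^2}^2 \le E[\uu]$, hence Fubini gives such a set, but one must be careful that the \emph{same} set $I$ works for the $\psi$-trace estimate and that the constant in $|I|>\tfrac12$ is uniform). A cleaner alternative, which I would actually pursue, is to avoid producing the $v_1$ boundary term by instead splitting off the $x_1$-average: write $u_1 = \overline u_1 + \widetilde u_1$ on the slice, handle $\overline u_1$ via Lemma \ref{lem-slice} (which gives pointwise control of $\overline u_1(a)$ on a large set of $a$), and for the mean-zero part use a Poincaré inequality in $x_\perp$ together with the $H^1(\TTT)$-regularity of $\psi$ to absorb it into the $\gamma^{-1/2}\|\nabla_\perp\psi\|_{L^2}$ and $\|\psi\|_{L^2}$ terms. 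Either way, the quantitative balancing of the exchange-free anisotropy term against $\gamma$ and the selection of the good slice set $I$ are the points requiring care; the rest is integration by parts and Cauchy–Schwarz.
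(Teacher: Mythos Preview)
Your main line of argument is correct and is essentially the paper's own proof: integrate $\partial_1(u_1\psi)$ across a cylinder, replace $\partial_1 u_1$ via the divergence identity $\nabla\cdot\uu+\nabla\cdot\vv=0$, integrate by parts in $x_\perp$, and control the three resulting pieces by Cauchy--Schwarz together with $\int|\uu_\perp|^2\lesssim\gamma^{-1}E[\uu]$, $\int|\vv|^2\le E[\uu]$, and a Fubini/Chebyshev selection of slices for $\|v_1(a,\cdot)\|_{L^2(\TTT)}^2\lesssim E[\uu]$.

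Two small points where the paper differs and is slightly cleaner. First, instead of the half-line $(a,\infty)$ the paper integrates over a finite interval $(b,a)$ with $b\in(-2,-1)$, picking $b$ (as well as $a$) by Fubini so that $\|v_1(b,\cdot)\|_{L^2(\TTT)}^2\lesssim E[\uu]$; this avoids having to justify that $v_1(x_1,\cdot)\to 0$ as $x_1\to\infty$. Second, the paper carries out the computation first for smooth compactly supported $\uu$ and then passes to general $\uu\in\AA$ by approximation in $BV$ (with the associated convergence of $\vv$ in $L^2$); you should do the same to make the slice traces and integrations by parts rigorous. Your worry that the set $I$ might depend on $\psi$ is unfounded: $I$ is chosen solely from the slicewise $L^2$ norm of $v_1$ and works for every $\psi$ simultaneously.

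Your ``cleaner alternative'' at the end (splitting off $\overline u_1$ and handling $\widetilde u_1$ by Poincar\'e) is not actually a shortcut: the mean-zero part $\widetilde u_1$ has no a priori $L^2$ control on slices, so you would still be forced back to the $x_1$-integration argument to estimate $\int\widetilde u_1\widetilde\psi$. Drop it and keep your first approach.
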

\begin{proof}
  Let us first assume that $\uu \in \cciL{\R \times \TTT}$, with $\uu
  = 0$ outside of $[-1,2] \times \TTT$. Let $\vv$ be defined by
  \eqref{def-h}. Noting that $\psi$ does not depend on $x_1$ and using
  integration by parts, for any $a \in (0,1)$ and $b \in (-2,-1)$, we
  then get
  \begin{align}
    \hspace{2ex}& \hspace{-2ex} %
    \int_{\TTT} \ u_1(a, \cdot) \psi \ = \ \int_b^a \int_\TTT
    \partial_1 \big( u_1(x_1, \cdot) \, \psi \big) \, dx_1 \notag \\
    &\upref{def-h}= \ - \int_b^a \int_\TTT \nabla_\perp \cdot
    \uu_\perp(x_1, \cdot) \ \psi \, dx_1 - \int_b^a \int_\TTT
    \nabla_\perp \cdot \vv_\perp(x_1, \cdot) \ \psi \, dx_1 - \int_b^a
    \int_\TTT \partial_1 v_1(x_1, \cdot) \
    \psi \, dx_1 \notag  \\
    &= \ \int_b^a \int_\TTT \uu_\perp(x_1, \cdot) \cdot \nabla_\perp
    \psi \, dx_1 + \int_b^a \int_\TTT \vv_\perp(x_1, \cdot) \cdot
    \nabla_\perp \psi \, dx_1 + \int_{\TTT} v_1(a, \cdot) \psi \, dx_1
    - \int_{\TTT} v_1(b, \cdot) \psi. \label{trans-1}
  \end{align}
  By Fubini's theorem, there exists $b \in (-2,-1)$ and $I \subseteq
  (0,1)$ with $|I| > \frac 12$ such that for all $a \in I$,
  \begin{align} \label{trans-2} %
    \int_{\{a \} \times \TTT} |v_1|^2 + \int_{\{b \} \times \TTT}
    |v_1|^2 \ \lesssim \ \int_{\R \times \TTT} |v_1|^2 .
  \end{align}
  The statement then follows for all $a \in I$ from \eqref{trans-1},
  \eqref{trans-2} and by application of Cauchy-Schwarz inequality and
  \eqref{energy}.

\medskip

Now consider a general $\uu \in \AA$. In this case, $\uu$ can be
approximated by a sequence of functions $\uu^j \in \cciL{\R \times
  \TTT}$ such that $\uu^j = 0$ outside of $[-1,2] \times \TTT$ and
such that $\uu^j \to \uu$ in $L^2(\R \times \TTT; \R^3)$ and $\int_{\R
  \times \TTT} |\nabla \uu^j| \to \int_{\R \times \TTT} |\nabla \uu|$,
see \cite{EvansGariepy-Book}.  By \eqref{eq:57}, we also have $\vv^j
\to \vv$ in $L^2(\R \times \TTT; \R^3)$, where $\vv^j$ denotes the
stray field of $\uu^j$. Taking a subsequence, if necessary, we also
have convergence $\uu^j \to \uu$, $\vv^j \to \vv$ in $L^2(\{a \}
\times \TTT; \R^3)$ for a. e. $a \in \R$. Using this approximation,
the lemma follows.
\end{proof}

We will also use the following technical lemma of De Giorgi (see,
e.g., \cite[Lemma 3.1]{choksi08}):
\begin{lemma} \label{lem-iso} Let $S \subset \TTT$ be a set of finite
  perimeter, and let $r > 0$ be such that $r |\partial S| \leq \frac
  14 |S|$.  Then there exists an open set $\overline S \subset \TTT$
  with the properties
  \begin{enumerate}
  \item[(i)] There is a considerable overlap of $\bar S$ with $S$, in
    the sense of $|S \cap \overline S| \geq \frac 12 |S|$.
  \item[(ii)] For all $t > 0$, the set $\overline S^t \ := \ \big \{ p
    \in \TTT : \dist(p, \overline S) < t \big \}$ satisfies
    $|\overline S^t| \ \lesssim \ |S| (1+ (t/r)^2)$.
  \end{enumerate}
\end{lemma}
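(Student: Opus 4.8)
The plan is to replace $S$ by an open set $\overline S$ made of the points where $S$ has high density on scale $r$, and then bound dilations of $\overline S$ by a Vitali-type covering. Let $\theta(p) := |B_r(p)\cap S|\,/\,|B_r(p)|$ be the volume fraction of $S$ in the ball $B_r(p)$. One may assume $r<\ell/2$ (otherwise $\TTT$-balls are not honest disks, but in that regime the isoperimetric inequality on $\TTT$ together with the hypothesis $r|\partial S|\le\tfrac14|S|$ makes the assertion immediate), so that $|B_r(p)|=\pi r^2$ and $\theta = \chi_S * (\pi r^2)^{-1}\chi_{B_r}$ is a convolution of an $L^\infty$ function with an $L^1$ function, hence continuous. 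Thus $\overline S := \{\,p\in\TTT : \theta(p) > \tfrac12\,\}$ is open; this will be the candidate set (a marginally smaller threshold can be used to make the inequalities below strict).

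For property (i), the key estimate is $\int_S(1-\theta)\le r\,|\partial S|$. Indeed, writing the convolution as an average over the offset, $\int_S(1-\theta)(p)\,dp = (\pi r^2)^{-1}\int_{B_r(0)}|S\setminus(S+z)|\,dz$, and for each $z$ with $|z|<r$ the standard BV translation bound gives $|S\setminus(S+z)|\le |S\triangle(S+z)|\le|z|\,|\partial S|\le r\,|\partial S|$. Combining with the hypothesis and Chebyshev's inequality, $|\{p\in S:\theta(p)\le\tfrac12\}| = |\{p\in S: 1-\theta(p)\ge\tfrac12\}| \le 2\int_S(1-\theta)\le 2r|\partial S|\le\tfrac12|S|$, so $|S\cap\overline S|\ge\tfrac12|S|$, which is (i). (The numerology has room to spare: computing $\int_{B_r}|z|\,dz$ actually yields $\int_S(1-\theta)\le\tfrac23 r|\partial S|$, so one gets strict inequalities and a nonempty $\overline S$ whenever $|S|>0$.)

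For property (ii), I would cover $\overline S$ efficiently. Every $p\in\overline S$ satisfies $|B_r(p)\cap S|>\tfrac12\pi r^2$. Choose a maximal $r$-separated set $\{p_1,\dots,p_N\}\subset\overline S$; by maximality $\overline S\subseteq\bigcup_i B_r(p_i)$, and the balls $\{B_r(p_i)\}$ have a uniformly bounded overlap (an $r$-separated family contained in $B_{2r}(p_i)$ has at most a universal number of members), so $N\cdot\tfrac12\pi r^2 < \sum_i|B_r(p_i)\cap S|\lesssim|S|$, i.e. $N\lesssim|S|/r^2$. Then $\overline S^t\subseteq\bigcup_i B_{r+t}(p_i)$ gives $|\overline S^t|\le N\pi(r+t)^2\lesssim\tfrac{|S|}{r^2}(r+t)^2\lesssim|S|\bigl(1+(t/r)^2\bigr)$, which is (ii).

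The only steps requiring more than bookkeeping are the BV translation inequality feeding into (i) and the bounded-overlap property of the $r$-net feeding into (ii); both are classical, so I anticipate no genuine obstacle — the one mild nuisance is keeping the torus geometry honest, handled by the reduction to $r<\ell/2$ above. (Alternatively, one can simply invoke \cite[Lemma 3.1]{choksi08}, as the statement does.)
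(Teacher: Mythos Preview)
Your proof is correct; note, however, that the paper does not actually prove this lemma --- it simply cites \cite[Lemma~3.1]{choksi08} and attributes the result to De~Giorgi. Your density-threshold construction $\overline S=\{\theta>\tfrac12\}$ with the BV translation estimate for (i) and the $r$-net/bounded-overlap argument for (ii) is precisely the standard approach behind that cited result, so you have supplied the details the paper omits. The torus reduction to $r<\ell/2$ is fine: when $r\ge\ell/2$, the hypothesis $r|\partial S|\le\tfrac14|S|$ together with the torus isoperimetric inequality forces $|S|>\tfrac12\ell^2$, and then $\overline S=\TTT$ satisfies both (i) and (ii) trivially.
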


\subsection{Ansatz-free lower  bound} 

In this section, we present the proof for the lower bound. We need to
show:
\begin{proposition}[Lower bound] \label{prp-1} %
 For $\eps \ll 1$ and $\lambda \ll \gamma$, we have
 \begin{equation*} %
   {1 \over \ell^2} \inf_{\uu \in \AA} E[\uu] \ \gtrsim \ \min \left
     \{ \lambda^2, \ \gamma^{\frac 13} \eps^{\frac 23} \lambda |\ln
     \lambda|^{\frac 13} \right\}.
  \end{equation*}
\end{proposition}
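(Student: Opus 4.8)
The plan is to establish the lower bound by distinguishing two cases according to whether the energy is small (comparable to $\lambda^2$) or not, and in the non-trivial case, to run a Choksi--Kohn style interpolation argument between the interfacial term, the anisotropy penalty, and the stray-field term. First I would assume $E[\uu]/\ell^2 \le c\lambda^2$ for a suitable small constant $c$ (otherwise we are already done with the $\lambda^2$ branch), so that Lemma~\ref{lem-slice} applies: there is a slice $\{a\}\times\TTT$ with $|\overline u_1(a)|$ small and, more importantly, the transition-energy estimate of Lemma~\ref{lem-transition} is available on a set $I$ of slices of measure $>\tfrac12$. Fix such a slice $a$. The magnetization on this slice essentially takes values near $-\lambda$ (the background) and near $-2+\lambda$ (reversed domains), so the "bad set" $S = \{x_\perp \in \TTT : u_1(a,x_\perp) \le -1\}$ (reversed domains) is the natural object; its measure is controlled by the constraint $\overline u_1(a)\approx 0$, giving $|S|\gtrsim \lambda \ell^2$ (a positive fraction of the slice must be "reversed" to bring the average to roughly zero against a background near $-\lambda$). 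Wait -- more precisely, since $u_1 \in [-2+\lambda, \lambda]$ pointwise and $\overline u_1(a)$ is near zero, a Chebyshev/averaging argument forces $|S| \gtrsim \lambda\ell^2$.

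Next I would extract the interfacial cost. The total variation term $\eps\int_\Omega|\nabla u_1^\delta|$ combined with the slicing (coarea / Fubini in $x_1$) gives, on a further good set of slices, a perimeter bound: $\eps |\partial S|$ (perimeter of the reversed region in the slice, integrated over a unit range of $x_1$) is bounded by $E[\uu]$. This is the standard "interface = perimeter" extraction. Then Lemma~\ref{lem-transition} with a well-chosen test function $\psi$ adapted to $S$ -- using Lemma~\ref{lem-iso} (De Giorgi) to replace $S$ by a regularized set $\overline S$ whose dilations $\overline S^t$ have controlled measure -- converts the information "$\int_{\{a\}\times\TTT} u_1\psi$ is large" (because $u_1\approx -2$ on a set of measure $\gtrsim \lambda\ell^2$) into a lower bound of the form
\begin{align*}
\lambda\ell^2 \ \lesssim \ \left|\int_{\{a\}\times\TTT} u_1\psi\right| \ \lesssim \ E^{\frac12}[\uu]\left(\gamma^{-\frac12}\|\nabla_\perp\psi\|_{L^2(\TTT)} + \|\psi\|_{L^2(\TTT)}\right),
\end{align*}
where $\psi$ is (a smoothed version of) the indicator of $\overline S$, so $\|\psi\|_{L^2}^2 \approx |S|$ and $\|\nabla_\perp\psi\|_{L^2}^2 \approx |S|/t^2 \cdot (\text{something})$ -- here the dilation parameter $t$ is the free length scale. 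Choosing $t$ (equivalently, the regularization scale $r$ in Lemma~\ref{lem-iso}, which must satisfy $r|\partial S|\le \tfrac14|S|$, i.e. $r \lesssim |S|/|\partial S|$) to optimize the two competing terms yields an inequality relating $E[\uu]$, $\lambda$, $\ell^2$, $\gamma$, and $|\partial S|$.

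The final step is to combine the perimeter bound $\eps|\partial S|\lesssim E[\uu]/\ell^2 \cdot \ell^2$ (so $|\partial S|\lesssim E[\uu]/\eps$, roughly) with the transition inequality and eliminate $|\partial S|$ by optimization -- this is where the exponents $\gamma^{1/3}\eps^{2/3}\lambda|\ln\lambda|^{1/3}$ crystallize, with the logarithm entering precisely because the constraint $r|\partial S|\le\tfrac14|S|$ and the relation between $t$ and $r$ force one to take a near-critical (logarithmically many scales) choice rather than a single scale, exactly as in the branched-domain lower bounds of \cite{choksi98,choksi08}. I expect the main obstacle to be the bookkeeping in this optimization: one has several length scales ($t$, $r$, the period $\ell$, the slice thickness over which perimeter is averaged) and one must verify that the hypotheses $\lambda\lesssim\gamma^2|\ln\lambda|^2$ and $\ell\gtrsim\gamma^{-1/3}\eps^{1/3}\lambda^{-1/2}|\ln\eps|^{-1/3}$ are exactly what is needed to make the chosen scales admissible (in particular, that $\overline S^t$ stays inside $\TTT$, that $r|\partial S|\le\tfrac14|S|$ holds, and that the logarithmic factor is $|\ln\lambda|$ rather than $|\ln\eps|$). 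Getting the logarithm on the correct quantity, and checking it survives the case distinction cleanly, is the delicate point; the rest is a careful but essentially standard interpolation.
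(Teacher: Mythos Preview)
Your overall strategy matches the paper's proof: argue by contradiction, select a good slice via Lemmas~\ref{lem-slice} and~\ref{lem-transition}, apply the De Giorgi regularization (Lemma~\ref{lem-iso}) to the reversed set, and test against a suitable $\psi$. Three points in your sketch would fail as written and are where the paper is sharper.

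First, the test function is not a single-scale mollification. The paper takes a \emph{logarithmic cutoff} $\psi(x_\perp)=\phi(\dist(x_\perp,\overline S))$ with $\phi(t)=\ln(a/t)/\ln(a/r)$ for $r\le t\le a$, where the two scales $r=\eps^{1/3}\gamma^{-1/3}|\ln\lambda|^{-1/3}$ and $a=r\lambda^{-1/2}$ are fixed \emph{a priori} from the expected upper-bound geometry. This yields $\|\psi\|_{L^1(\TTT)}\lesssim\ell^2/|\ln\lambda|$ and $\|\nabla\psi\|_{L^2(\TTT)}^2\lesssim\ell^2/(a^2|\ln\lambda|)$; your $\|\psi\|_{L^2}^2\approx|S|$ is wrong for this $\psi$ and a single-scale smoothing will not produce the $|\ln\lambda|^{1/3}$ factor. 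There is no optimization over a free scale in the paper --- the contradiction is obtained directly once $r,a$ are plugged in.

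Second, pure averaging does \emph{not} give $|S|\gtrsim\lambda\ell^2$: the intermediate region $A_0=\{-1+\lambda\le u_1\le\lambda/2\}$ could in principle carry the slice average with $|S|=0$. The paper uses the anisotropy term via $|\uu_\perp|^2\ge\lambda-u_1$ on $A_0$, so that $\int_{A_0}(\lambda-u_1)\le\int_A|\uu_\perp|^2\lesssim E/Q\ll\lambda\ell^2$ (here the hypothesis $\lambda\ll\gamma\le Q$ is essential), which then forces $|A_-|\gtrsim\lambda\ell^2$. Finally, the level defining $S$ is chosen by coarea in the \emph{level variable} (some $c\in(-\tfrac12,-\tfrac14)$), not fixed at $-1$; this is what gives $|\partial S|\lesssim\int_A|\nabla_\perp u_1|$ and hence the admissibility $r|\partial S|\ll|S|$ for Lemma~\ref{lem-iso}.
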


\begin{proof}
  Following the ideas in \cite{choksi08}, we argue as follows. Recall
  that in view of Lemma \ref{lem-transition}, the energy is bounded
  below by a Sobolev-type norm of negative order on $\uu$, evaluated
  on a generic tangential slice $\{ a \} \times T$. In this proof, we
  combine this with control that we have on the surface energy and
  anisotropy energy on a generic slice. The proof is divided into five
  steps.

\medskip

{\it Step 1: Identification of tangential slice.} We will argue by
contradiction.  Hence, we may assume that the energy does {\it not}
satisfy the lower bound, i.e., there exists $\uu \in \AA$, such that
\begin{align} \label{E-small} %
 E[\uu] \ \ll \ \ell^2 \min \left \{ \lambda^2, \
    \gamma^{\frac 13} \eps^{\frac 23} \lambda |\ln \lambda|^{\frac 13} \right\},
  \end{align}
  for some $\eps \ll 1$ and $\lambda \ll \gamma$.  We choose $a \in
  (0,1)$ such that the assertions of Lemma \ref{lem-slice} and Lemma
  \ref{lem-transition} hold. By Lemma \ref{lem-slice} and by Fubini's
  Theorem, we may then assume that $a$ is furthermore chosen such that
  \begin{gather} %
    |\overline u_1(a)| \ \ll \ \lambda \label{av-small} \\
   \int_{\{a \} \times \TTT} |\nabla u_1| \ \lesssim \
    \frac{E}\eps \ \ \upref{E-small}\ll \ \ell^2 \min \left \{
      \frac{\lambda^2}\eps, \ \frac{\gamma^{\frac 13} \lambda |\ln
        \lambda|^{\frac 13}}{\eps^{\frac 13}}
    \right\},  \label{surface-small} \\
   \int_{\{a \} \times \TTT} |\uu_\perp|^2 \ \lesssim \
    \frac{E}{Q} \ \ \upref{E-small}\ll \ \ell^2 \min \left \{ \frac
      {\lambda^2}Q, \ \frac {\gamma^{\frac 13} \eps^{\frac 23} \lambda |\ln
        \lambda|^{\frac 13}}Q \right\}. \label{anisotropy-small}
  \end{gather} {\it Step 2: Structure of magnetization.} We next
  analyze the magnetization on the slice $A := \{a \} \times \TTT$ in
  more detail. In view of the upper constructions, we expect that the
  regions where the magnetization points in the negative
  $x_1$-direction are small needle-shaped domains. The restriction of
  $m_1$ to the slice, therefore, is expected to be negative on a
  number of small circular domains. In the following, we give a
  precise version of this heuristic picture. We define the set $A_+$
  (where $\mm$ points ``to the right'', i.e. in direction of $\ee_1$.
  The notation ``to the right'' is in accordance with the figures) and
  the set $A_-$ (where $\mm$ points to the left) by
  \begin{gather*}
    A_+ \ := \ \big \{ x \in A \ : \ \frac 12 \lambda < u_1 \leq
    \lambda \big \}, \qquad\qquad A_- \ := \ \big \{ x \in A \ : \ -2
    + \lambda \leq u_1 < -1 + \lambda \big \}.
  \end{gather*}
  It is also convenient to define a transition region $A_0$ (where
  $\mm$ also points to the right)
  \begin{align*}
    A_0 := \big \{ x \in A \ : \ -1 + \lambda \leq u_1 \leq \frac 12
    \lambda \big \}.
  \end{align*}
  We first note that when $m_1 \geq 0$, we have $|\uu_\perp|^2 =
  |\mm_\perp|^2 = 1 - m_1^2 \geq 1 - m_1 = \lambda - u_1$, i.e.
  \begin{align}
    \label{eq:8-2}
    |\uu_\perp|^2 \ \geq \ \lambda - u_1 && \text{in } A_0 \cup A_+.
  \end{align}
  We claim that the region $A_-$ of ``reversed magnetization'' is
  concentrated on a small set with total area of order $\lambda
  \ell^2$, and that the transition region is even smaller.  More
  precisely, we claim that
  \begin{align} \label{geometry} %
    |A_+| \ \sim \ \ell^2, && |A_0| \ \ll \ \lambda \ell^2, && |A_-| \
    \sim \ \lambda \ell^2.
  \end{align}
  Indeed, by (\ref{eq:8-2}), we have $|\uu_\perp|^2 \geq
  \frac{\lambda}{2}$ in $A_0$, and so
  \begin{align*}
    |A_0| \ \leq \ {2 \over \lambda} \int_{A_0} |\uu_\perp|^2 \
    \lesssim \ {E \over \lambda Q} \ \lupref{anisotropy-small}\ll \
    \ell^2.
  \end{align*}
  Here we also have used that by assumption $Q \geq \gamma \gg
  \lambda$. Choosing $c_1 = {1 \over 16}$ in Lemma \ref{lem-slice},
  and in view of $|A_0| + |A_+| = \ell^2 - |A_-|$, we get
  \begin{align*}
    -\tfrac{1}{16} \lambda \ell^2 \ \lupref{eq-l3}\leq \ \int_A u_1 \
    \leq \ (- 1 + \lambda) |A_-| + \lambda (|A_0| + |A_+|) \ = \ -
    |A_-| + \lambda \ell^2,
  \end{align*}
  so $|A_-| \leq {17 \over 16} \lambda \ell^2 \ll \ell^2$ and,
  therefore, $|A_+| \geq \tfrac12 \ell^2$. Similarly,
  \begin{align*}
    -\tfrac{1}{16} \lambda \ell^2 \ \ &\upref{eq-l3}\leq \ - \int_A
    u_1 \ \leq \ 2 |A_-| +
    \int_{A_0} (\lambda - u_1)  - \tfrac12 \lambda |A_+| \notag \\
    &\upref{eq:8-2}\leq \ 2 |A_-| + \int_{A_0} |\uu_\perp|^2 -
    \tfrac14 \lambda \ell^2 \leq \ 2 |A_-| + {C E \over Q} - \tfrac14
    \lambda \ell^2 \lupref{anisotropy-small}\leq 2 |A_-| - \tfrac18
    \lambda \ell^2.
  \end{align*}
  Hence $|A_-| \geq {1 \over 32} \lambda \ell^2$, and so $|A_-| \sim
  \lambda \ell^2$.  This concludes the proof of \eqref{geometry}.

  \medskip

  {\it Step 3: Identification of a regularized region.} In the
  previous step, we showed that the reversed magnetization region
  $A_-$ occupies a small fraction of $A$.  By the upper bound
  constructions of Lemma \ref{lem-needle} below, we would expect that
  $A_-$ is divided into a controlled number of similar size circular
  domains. The construction also suggests that in the core region of
  the plate, the typical radius $r$ of these circular domains and the
  typical distance $a$ between them are given by
  \begin{align} \label{def-rl} %
    r \ = \ \frac {\eps^{\frac 13}}{\gamma^{\frac 13} |\ln \lambda|^{\frac 13}}, && a
    \ = \ \frac {\eps^{\frac 13}}{\gamma^{\frac 13} \lambda^{\frac 12} |\ln
      \lambda|^{\frac 13}}.
  \end{align}
  In the following, we use the co-area formula and the isoperimetric
  inequality to get a rigorous variant of the above heuristics. We
  replace $A_-$ by a larger set $S$ with $A_- \subseteq S \subseteq
  A_- \cup A_0$.  The reason to choose $S$ instead of $A_-$ is that we
  cannot exclude a concentration of surface energy on $\partial
  A_-$. We claim that there is $c \in (-\frac12, -\frac14)$ such that
  the set
  \begin{align} \label{def-S} %
    S \ := \ \big \{ \ x \in A \ : \ u_1 < c \ \big \}
  \end{align}
  satisfies
  \begin{gather}
    |S| \ \sim \ \lambda \ell^2, \label{S-set} \\
    |\partial S| \ \ \upref{surface-small} \ll \ \ell^2 \min \left \{
      \frac{\lambda^2}\eps, \ \frac{\gamma^{\frac 13}\lambda |\ln
        \lambda|^{\frac 13}}{\eps^{\frac 13}} \right\}.
    \label{S-bdry}
  \end{gather}
  Indeed, by \eqref{def-S} it follows that $A_- \subseteq S \subseteq
  A_- \cup A_0$ and \eqref{S-set} follows by
  \eqref{geometry}. Furthermore, by the co-area formula
  \begin{align*}
    \int_A |\nabla_\perp u_1| \ = \ \int_\R \mathcal H^1(\{ x_\perp
    \in A : u_1(x_\perp) = t \}) \ dt,
  \end{align*}
  and Fubini's Theorem, there exists $c \in (-\frac12, -\frac14)$,
  such that $S$ satisfies
  \begin{align*}
    |\partial S| \ \lesssim \ \int_A |\nabla_\perp u_1| \
    \upref{surface-small}\ll \ \ell^2 \min \left \{
      \frac{\lambda^2}\eps, \ \frac{\gamma^{\frac 13}\lambda |\ln
        \lambda|^{\frac 13}}{\eps^{\frac 13}} \right\}.
  \end{align*}
  Estimates \eqref{S-set} and \eqref{S-bdry} together with
  \eqref{def-rl} yield
  \begin{align} \label{iso-r} r \ |\partial S| \ \ll \ |S|.
  \end{align}
  Heuristically, the estimate in (\ref{iso-r}) means that $S$, roughly
  speaking, splits into a collection of disks of diameter much larger
  than $r$. This disagrees with the expectation from the upper
  construction and will lead to a contradiction.

  \medskip

  We next replace $S$ by another set $\overline S$, still satisfying
  all the relevant properties of $S$. Additionally, it grows in a
  controlled way upon ``thickening''. More precisely, in view of
  \eqref{iso-r} and by Lemma \ref{lem-iso}, there is a set $\overline
  S$ with
  \begin{align} \label{ss-1} %
   |S \cap \overline S| \ \geq \ \frac 12 |S|
    \ \upref{S-set}\gtrsim \ \lambda \ell^2
  \end{align}
  and such that for all $t \geq r$, the $t$-neighborhood $\overline
  S^t$ of $\overline S$ satisfies
  \begin{align} \label{ss-2} %
    |\overline S^t| \ \lesssim \ \frac {t^2}{r^2} |S| \ %
    \lupref{S-set}\lesssim \ \frac {\lambda t^2 \ell^2}{r^2}.
  \end{align}
  {\it Step 4: Definition of a suitable test function.} We now define
  a logarithmic cut-off $\psi \in H^1(\TTT)$, with $0 \leq \psi \leq
  1$, around $\overline S$. Let
  \begin{align} \label{def-psi} %
    \psi(x_\perp) \ := \ \phi(\dist(x_\perp,\overline S)), \qquad
    \text{ where } \qquad \phi(t) \ := \ \left \{
      \begin{array}{ll}
        1 & \text{for } 0 \leq t \leq r, \\
        \frac {\ln (a/t)}{\ln(a/r)} & \text{for } r \leq t \leq a, \\
        0 & \text{for } a \leq t,
      \end{array}
    \right.
  \end{align}
  with $r$ and $a$ defined in \eqref{def-rl}.  A direct computation,
  following \cite{choksi08}, then yields
  \begin{align} %
    \int_\TTT \psi \ \lesssim \ \frac {\ell^2}{|\ln \lambda|} %
    && \text{and} && %
    \int_\TTT |\nabla \psi|^2 \ \lesssim \ \frac {\ell^2}{a^2 |\ln
      \lambda|}. \label{3-est}
  \end{align}
  For the reader's convenience, we show the first estimate in
  \eqref{3-est}, the proof of the second inequality proceeds
  similarly. Since $\psi = 1$ on $\overline S$, and since $|\partial
  \overline S^t| = {d \over dt} |\overline S^t|$, we have
  \begin{align*}
    \int_T \psi \ %
    &= \ |\overline S| + \int_0^\infty \phi(t) \ |\partial \bar S^t| \
    dt %
    =  \ - \int_r^a \phi'(t) \ |\bar S^t| \ dt \\
    &\upref{ss-2}\lesssim \frac {\lambda \ell^2}{r^2 |\ln a/r|}
    \int_{0}^a t \ dt \ \lesssim \ \frac {\lambda a^2 \ell^2}{r^2 |\ln
      \lambda|} %
    \lupref{def-rl}= \ \frac {\ell^2}{|\ln \lambda|}.
  \end{align*}

  {\it Step 5: Proof of the lower bound.} We are ready to give the
  proof of the lower bound. It is based on application of Lemma
  \ref{lem-transition} and on a duality argument, using the test
  function $\psi$. We claim that
  \begin{align} \label{elise} %
   \lambda \ell^2 \ \lesssim \ - \ \int_A u_1 \psi.
  \end{align}
  Since $\psi = 1$ and $u_1 \leq -\frac 14$ in $S$, and since $\psi
  \geq 0$ and $u_1 \leq \lambda$ in $A$, it follows for $\lambda \ll
  1$ that
  \begin{align*} %
    - \ \int_A u_1 \psi \ = \ - \ \int_{S \cap \overline S} u_1 \psi -
    \ \int_{A \backslash (S \cap \overline S)} u_1 \psi \ %
    \geq \ \frac{1}{4} |S \cap \overline S| - \lambda \ \int_A \psi \
    \lupupref{S-set}{ss-1}\gtrsim \ \lambda \ell^2.
  \end{align*}
  Application of Lemma \ref{lem-transition} then yields
  \begin{align*} %
    \lambda \ell^2 \ \ &\upref{elise}\lesssim \ - \ \int_A u_1 \psi \
    \lesssim \ E^{\frac 12}[\uu] \left(\frac 1{\gamma^{\frac 12}}
      \nltL{\nabla \psi}{{\TTT}} + \nltL{\psi}{{\TTT}}
    \right)\lupref{3-est}\lesssim E^{\frac 12}[\uu] \ \left(\frac
      {\ell^2}{\gamma a^2 |\ln \lambda|} + \ell^2 \right)^{\frac 12}.
  \end{align*}
  We hence obtain
  \begin{align*}
    E[\uu] \ &\gtrsim \ \ell^2 \min \big \{ \gamma \lambda^2 a^2 |\ln
    \lambda|, \lambda^2 \big \} \ %
    \lupref{def-rl}= \ \ell^2 \min \big \{ \gamma^{\frac 13}
    \eps^{\frac 23} \lambda^{\frac 13} |\ln \lambda|, \ \lambda^2 \big
    \},
  \end{align*}
 contradicting \eqref{E-small}. This concludes the proof of the Proposition.
\end{proof}

\subsection{Sharp interface constructions} \label{ss-upper}

In this section, we present constructions that achieve the optimal
scaling in Theorem \ref{thm-bulk}.  We have two different regimes. For
smaller values of $\lambda$, the optimal scaling of the energy is
achieved by a uniform configuration, while for larger values of
$\lambda$, the optimal scaling is achieved by a self--similar
structure. Note that the constructions in \cite{choksi98} do not yield
the optimal energy in the case of near saturation field. Instead, our
constructions are an adaptation of constructions introduced in
\cite{choksi08} for a model of type-I superconductors. While our
constructions have a similar self--repeating structure to that of
\cite{choksi98}, the definition of the involved functions is different
due to the constraint $|\mm| = \chi_\Omega$ in our model. Our
constructions are also different from those of \cite{choksi08} in the
geometry of the magnetic domains, and are intended to better mimic the
behavior of the minimizers.  The main result of this section is:
\begin{proposition}[Upper bound] \label{prp-2} %
  Suppose that $\lambda \lesssim \gamma^2 |\ln \lambda|^2$.  Then for
  $\eps \ll 1$ and $\lambda \ll 1$, the scaling of the minimal energy
  $E$ is bounded above by
 \begin{equation*} %
   \frac 1{\ell^2} \inf_{\uu \in \AA} E[\uu] \ \lesssim \ \min
   \left \{ \lambda^2, \ \gamma^{\frac 13} \eps^{\frac 23} \lambda |\ln
    \lambda|^{\frac 13} \right\}.
 \end{equation*}
\end{proposition}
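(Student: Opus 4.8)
The plan is to produce explicit trial functions $\uu \in \AA$ realizing each of the two scaling regimes. The regime $\lambda^2$ is immediate: take the uniform state $\mm = \ee_1 \chi_\Omega$, i.e. $\uu = \lambda \chi_\Omega \ee_1$. Then $\uu_\perp \equiv 0$, the interfacial term vanishes since $u_1^\delta$ is constant on $\Omega$ and zero outside, and the only contribution is the stray field term $\int_{\R\times\TTT}|\vv|^2$. From \eqref{def-v} one computes $\vv = -\nabla\varphi$ with $\Delta\varphi = \lambda\,\partial_1\chi_\Omega$; since $\uu$ depends only on $x_1$, the field $\vv$ is one-dimensional, $\overline v_1 = -\overline u_1 = -\lambda$ on $(0,1)$ and $0$ outside, so $\int_{\R\times\TTT}|\vv|^2 = \ell^2\lambda^2$. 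Hence $E[\uu]/\ell^2 = \lambda^2$, which gives the bound whenever $\lambda^2$ is the smaller of the two quantities.

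For the branched regime, I would build a periodic, self-similar needle construction in the spirit of \cite{choksi08}, adapted to the constraint $|\uu + (1-\lambda)\chi_\Omega\ee_1| = \chi_\Omega$. Partition $\TTT$ into cells; in each generation, slender domains of reversed magnetization (where $m_1 \approx -1$, i.e. $u_1 \approx -1+\lambda$) occupying a volume fraction $\sim\lambda$ are refined toward both plate surfaces $x_1 = 0,1$ in a geometric cascade, with generation-$k$ needle radius $r_k$ and spacing $a_k$ scaling geometrically. Away from the needle cores one sets $\mm = \ee_1$ exactly (so $u_1 = \lambda$, $\uu_\perp = 0$), and inside a thin transition annulus around each needle one rotates $\mm$ continuously from $\ee_1$ to $-\ee_1$, keeping $|\mm| = 1$; the width of this annulus is taken comparable to $\eps/(\gamma)$ times the needle radius so that the interfacial term $\eps\int|\nabla u_1^\delta|$ is controlled. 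One then estimates the four terms of \eqref{energy}: the interfacial term is $\eps$ times the total needle lateral area summed over generations; the anisotropy terms $Q\int_{|\uu_\perp|<\delta}|\uu_\perp|^2 + \delta^2 Q\int_{|\uu_\perp|\ge\delta}|\uu_\perp|^2$ are supported in the transition regions and bounded using $|\uu_\perp|^2 = 1-m_1^2 \le 1$; and the magnetostatic term $\int|\vv|^2$ is estimated as in \cite{choksi08,choksi98} via the $H^{-1/2}$-type cost of the charge distribution on each slice, which is where the logarithm $|\ln\lambda|$ enters through the separation of scales $a_k/r_k$. Optimizing the geometric ratios and the number of generations (so that refinement stops when $r_k$ reaches the wall scale) yields $r \sim \eps^{1/3}\gamma^{-1/3}|\ln\lambda|^{-1/3}$ and $a \sim \eps^{1/3}\gamma^{-1/3}\lambda^{-1/2}|\ln\lambda|^{-1/3}$ as in \eqref{def-rl}, and the total energy per area is $\sim \gamma^{1/3}\eps^{2/3}\lambda|\ln\lambda|^{1/3}$.

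I expect the main obstacle to be the magnetostatic estimate together with the bookkeeping of the self-similar cascade: unlike the superconductor model of \cite{choksi08}, here the reversed phase is not a sharp characteristic function but a smooth rotation of $\mm$ on the sphere, so one must verify that the stray field generated by the transition layers (which carry volume charge $\nabla\cdot\mm$, not just surface charge) does not spoil the bound, and that the cascade can be terminated compatibly with the hypothesis $\lambda \lesssim \gamma^2|\ln\lambda|^2$ — this hypothesis is precisely what guarantees that the needle radius stays above the Bloch-wall scale $w \sim \eps/\gamma$ so that the sharp-interface description is self-consistent. The condition on $\ell$ is needed to ensure at least one full period fits in the torus. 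A secondary technical point is checking that the constructed $\uu$ genuinely lies in $BV(\R\times\TTT;\R^3)$ with the pointwise constraint, which is routine once the rotation is defined via a smooth map into $S^2$.
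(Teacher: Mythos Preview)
Your overall plan (uniform state for the $\lambda^2$ bound; self-similar refining needles for the branched bound) matches the paper's strategy, but two concrete ingredients are missing from your proposal, and without them the argument does not close.

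First, you have no mechanism for the stray-field estimate. The paper does \emph{not} bound $\int|\vv|^2$ by an ``$H^{-1/2}$-type cost'' on slices; that would be hard to make precise in a cell-by-cell sum. Instead it \emph{localizes} the nonlocal term by constructing, in each cell, an auxiliary field $\tilde\vv$ with $\nabla\cdot(\uu+\tilde\vv)=0$, and then invokes the variational characterization $\int|\vv|^2\le\int|\tilde\vv|^2$. Concretely, in a transition cell one solves the two-dimensional Neumann problem $\Delta_\perp\phi=\nabla\cdot\uu$ on each tangential slice and sets $\tilde\vv=-\nabla_\perp\phi$; the logarithm $|\ln\lambda|$ then appears from the explicit computation $\int|\nabla_\perp\phi|^2\sim r^4|\ln\lambda|/h$ for nearly point-like 2D sources. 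This cell-local replacement is what allows the geometric sum over generations.

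Second, the $\gamma^{1/3}$ factor requires \emph{two} constructions, which your proposal collapses into one. The paper builds an anisotropy-free configuration $\uu^{AF}=(\lambda-2\chi)\ee_1$ (sharp characteristic functions; no rotation annuli) that pays stray-field cost $\sim r^4|\ln\lambda|/h$ per cell and is optimal when $Q\gtrsim 1$. For $Q\ll 1$ one instead uses $\uu^{SF}=\uu^{AF}+\tilde\vv^{AF}-\tilde\vv^{SF}$, which absorbs the tangential field into $\uu_\perp$ (paying anisotropy $\sim Qr^4|\ln\lambda|/h$) and leaves only a higher-order residual $\sim r^6/h^3$ as stray field. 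The minimum of the two gives $\gamma r^4|\ln\lambda|/h$, and only then does the optimization in $h$ and the summation over generations produce $\gamma^{1/3}\eps^{2/3}\lambda|\ln\lambda|^{1/3}$.

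Two smaller corrections. You are mixing the sharp and diffuse pictures: for $E$ in \eqref{energy} the interfacial term is a BV total variation, so no smooth rotation annuli are needed (those belong to Section~\ref{ss-diffuse} for $\EEE$, and even there the wall width is $\eps/Q$, not $\eps/\gamma$). And the hypothesis $\lambda\lesssim\gamma^2|\ln\lambda|^2$ has nothing to do with Bloch-wall scales; in the paper it is exactly the condition ensuring both the slenderness $r_j\ll h_j$ throughout the cascade and that the residual $r_j^6/h_j^3$ is dominated by $\gamma r_j^4|\ln\lambda|/h_j$, via $r_j^2/h_j^2\lesssim\lambda/(\gamma|\ln\lambda|)$.
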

Let us remark that the logarithm in the scaling of the energy is a
consequence of the fact that the leading order contribution of the
stray field energy is given by interaction on tangential slices, where
the stray field potential behaves logarithmically.

\medskip

We first note that by choosing the uniform magnetization $\uu =
\lambda \ee_1 \chi_\Omega$, we immediately recover the upper bound
$\frac 1{\ell^2} \inf_{\uu \in \AA} E[\uu] \ \lesssim \ \lambda^2$.
The cross--over to the branched regime occurs at $\lambda \ \sim
\gamma^{\frac 13} \eps^{\frac 23} |\ln \lambda|^{\frac 13}$. It hence
remains to construct an optimal upper bound, if $\lambda$ is larger
than this threshold. In the remaining part of this section, we present
such a construction.  The corresponding estimates are then given in
Section \ref{ss-estimates}, thus completing the proof of Proposition
\ref{prp-2}.

\medskip

Before going into the details of our constructions, however, let us
recall some other constructions that have been proposed in the
literature over the years (for simplicity, we will only discuss the
case $Q \sim 1$). The first estimates of the minimal energy for the
bulk uniaxial ferromagnets go back to the work of Landau and Lifshitz
\cite{landau35} and Kittel \cite{kittel46}. Those constructions were
proposed for zero applied field. In fact, the Landau-Lifshitz
construction cannot be easily extended to the case when the domains
opposing the applied field occupy only a small volume fraction of the
sample. The Kittel construction, on the other hand, can be modified to
account for small volume fraction, resulting in an energy scaling $ E
\ell^{-2} \sim \eps^{\frac 12} \lambda |\ln \lambda|$
\cite{kooy60}. However, since it consists of a ``striped'' domain
pattern, its interfacial energy turns out to be too high at small
$\lambda$. This issue can be addressed by modifying the geometry of
the domains into a lattice of cylindrical ``bubbles'', whose energy my
be estimated as $E \ell^{-2} \sim \eps^{\frac 12} \lambda$
\cite{cape71,druyvesteyn71}.  A comparison of this estimate with the
result of Proposition \ref{prp-1} shows that, although the bubble
construction provides a slight improvement over the stripe
construction, it is highly non-optimal in its scaling behavior with
respect to $\eps$. We note that, in fact, any domain configuration, in
which the domain walls are aligned with the easy axis cannot do better
in terms of energy, and so branching is inevitable for sufficiently
small $\eps$ to reduce energy \cite{choksi99}. On the other hand, if a
tree-like branched domain structure is used (see
\cite[Sec. 4.3]{choksi04} for the construction in the case of type-I
superconductors), it is not difficult to show that the energy will
scale as $E \ell^{-2} \sim \eps^{\frac 23} \lambda^{\frac 23}$. Once again,
comparing this with the result of Proposition \ref{prp-1}, one sees
that, while the considered configuration gives the optimal scaling in
terms of the dependence of the energy on $\eps$, it is highly
non-optimal in terms of $\lambda$. These observations indicate that
the minimizers of $E$ may not have the geometric characteristics of
any of the domain patterns considered above when $\eps \ll 1$ and
$\lambda \ll 1$. In the following we present a construction which
achieves the scaling in Proposition \ref{prp-1}, thus demonstrating
that this scaling is optimal.

\medskip

We begin by fixing the basic geometry. The geometry is an adaption of
a recent self-similar construction for the type-I superconductor model
\cite{choksi08}. However, contrary to the construction in
\cite{choksi08}, our construction includes closure domains. These
closure domains are, in particular, necessary to achieve the optimal
scaling of the energy in the case of soft materials, i.e. $Q \ll
1$. Based on this geometry, we construct two different magnetization
configurations $\uu^{AF}$ and $\uu^{SF}$.  The first configuration
$\uu^{AF}$ avoids anisotropy energy entirely and is optimal for $Q
\gtrsim 1$. The second configuration $\uu^{SF}$ avoids most of the
stray fields and is optimal for $Q \lesssim 1$.

\medskip

\begin{figure}
 \begin{center}
  \includegraphics[width=8cm]{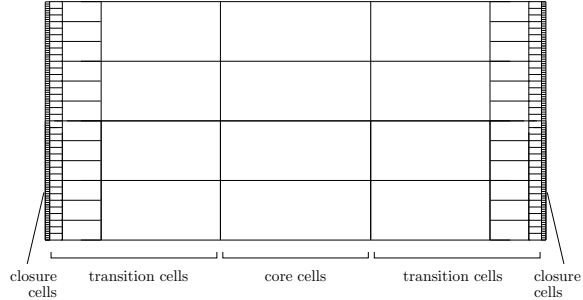}    
 \end{center}
 \caption{A side view (in the $x_1 x_2$-plane) of a sample partition
   containing 16 core cells with 4 generations of refining cells on
   each side. }
 \label{fig-cell2d}
\end{figure}

{\textbf{Sample geometry. \ }} We divide the material plate into three
spatial regions: the core region, the transition region and the
surface region, see Fig. \ref{fig-cell2d}. These regions form 5 layers
symmetrically with respect to the plate's mid-plane. By this symmetry,
it is, therefore, sufficient to describe the constructions only in the
left half of the sample, i.e. for $0 \leq x_1 \leq \frac12$.

\medskip

The core region is partitioned into equal rectangular cells with
height $h_0$ in the normal direction ($x_1$-direction) and length
$a_0$ in both tangential directions. These cells are adjacent on the
left to a system of $M$ layers of self-similar cells in the transition
and surface regions that refine from the core region towards the
boundary (see Fig.  \ref{fig-cell2d}).  Each generation of cells is
described by its height $h_j$ in the normal direction and its
extension $a_j$ in both tangential directions, with $j = 1, \ldots,
M$. We also define a parameter $r_j = \lambda^{\frac 12} a_j / \sqrt{2
  \pi}$, which will be the maximum needle radius in the $j$-th
generation of cells. In every generation, the width of the cells
decreases by a factor $3$, i.e.
\begin{align} \label{width} %
a_{j+1} \ = \ \frac{a_j}3 && \text{and} && r_{j+1} \ = \ \frac{r_j}3.
\end{align}
In particular, the number of cells is multiplied by a factor $9$ in
each new generation.  The algorithm is terminated after $M$
iterations.  We will specify $a_1$, $\{h_j\}$, and $M$ in the
sequel. In particular, these parameters will be chosen, such that the
union of all cells exactly covers $\Omega$, i.e.
\begin{align} \label{total-height} %
  h_0 + 2 \sum_{j=1}^M h_j \ = \ 1.
\end{align}
We differentiate between core cells (which for simplicity of notation
we identify with generation $0$), transition cells (generations $1$,
\ldots, $M-1$) and closure cells (generation $M$).

\medskip

\begin{figure}
\centering
\includegraphics[width=5.5cm]{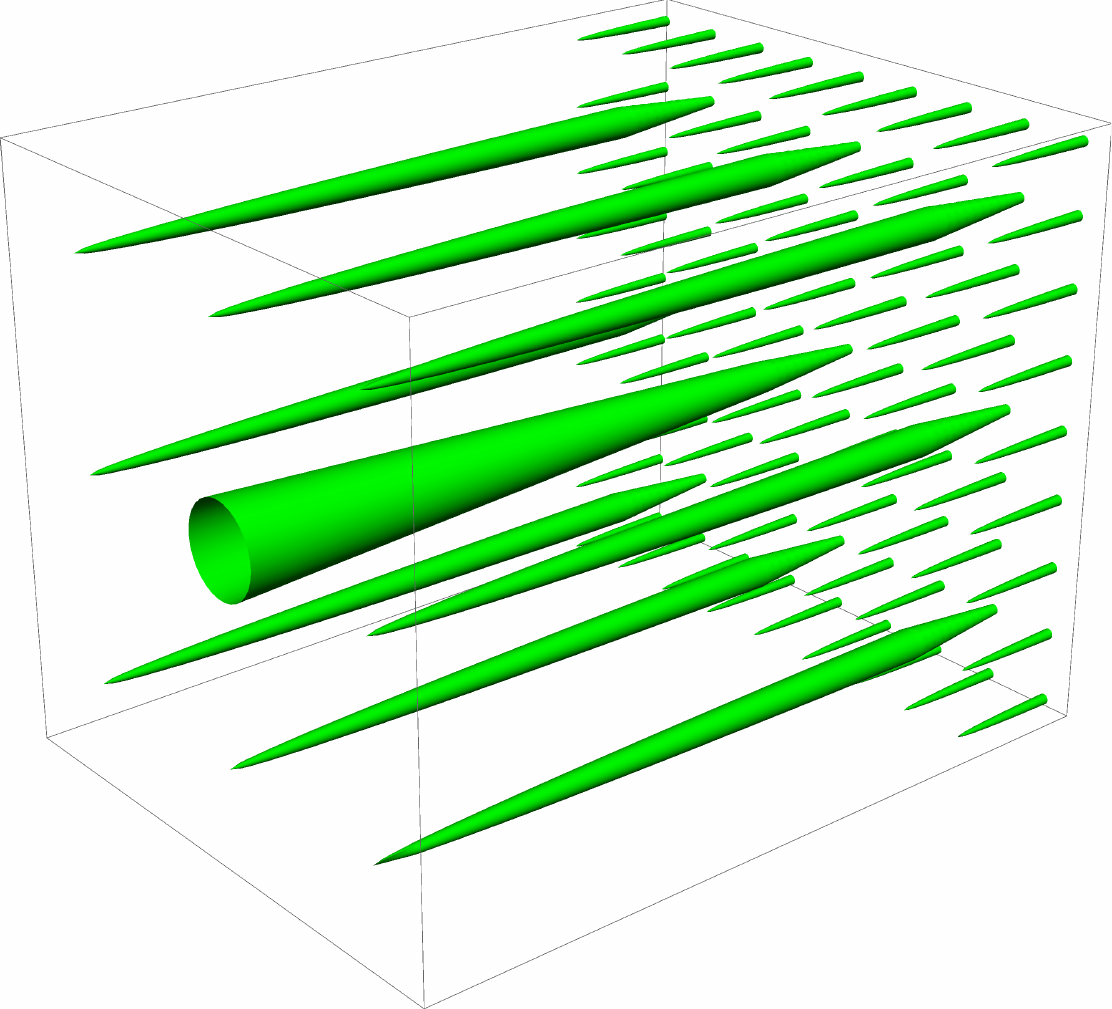}
\caption{Refinement towards the sample boundary (two generations are
  shown).}
\label{fig-refinement}
\end{figure}

Globally, the geometry of the construction consists of a collection of needles rescaled to fit
into the collection of cells just constructed, refining in the direction of the boundary, see
Fig.  \ref{fig-refinement}. We capture the region occupied by the needles by the characteristic
function $\chi \in BV(\R^3, \{ 0, 1\})$, which will be specified in the sequel.  We thus give
the definition of $\chi$ on a rescaled cell
\begin{align} \label{Z} %
  Z \ = \ [0,h] \times K, \qquad \qquad K = \left[-\textstyle \frac a2, \textstyle \frac a2
  \right]^2,
\end{align}
with height $h$ and width $a$. We furthermore denote the tangential boundary of the cell by
$\partial_\perp Z := [0,h] \times \partial K$.  The corresponding ``maximum needle radius'' $r$
is defined by
\begin{align}
  \label{eq:r}
  r = \left( {\lambda \over 2 \pi} \right)^{\frac{1}{2}} a.
\end{align}
The definition of $\chi$ on any cell with arbitrary extension (in the
left side of the sample) is then given by a rescaling of this cell.

\medskip

\textbf{Geometry of a transition cell. \ } Consider a transition cell
$Z_\mathrm{trns}$ first. The cell geometry is characterized by nine
needles, see Fig.  \ref{fig-cell}(a).  The largest needle is located
in the center of the cell and grows into positive $x_1$--direction,
while the other needles are smaller and grow in the negative
$x_1$-direction.  All needles are axially-symmetric around their
corresponding center lines given by $x_\perp = x_\perp^{(i)}$, $i = 1,
\ldots, 9$. The large needle is located in the center of the cell,
i.e.  $x_\perp^{(1)} = 0$. The radii of the needle cross-sections on
tangential slices are functions of $x_1$.  The radius of the large
needle is denoted by $\rho_1(x_1) := \rho_+(x_1)$.  The radii of the 8
small needles are given by $\rho_i(x_1) := \rho_-(x_1)$ for
$i=2,\ldots 9$.  The characteristic function $\chi$ is defined by
\begin{align} \label{def-chi-int}
  \chi(x_1, x_\perp) \ := \ \sum_{i=1}^9 H(\rho_i(x_1) - |x_\perp -
 x_\perp^{(i)}|) %
  && \text{for } (x_1, x_\perp) \in Z_\mathrm{trns},
\end{align}
where $H$ is the Heaviside function, i.e., $H(s) = 1$ for $s > 0$ and
$H(s) = 0$ for $s \leq 0$. It remains to specify the radii $\rho_\pm$
for the large and small needles.

\medskip

\begin{figure}
  \hspace{1.7cm} {\bf a)} \hspace{5.7cm} {\bf b)}
  \vspace{-4mm}
  \begin{center}
   \includegraphics[width=4.5cm]{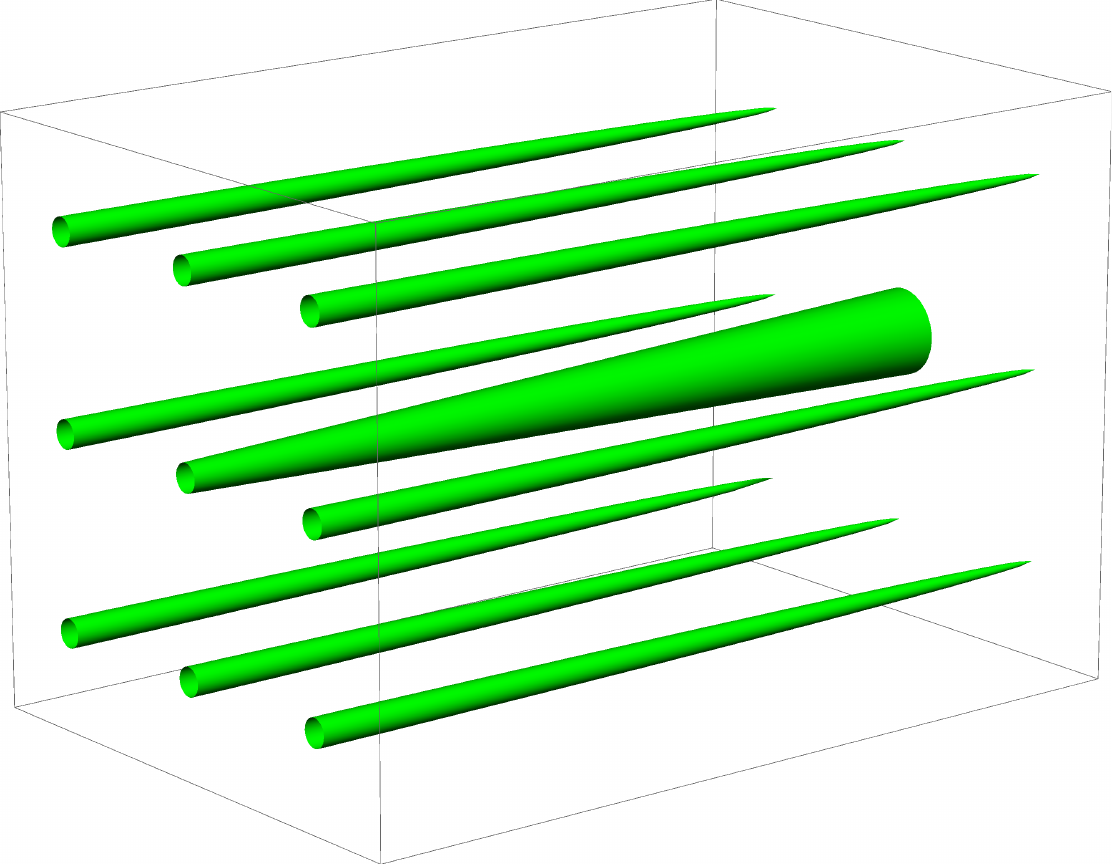} \hspace{1.5cm}
   \includegraphics[width=4.5cm]{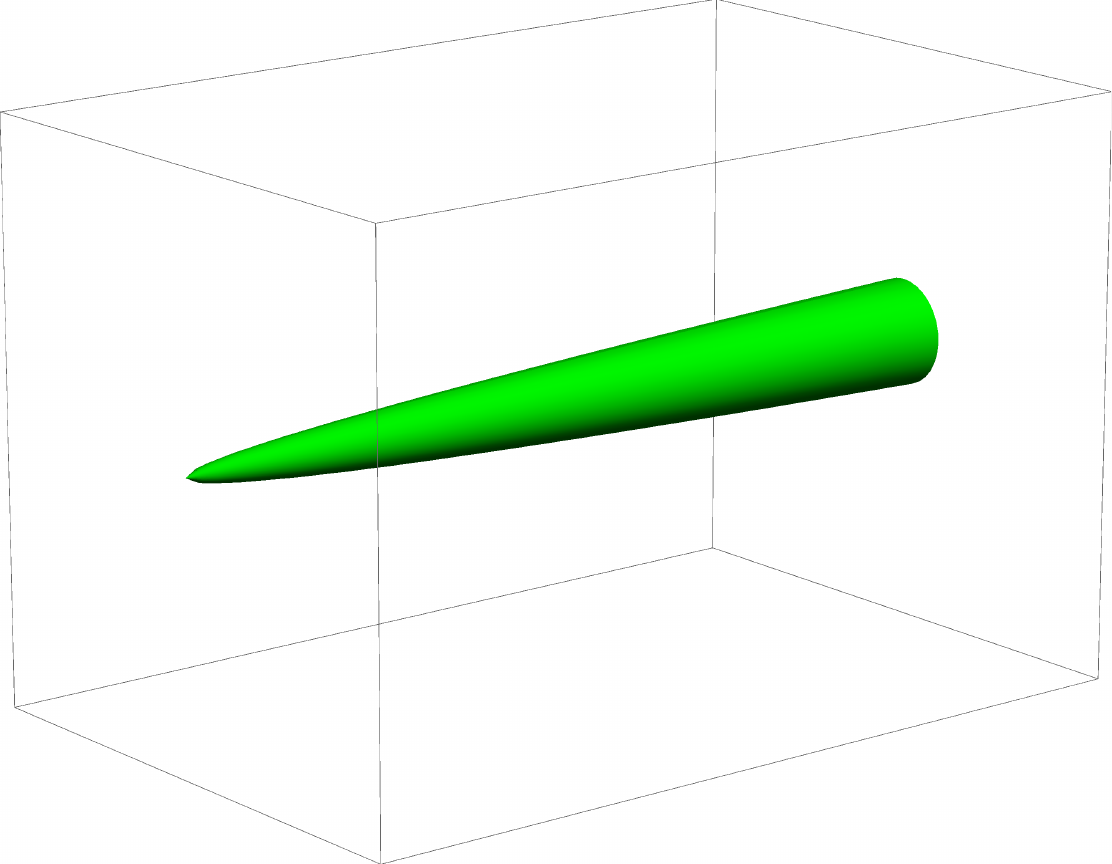}
   \end{center}
   \caption{Geometry of a unit cell: a) transition cell; b) closure
     cell.}
 \label{fig-cell}
\end{figure}

At the tangential faces, i.e. at $x_1 = 0$ and $x_1 = h$, of the cell
the needle radii are defined by
\begin{align*}
  \rho_+(0) = \rho_-(0) = \tfrac13 r, && \rho_-(h) = 0, \quad
  \rho_+(h) = r,
\end{align*}
respectively. This means that at $x_1 = 0$ all needles have the same
radius. At $x_1 = h$, the radii of the small needles are zero. We
choose $\rho_\pm$, such that throughout the cell, the cross-sectional
area of the needles is constant, i.e.
\begin{align} \label{constant-area} %
  \pi \rho_+^2(x_1) + 8 \pi \rho_-^2(x_1) \ = \ \tfrac12 \lambda a^2.
\end{align}
To avoid further complicating the constructions, we assume that the
profile of the small needles is conical at the tip:
\begin{align} \label{needle-tip} %
   h \displaystyle\rho_-(x_1) \ = \ r (h-x_1), %
   &&\displaystyle\text{for $(1 - \alpha) h \leq x_1 \leq h$},
\end{align}
where $0 < \alpha \ll 1$. The precise value of $\alpha$ is
inconsequential and, in fact, for the zero anisotropy configuration
$\uu^{AF}$ could even be taken to be zero. In view of
\eqref{constant-area}, equation (\ref{needle-tip}) defines
$\rho_+(x_1)$ for $(1 - \alpha) h \leq x_1 \leq h$.  Furthermore, on
most of the cell, i.e. for $0 \leq x_1 \leq (1 - \alpha) h$, we choose
$\rho_+$ to be the linear interpolation connecting the values of
$\rho_+$ at $x_1 = 0$ and $x_1 = (1 - \alpha) h$. In turn, $\rho_-$ is
defined by \eqref{constant-area}. Note that as a consequence of the
conical profile at the tip of the small needles (see
\eqref{needle-tip}), it follows that
\begin{align} \label{slender} %
 |\rho_+'(x_1)|, \ |\rho_-'(x_1)| \ \lesssim \ \frac rh && \text{for
    all $x_1 \in [0,h]$}.
\end{align}

\medskip

\textbf{Geometry of a closure cell. \ } In any closure domain cell
$Z_{\rm cls}$, there is only one large needle along the center of the
cell, see \ref{fig-cell}(b). The radius $\rho_{\rm cls}(x_1)$
satisfies the conditions $\rho_{\rm cls}(0) = 0$ and $\rho_{\rm
  cls}(h) = r$. Analogously to the above construction for the
transition cell, we choose a needle shape with a conical tip.  Define
\begin{align}
  \label{chicls}
  \chi(x_1, x_\perp) \ := \ H(\rho_{\rm cls}(x_1) - |x_\perp|) &&
  \text{for } x \in Z_{\rm cls}.
\end{align}
In particular, we also have
\begin{align} \label{slender-cls} %
 |\rho_{\rm cls}'(x_1)| \ \lesssim \ \frac rh && \text{for all $x_1
    \in [0,h]$}.
\end{align}

\medskip

\textbf{Geometry of a core cell. \ } In each core cell
$Z_\mathrm{core}$ the function $\chi$ is assumed to be a
characteristic function of a straight cylinder with the radius equal
to that of the needle on the adjacent side of the transition cell. The
overall geometry of the magnetization pattern for one core cell and 3
refining generations is presented in Fig. \ref{fig-needles}. In the
sequel, we give two different magnetization configurations, $\uu^{AF}$
and $\uu^{SF}$, based on the geometry described above.

\medskip

\textbf{The magnetization $\uu^{AF}$. \ } We first define the
anisotropy-free configuration $\uu^{AF}$ by
\begin{align} \label{def-af} %
  \uu^{AF}(x) \ := \ \bigl(\lambda - 2 \chi(x) \bigr) \ee_1
  \qquad\qquad& \text{in } Z_{\rm core}, Z_{\rm trns}, Z_{\rm cls},
\end{align}
and zero outside $\Omega$. Note that the stray field of $\uu^{AF}$ is
created by surface charges on the needle interfaces and at the sample
surface. We next define an auxiliary field $\tilde \vv^{AF}$. In every
transition cell $Z_{\rm trns}$, we define
\begin{align} \label{def-taf} %
\tilde \vv^{AF}  := \ -\nabla_\perp \phi && \text{in } Z_{\rm trns},
\end{align}
where $\phi$ is a solution of
\begin{align} \label{phi-int} %
  \Delta_\perp \phi \ = \ \nabla \cdot \uu^{AF} \quad \text{in }
 Z_{\rm trns} %
  &&\text{ and }&&
  \partial_{\nnu_{\perp}} \phi \ = \ 0 \quad \text{on } \partial_\perp
  Z_{\rm trns}.
\end{align}
Note that $\vv^{AF}$ is uniquely defined, since by \eqref{def-chi-int}
and \eqref{constant-area} we have
\begin{align}
  \label{eq:29}
  \int_K u_1^{AF} (x_1, x_\perp ) \, d x_\perp = 0 && \forall x_1 \in
  [0, h],
\end{align}
implying the solvability condition for (\ref{phi-int}), in view of the
fact that $\uu^{AF} = u_1^{AF} \ee_1$.  Also note that the
corresponding field $\tilde \vv^{AF}$ is the approximation of the
stray field assuming that magnetostatic interactions in tangential
slices are dominant.

\medskip In closure domain cells, (\ref{eq:29}) does not hold any
more. Hence, in this case  we define $\tilde \vv^{AF}$ by
\begin{align} \label{def-vaf} %
  \tilde \vv^{AF} \ = \ -\nabla_\perp \phi_1 - \partial_1 \phi_2 \
 \ee_1 && \text{in }  Z_{\rm cls},
\end{align}
where $\nabla_\perp \phi_1$ approximates stray field interaction in
tangential directions, while $\partial_1 \phi_2$ approximates stray
field interaction in the normal direction. We define $\phi_1$ as a
solution of
\begin{align} \label{def-pe} %
  \Delta_\perp \phi_1 \ = \ \nabla \cdot \uu^{AF} - \textstyle \frac
  1{a^2} \int_K \nabla \cdot \uu^{AF}(x_1, \hat x_\perp) \, d\hat
 x_\perp \quad \text{in } Z_{\rm cls} %
  && \text{and} &&
    \partial_{\nu_\perp}\phi_1 \ = \ 0 \quad \text{on
    } \partial_\perp Z_{\rm cls}. 
\end{align}
The function $\phi_2$ is defined as a solution of
\begin{align} \label{def-p2} %
  \partial_1^2 \phi_2(x_1) = \textstyle \frac 1{a^2} \int_K \nabla
  \cdot \uu^{AF}(x_1, \hat x_\perp) \, d \hat x_\perp \quad \text{in }
 Z_{\rm cls} %
  && \text{and} && %
  \partial_1 \phi_2 = 0 \quad \text{on } \{x_1 = 0 \} \times K.
\end{align}
Similarly to \eqref{phi-int}, up to a constant, problems
\eqref{def-pe} and \eqref{def-p2} are indeed uniquely solvable for
$\phi_1$ and $\phi_2$. Finally, we set $\tilde \vv^{AF} = 0$ in the
core cells $Z_{\rm core}$ and outside $\Omega$.

\medskip

\textbf{The magnetization $\uu^{SF}$. \ } Here we construct an
approximately stray field-free magnetization $\uu^{SF}$. Although it
would be natural to consider $\uu = \uu^{AF} + \tilde \vv^{AF}$ as a
trial function, in view of the fact that $\nabla \cdot \uu = 0$ in
this case, this function is not admissible, i.e. it does not belong to
$\AA$. For this reason, using the construction of $\uu^{AF}$ and
$\tilde \vv^{AF}$ above, we define an auxiliary function $\tilde
\vv^{SF}$:
\begin{align} \label{def-tsf} %
  \tilde \vv^{SF}(x) := \ \Bigl(1 - \sqrt{1 - |\tilde \vv^{AF}(x)|^2}\
  \Bigr) (\uu^{AF}(x) + (1 - \lambda) \chi_\Omega(x) \ee_1) &&
  \text{in }
  Z_{\rm trns}, \\
  \label{def-tsfc}
  \tilde \vv^{SF}(x) := \ \Bigl(1 - \sqrt{1 - |\tilde
    \vv^{AF}_\perp(x)|^2}\ \Bigr) (\uu^{AF}(x) + (1 - \lambda)
  \chi_\Omega(x) \ee_1) + \tilde v^{AF}_1(x) \ee_1 && \text{in }
  Z_{\rm cls}.
\end{align}
As will be shown in Sec. \ref{ss-estimates}, this definition is
well-posed, since in our construction $|\tilde \vv^{AF}| \ll 1$. We,
therefore, set
\begin{align} \label{def-sf} %
  \uu^{SF} \ := \ \uu^{AF} + \tilde \vv^{AF} - \tilde \vv^{SF}.
\end{align}
It can be easily checked that by our definition of $\tilde \vv^{SF}$,
we have $\uu^{SF} \in \AA$. Furthermore, $\uu^{SF}$ is constructed to
have small stray field (see Sec. \ref{ss-estimates}).

\medskip

\textbf{Localization of the stray field. \ } Note that our
constructions are such that
\begin{align} \label{tas} %
  \nabla \cdot \uu^{AF} \ + \ \nabla \cdot \tilde \vv^{AF} = 0&&
  \text{and} && %
  \nabla \cdot \uu^{SF} \ + \ \nabla \cdot \tilde \vv^{SF} = 0,
\end{align}
in $\mathbb R \times \TTT$.  We will use this information to localize
the estimates for the stray field energy (for the original idea in the
context of ferromagnets, see \cite{choksi98}). Let us note that for
every vector field $\uu$ and its stray field $\vv$ (in the sense of
\eqref{def-v}), we have
\begin{align} \label{h<m} %
  \int_{\R \times \TTT} |\vv|^2 \ = \ \inf_{\tilde \vv} \int_{\R
    \times \TTT} |\tilde \vv|^2,
\end{align}
where the infimum is taken over all fields $\tilde \vv \in L^2(\R^3;
\R^3)$ satisfying
\begin{align} \label{st} %
\nabla \cdot \tilde \vv + \nabla \cdot \uu \ = \ 0
\end{align}
distributionally. This motivates to define for any $\uu \in \AA$ and
for any $\tilde \vv \in L^2(\R \times \TTT; \R^3)$, the energy
\begin{align} \label{tilde-energy} %
  \tilde E[\uu, \tilde \vv] = \eps \int_{\Omega}|\nabla u_1^\delta| +
  Q \int_{\Omega} |\uu_\perp|^2 + \int_{\mathbb R \times \TTT} |\tilde
  \vv|^2.
\end{align}
Hence we have $E[\uu] \ \leq \ \tilde E[\uu, \tilde \vv]$, whenever
$\tilde\vv$ satisfies \eqref{st}.  In view of \eqref{tas} it then
follows that
\begin{align} \label{localize} %
 E[\uu^{AF}] \ \leq \ \tilde E[\uu^{AF}, \tilde \vv^{AF}], %
 && E[\uu^{SF}] \ \leq \ \tilde E[\uu^{SF}, \tilde \vv^{SF}].
\end{align}
The advantage of the quantity $\tilde E$ is that it is local in both
of its parameters. We hence define the restriction of $\tilde E$ on
any set $A \subseteq \R \times \TTT$ by
\begin{align*}
  \tilde E_{|A}[\uu, \tilde \vv] = \eps \int_{A \cap \Omega}|\nabla
  u_1| + Q \int_{A \cap \Omega} |\uu_\perp|^2 + \int_{A} |\tilde
  \vv|^2.
\end{align*}

\subsection{Estimates} \label{ss-estimates} %

In this section, we give the estimates corresponding to the branched
needle construction described in the previous section, thus completing
the proof of Proposition \ref{prp-2}. We will show that
\begin{lemma}[Needles] \label{lem-needle} %
  For $\gamma^{\frac 13} \eps^{\frac 23} |\ln \eps|^{\frac 13}
  \lesssim \lambda \lesssim \gamma^2 |\ln \lambda|^2$, we have
 \begin{align} \label{the-est} %
   \frac 1{\ell^2} \inf_{\uu \in \AA} E[\uu] \ \lesssim \
   \gamma^{\frac 13} \eps^{\frac 23} \lambda |\ln \lambda|^{\frac 13}.
 \end{align}
\end{lemma}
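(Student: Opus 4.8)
The plan is to establish the bound by estimating the energies of the two trial functions $\uu^{AF}$ and $\uu^{SF}$ constructed above, using $\uu^{AF}$ when $Q\gtrsim 1$ and $\uu^{SF}$ when $Q\lesssim 1$. By the localization identities \eqref{localize} it is enough to control $\tilde E[\uu^{AF},\tilde\vv^{AF}]$ and $\tilde E[\uu^{SF},\tilde\vv^{SF}]$, and since $\tilde E$ is local we may add up the contributions of the individual cells and then sum over the generations $j=0,\dots,M$. The only parameters not yet fixed are the core cell width $a_0$ (which determines $a_j=3^{-j}a_0$ and, via \eqref{eq:r}, $r_j=(\lambda/2\pi)^{1/2}a_j$), the heights $\{h_j\}$, and the number of refinements $M$; these will be chosen so that \eqref{total-height} holds and so that the terms of $\tilde E$ balance across generations.

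For one generation-$j$ cell the three contributions are as follows. The surface term comes from the lateral area of the needles, which by slenderness (\eqref{slender}, \eqref{slender-cls}) is $\lesssim r_j h_j$, so $\eps\int|\nabla u_1^\delta|\lesssim\eps r_j h_j$ (plus an $O(\lambda a_j^2)$ contribution from the tangential ``cap'' where cross-sections fail to match). The anisotropy term vanishes identically for $\uu^{AF}$ (it is purely axial by \eqref{def-af}) and for $\uu^{SF}$ equals $Q\int|\uu^{SF}_\perp|^2=Q\int|\tilde\vv^{AF}_\perp|^2$, since $\tilde\vv^{SF}$ carries no tangential component in a transition cell. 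The localized stray term $\int|\tilde\vv^{AF}|^2$ is, I expect, $\lesssim r_j^4 h_j^{-1}|\ln\lambda|$, the logarithm being $\ln(a_j/r_j)\sim|\ln\lambda|$ cut off at the cell scale thanks to the neutrality \eqref{eq:29} guaranteed by the constant cross-section condition \eqref{constant-area}. Multiplying by the number of generation-$j$ cells, $\sim\ell^2 a_j^{-2}$, and choosing $h_j$ so as to equate the surface contribution with the $\gamma$-weighted stray contribution — namely $h_j\sim(\gamma|\ln\lambda|/\eps)^{1/2}\lambda^{3/4}a_j^{3/2}$ — makes each of them equal to $\eps^{1/2}\ell^2(\gamma|\ln\lambda|)^{1/2}\lambda^{5/4}a_j^{1/2}$, which decays geometrically in $j$; hence the total is controlled by the $j=0$ term. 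Imposing \eqref{total-height}, i.e. $h_0+2\sum_{j\ge1}h_j=1$, then forces $a_0\sim\eps^{1/3}\gamma^{-1/3}\lambda^{-1/2}|\ln\lambda|^{-1/3}$ — precisely the scale appearing in \eqref{def-rl} — and substituting back yields $\tilde E/\ell^2\lesssim\gamma^{1/3}\eps^{2/3}\lambda|\ln\lambda|^{1/3}$ for the core and transition cells; the closure cells and the cap terms are checked to be of the same order or smaller.

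It remains to turn the ``$\gamma$-weighting'' into genuine bounds in the two regimes. For $\uu^{AF}$ the true stray energy per cell is the unweighted $r_j^4 h_j^{-1}|\ln\lambda|$, and summing gives $\int|\tilde\vv^{AF}|^2/\ell^2\lesssim\gamma^{-2/3}\eps^{2/3}\lambda|\ln\lambda|^{1/3}$, which is harmless when $Q\gtrsim1$ since then $\gamma\sim1$; this settles the hard case. For $Q\lesssim1$ we use $\uu^{SF}$ instead: its stray energy is controlled by $\int|\tilde\vv^{SF}|^2\lesssim\int|\tilde\vv^{AF}|^4\ll\int|\tilde\vv^{AF}|^2$ (using $|\tilde\vv^{AF}|\ll1$, which also makes \eqref{def-tsf}–\eqref{def-tsfc} well defined), while its anisotropy energy is $Q\int|\tilde\vv^{AF}_\perp|^2\lesssim Q\gamma^{-2/3}\ell^2\eps^{2/3}\lambda|\ln\lambda|^{1/3}$, and $Q\gamma^{-2/3}=(1+Q)\gamma^{1/3}\lesssim\gamma^{1/3}$ for $Q\lesssim1$; combining these gives the claim. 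Along the way one verifies that, under the hypotheses $\gamma^{1/3}\eps^{2/3}|\ln\eps|^{1/3}\lesssim\lambda\lesssim\gamma^2|\ln\lambda|^2$ and the standing assumption on $\ell$, the parameter choices are self-consistent: $a_0\ll\ell$, the needles stay slender ($r_j\ll h_j$) for $j\le M$ which bounds $M\lesssim|\ln\eps|$, and the refinement layer $\sum_{j\ge1}h_j$ is a fixed fraction of the plate thickness so that $h_0>0$.

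The main obstacle is the per-cell stray-field estimate $\int_Z|\tilde\vv^{AF}|^2\lesssim r^4 h^{-1}|\ln\lambda|$. In a transition cell this amounts to analyzing the two-dimensional Neumann problem \eqref{def-taf} with right-hand side $\nabla\cdot\uu^{AF}=-2\partial_1\chi$, a surface charge of density $\lesssim|\rho_i'|\lesssim r/h$ supported on the nine needle circles, and showing that its homogeneous $\dot H^{-1}(K)$ norm on each slice is $\lesssim(r^2/h)^2\ln(a/r)$; here the neutrality \eqref{eq:29} is essential, since without it the logarithm would be replaced by a divergent term. In a closure cell the cross-sectional average of $\uu^{AF}$ is no longer zero, which is exactly why the construction splits $\tilde\vv^{AF}$ into a tangential part $-\nabla_\perp\phi_1$, estimated as above, and a normal part $-\partial_1\phi_2\,\ee_1$ carrying the macroscopic charge; the latter contributes only $\sim\ell^2\lambda^2 h_M$, negligible because $h_M$ is tiny. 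The remaining, more routine, ingredients are the bookkeeping of the cap terms at the interfaces between consecutive generations (and between the core and the first transition layer, where one cylinder of radius $r_0$ meets nine needles of radius $r_0/3$) and the verification that $\uu^{SF}\in\AA$ with the stated surface and anisotropy bounds; these parallel the corresponding estimates for the type-I superconductor construction in \cite{choksi08}, the new feature being the closure domains.
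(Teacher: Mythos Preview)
Your approach matches the paper's: localize via $\tilde E$, obtain the per-cell bounds (surface $\eps r_j h_j$, stray/anisotropy $\gamma r_j^4 h_j^{-1}|\ln\lambda|$, higher-order $r_j^6 h_j^{-3}$, closure $\lambda r_j^2 h_j$), balance surface against the $\gamma$-weighted term to fix $h_j\sim(\gamma|\ln\lambda|/\eps)^{1/2}r_j^{3/2}$, sum the resulting geometric series, and check the consistency conditions. The paper organizes the per-cell estimates as separate lemmas but the arithmetic is the same.

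Two points to tighten. First, in the soft case your chain $\int|\tilde\vv^{SF}|^2\lesssim\int|\tilde\vv^{AF}|^4\ll\int|\tilde\vv^{AF}|^2$ is not sharp enough: the last quantity is $\sim\gamma^{-2/3}\eps^{2/3}\lambda|\ln\lambda|^{1/3}\ell^2$, which overshoots the target $\gamma^{1/3}\eps^{2/3}\lambda|\ln\lambda|^{1/3}\ell^2$ by a factor $\gamma^{-1}$ when $Q\ll1$. What you actually need is the quantitative per-cell bound $\int|\tilde\vv^{AF}|^4\lesssim r_j^6/h_j^3$ (coming from $\|\tilde\vv^{AF}\|_\infty\lesssim r_j/h_j$, not merely $\ll1$), after which the termination criterion $r_j^2/h_j^2\lesssim\lambda/(\gamma|\ln\lambda|)$ together with the hypothesis $\lambda\lesssim\gamma^2|\ln\lambda|^2$ yields $r_j^6/h_j^3\lesssim\gamma r_j^4|\ln\lambda|/h_j$; this is how the paper closes the soft case. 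Second, the cap terms you worry about do not arise: the construction is arranged so that cross-sections match exactly at every cell interface. In particular, at the core/transition boundary the single core cylinder of radius $r_0$ meets the single large needle with $\rho_+(h)=r_0$, while the eight small needles satisfy $\rho_-(h)=0$ there; it is at the \emph{other} face of the transition cell that all nine needles have radius $r_0/3$, and these match the nine large needles of the next generation.
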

In view of \eqref{localize}, it is enough to give the estimate
\eqref{the-est} in terms of the localized energy $\tilde E$, defined
in \eqref{tilde-energy}, using the approximate stray fields $\tilde
\vv^{AF}$ and $\tilde \vv^{SF}$ defined in the previous section.
Before giving the proof of Lemma \ref{lem-needle}, we estimate the
restriction of $\tilde E$ onto a single transition or closure domain
cell. We note that as long as $h_0$ is bounded away from 1, the
estimates for the core cells trivially result in the same upper bounds
as for the transition cells. Therefore, in the following we do not
include explicit arguments for the core cells.

\medskip

We first give the estimates for $\uu^{AF}$:
\begin{lemma}[Energy of a transition cell for
  $\uu^{AF}$] \label{lem-int-AF} %
  Consider a transition cell $Z_{\rm trns}$ with height $h$ and
  maximum needle radius $r$. Suppose that the needle is slender, in
  the sense of $r \ll h$.  Then
\begin{align*} %
   \tilde E_{|Z_{\rm trns}}[\uu^{AF}, \tilde \vv^{AF}] \ \lesssim \
   \eps r h + \frac{r^4 |\ln \lambda|}{h},
\end{align*}
where $\uu^{AF}$ is defined in \eqref{def-af} and $\tilde \vv^{AF}$
is given by \eqref{def-taf}.
\end{lemma}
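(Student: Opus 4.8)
The plan is to estimate the three contributions to $\tilde E_{|Z_{\rm trns}}[\uu^{AF}, \tilde \vv^{AF}]$ separately: the interfacial term $\eps \int |\nabla u_1^\delta|$, the anisotropy term $Q\int|\uu^{AF}_\perp|^2$, and the stray field term $\int|\tilde\vv^{AF}|^2$. The anisotropy term vanishes identically because $\uu^{AF} = (\lambda - 2\chi)\ee_1$ is purely along the easy axis, so $\uu^{AF}_\perp = 0$; this is the whole point of the ``anisotropy-free'' construction. For the interfacial term, note that $u_1^{AF} = \lambda - 2\chi$ jumps by $2$ across the lateral surfaces of the nine needles, while the truncation in $u_1^\delta$ only trims the values by $O(\delta^2)$, so up to a harmless constant $\eps\int_{Z_{\rm trns}}|\nabla u_1^{AF,\delta}| \lesssim \eps \cdot (\text{total lateral area of the needles})$. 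Since each needle has radius $\lesssim r$ and height $\lesssim h$, and by \eqref{slender} the needle walls are nearly vertical, the total lateral area is $\lesssim r h$ (the number of needles is a fixed constant, namely nine). This gives the $\eps r h$ term.

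The bulk of the work is the stray field term $\int_{Z_{\rm trns}} |\tilde\vv^{AF}|^2$, where $\tilde\vv^{AF} = -\nabla_\perp\phi$ with $\phi$ solving the purely tangential Neumann problem \eqref{phi-int}. The key structural fact is that, on each tangential slice $\{x_1\}\times K$, the charge density $\nabla\cdot\uu^{AF} = \partial_1 u_1^{AF} = -2\partial_1\chi$ is supported on the (moving) needle boundaries, and because of \eqref{constant-area} it has zero mean over $K$ — this is exactly \eqref{eq:29} — so the tangential Poisson problem is solvable with a controlled solution. I would estimate $\|\nabla_\perp\phi\|^2_{L^2(K)}$ slice by slice. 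On a fixed slice, $\phi$ is the two-dimensional Newtonian-type potential (with Neumann boundary conditions on the torus cell $K$) generated by line densities on circles of radius $\rho_\pm(x_1)$; the crucial point is that the total charge in the slice is $\sim$ (rate of area change) $\sim r^2/h$ by \eqref{slender}, and the logarithmic nature of the 2D Green's function on a domain of size $a \sim r/\lambda^{1/2}$ gives a factor $|\ln(r/a)| \sim |\ln\lambda|$. Carefully: the $L^2$ norm of the gradient of a 2D potential generated by a charge distribution of total mass $\mu$ spread over scale $\sim r$ inside a cell of size $\sim a$ is $\lesssim \mu^2 |\ln(a/r)| \sim \mu^2|\ln\lambda|$; here $\mu \sim |\partial_1(\rho_\pm^2)| \lesssim r^2/h$, so $\|\nabla_\perp\phi\|^2_{L^2(K)} \lesssim (r^2/h)^2 |\ln\lambda|$ per slice. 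Integrating over $x_1 \in [0,h]$ gives $\int_{Z_{\rm trns}}|\tilde\vv^{AF}|^2 \lesssim h \cdot (r^2/h)^2|\ln\lambda| = r^4|\ln\lambda|/h$, the second term.

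The main obstacle will be making the slice-wise logarithmic estimate on $\|\nabla_\perp\phi\|_{L^2(K)}$ rigorous and uniform in $x_1$ — in particular handling the fact that eight of the nine needles sit at generic positions $x_\perp^{(i)}$ in the cell, and near the needle tips (the conical region \eqref{needle-tip}, where $\rho_- \to 0$) the geometry degenerates. Here one must check that the $\rho_-$-needles contribute a bounded constant times the $\rho_+$ contribution; since by \eqref{constant-area} and \eqref{slender} all radii and all their derivatives are $\lesssim r/h$ pointwise, and the needles are separated by distances $\sim a \gg r$, the cross-interactions between distinct needles contribute at most the same logarithmic order, and the self-interaction of each needle is controlled by the explicit radially-symmetric potential of an annular charge, which one can write down and whose gradient-$L^2$ norm is $O(\mu^2\log)$. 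I would follow \cite{choksi08} for the precise form of this direct computation, the only difference being bookkeeping of the extra (constant number of) needles and the conical tips, which only affect constants. Once the per-slice bound $\|\nabla_\perp\phi(x_1,\cdot)\|^2_{L^2(K)} \lesssim (r^2/h)^2|\ln\lambda|$ is established, integration in $x_1$ and summing the three contributions yields the claimed estimate.
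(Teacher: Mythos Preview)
Your proposal is correct and follows essentially the same route as the paper: the interfacial term is bounded by the total lateral needle area $\lesssim rh$, the anisotropy term vanishes, and the stray field is estimated slice-by-slice via the explicit two-dimensional logarithmic potential of the circular charges. The paper makes this last step concrete by writing $\phi = \phi^{(0)} + \phi^{(1)}$, where $\phi^{(0)}$ is the explicit sum of free-space logarithmic potentials $2\rho_i\rho_i'\,H(|x_\perp - x_\perp^{(i)}|-\rho_i)\ln(\rho_i/|x_\perp - x_\perp^{(i)}|)$ (whose $L^2$-gradient one computes directly to get $r^4|\ln\lambda|/h$), and then shows via a standard elliptic estimate that the Neumann boundary correction $\phi^{(1)}$ contributes only $r^4/h$, i.e.\ is lower order by the missing logarithm --- this cleanly disposes of the cross-interaction and boundary issues you flag as obstacles.
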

\begin{proof}
  By the definition of $\uu^{AF}$, we immediately get the following
  estimate for the surface energy,
 \begin{align*}
   \eps \int_{Z_{\rm trns}}|\nabla u_1^{\delta,AF}| \ \sim \
   \int_{Z_{\rm trns}} |\nabla \chi| \ \lesssim \ \eps r h,
 \end{align*}
 where we used the assumption $r \ll h$. It remains to give the
 estimate for the stray field part of the energy. In view of
 \eqref{def-af}, we get
 \begin{align}
   \label{divAF}
   \nabla \cdot \uu^{AF} \ \lupref{def-af}= \ \partial_1 u_1^{AF} \ =
   \ -2 \sum_{i=1}^9 \delta(\rho_i(x_1) - |x_\perp - x_\perp^{(i)}|)
   \partial_1 \rho_i(x_1),
 \end{align}
 where $\delta$ is the Dirac $\delta$--function.  We expect a
 logarithmic blow--up near each of the needles. This motivates to
 decompose $\phi$, defined in \eqref{def-taf}, by $\phi \ = \
 \phi^{(0)} + \phi^{(1)}$, where $\phi^{(0)}$ is given by
 \begin{align} \label{psipsi-int} %
   \phi^{(0)}(x_1, x_\perp) \ = \ \sum_{i=1}^9 2 \rho_i(x_1)
   \ \partial_1 \rho_i(x_1) H\big( |x_\perp^{(i)} - x_\perp| -
   \rho_i(x_1) \big) \ln \frac {\rho_i(x_1)}{|x_\perp -
     x_\perp^{(i)}|}
 \end{align}
 and where, as before, $H$ is the Heaviside function.  Let us for the
 moment assume that the leading order contribution to the stray field
 energy is due to $\phi^{(0)}$. In view of \eqref{psipsi-int}, it is
 easy to calculate
 \begin{align} \label{D-cell} %
   \int_{Z_{\rm trns}} |\nabla_\perp \phi^{(0)}|^2 \ \lesssim \ \frac
  {r^4 |\ln \lambda| }h %
   && \text{and} && %
   \int_{Z_{\rm trns}} |\nabla_\perp \phi^{(0)}|^4 \ \lesssim \ \frac
   {r^6}{h^3}.
 \end{align}
 We then get
 \begin{align*}
   \int_{Z_{\rm trns}} |\tilde \vv^{AF}|^2 \ %
   \lupref{def-taf}= \ \int_{Z_{\rm trns}} |\nabla_\perp \phi|^2 \ %
   \lesssim \ \ \int_{Z_{\rm trns}} |\nabla_\perp \phi^{(0)}|^2 \ %
   \upref{D-cell}\lesssim \ \frac {r^4 |\ln \lambda|}h.
 \end{align*}
 It remains to check that the energy contribution related to
 $\phi^{(1)}$ is of lower order.  Indeed, from the definition of
 $\phi$ and $\phi^{(0)}$ it follows that $\phi^{(1)}$ satisfies
 \begin{align*}
  \Delta_\perp \phi^{(1)} \ = \ 0 \quad \text{in $Z_{\rm trns}$} %
   && \text{and} && 
   \partial_{\nu_\perp} \phi^{(1)} \ = \ - \partial_{\nu_\perp}
   \phi^{(0)} \quad \text{on $\partial_\perp Z_{\rm trns}$}.
 \end{align*}
 In particular, at the boundary of the cell we have
 \begin{align} \label{phi1-bdry} %
   |\partial_{\nu_\perp} \phi^{(1)}| \ \lesssim \ \sum_{i=1}^9
   \frac{\rho_i |\partial_1 \rho_i|}{|\rho_i - x_\perp^{i}|} \
   \lupref{slender}\lesssim \ \frac{r^2}{ha} %
   && \text{on $\partial_\perp Z$}.
 \end{align}
 It then follows by standard elliptic estimates that
 \begin{align*}
  \int_{Z_{\rm trns}} |\nabla_\perp \phi^{(1)}|^2 \ %
   \lesssim \ h \int_{\partial_\perp Z_{\rm trans}}   
   |\partial_{\nu_\perp} \phi^{(1)}|^2 %
   \lupref{phi1-bdry}\lesssim \ \frac {r^4}h \ %
   \ll \ \frac {r^4 |\ln \lambda|}h.
 \end{align*}
 This completes the proof of Lemma \ref{lem-int-AF}.
\end{proof}

\begin{lemma}[Energy of a closure domain cell for
 $\uu^{AF}$] %
Consider a closure cell ${Z_{\rm cls}}$ with height $h$ and maximum needle radius $r$.
 Suppose that the needle is slender in the sense of $r \ll h$.  Then
\begin{align} \label{ext-af} %
  \begin{gathered}
    \tilde E_{|{Z_{\rm cls}}}[\uu^{AF}, \tilde \vv^{AF}] \ %
    \lesssim \ \eps r h + \frac{r^4 |\ln \lambda|}h + \lambda r^2 h.
  \end{gathered}
\end{align}
\label{lem-cls-AF}
\end{lemma}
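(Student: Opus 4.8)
The plan is to transcribe the proof of Lemma~\ref{lem-int-AF}, estimating separately the three contributions to $\tilde E_{|Z_{\rm cls}}[\uu^{AF},\tilde\vv^{AF}]$: the interfacial term, the anisotropy term, and the stray field term. The one genuinely new ingredient is the normal component $\partial_1\phi_2$ of the approximate stray field \eqref{def-vaf}, which produces the extra summand $\lambda r^2 h$ absent from the transition-cell estimate.

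For the interfacial term, I note that $u_1^{\delta,AF}$ differs from $-2\chi$ only by an additive constant (see \eqref{def-af}, \eqref{chicls}), so that $\eps\int_{Z_{\rm cls}}|\nabla u_1^{\delta,AF}|\sim\eps\int_{Z_{\rm cls}}|\nabla\chi|$ is bounded by $\eps$ times the area of the boundary of the single needle $\{\,|x_\perp|\le\rho_{\rm cls}(x_1)\,\}$; the lateral part of this surface has area $\lesssim rh$ under the slenderness assumption $r\ll h$ (together with $|\rho_{\rm cls}'|\lesssim r/h$ from \eqref{slender-cls}), and the cap at $x_1=h$ contributes only $\sim r^2\ll rh$, giving the $\eps r h$ term. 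The anisotropy term vanishes identically, since $\uu^{AF}=u_1^{AF}\ee_1$ has $\uu^{AF}_\perp\equiv 0$.

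For the stray field, I first observe that the two pieces $-\nabla_\perp\phi_1$ and $-(\partial_1\phi_2)\,\ee_1$ of $\tilde\vv^{AF}$ in \eqref{def-vaf} are pointwise orthogonal, so $\int_{Z_{\rm cls}}|\tilde\vv^{AF}|^2=\int_{Z_{\rm cls}}|\nabla_\perp\phi_1|^2+\int_{Z_{\rm cls}}|\partial_1\phi_2|^2$. The $\phi_1$-term is handled exactly as in Lemma~\ref{lem-int-AF}: the right-hand side of \eqref{def-pe} is, after subtracting its $x_\perp$-average, a surface charge on the single needle, so one writes $\phi_1=\phi_1^{(0)}+\phi_1^{(1)}$ with $\phi_1^{(0)}$ the explicit logarithmic potential, obtains $\int_{Z_{\rm cls}}|\nabla_\perp\phi_1^{(0)}|^2\lesssim r^4|\ln\lambda|/h$ as in \eqref{D-cell} (using $|\rho_{\rm cls}'|\lesssim r/h$ and the conical tip to control the region where $\rho_{\rm cls}$ is small, there the factor $\rho_{\rm cls}^2$ killing the logarithmic singularity), and absorbs the harmonic correction $\phi_1^{(1)}$ via elliptic estimates applied to its Neumann data on $\partial_\perp Z_{\rm cls}$. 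For $\phi_2$, I integrate \eqref{def-p2} once: its right-hand side equals $\partial_1\bigl(\tfrac1{a^2}\int_K u_1^{AF}(x_1,\hat x_\perp)\,d\hat x_\perp\bigr)$ because $\uu^{AF}_\perp\equiv0$, so the boundary condition $\partial_1\phi_2=0$ at $x_1=0$ gives $\partial_1\phi_2(x_1)=\tfrac1{a^2}\int_K u_1^{AF}(x_1,\hat x_\perp)\,d\hat x_\perp-\lambda$; by \eqref{eq:r} this equals $-\lambda\rho_{\rm cls}^2(x_1)/r^2$, hence $|\partial_1\phi_2|\le\lambda$, and since $\phi_2$ depends on $x_1$ only, $\int_{Z_{\rm cls}}|\partial_1\phi_2|^2\le\lambda^2a^2h\sim\lambda r^2h$, again by \eqref{eq:r}. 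Adding the three estimates yields \eqref{ext-af}.

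The step requiring the most care, and the only one not already present in Lemma~\ref{lem-int-AF}, is the identification of $\partial_1\phi_2$: recognizing that the right-hand side of \eqref{def-p2} is an exact $x_1$-derivative so that $\phi_2$ integrates in closed form, and that the resulting normal flux is $O(\lambda)$ rather than $O(1)$ because the needle occupies only a $\lambda$-fraction of the cell cross-section. Everything else is a line-by-line transcription of the transition-cell argument with the nine needles replaced by one.
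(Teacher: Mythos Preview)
Your proof is correct and follows essentially the same approach as the paper's: split $\phi_1=\phi_1^{(0)}+\phi_1^{(1)}$ with the explicit logarithmic potential, reuse the transition-cell estimates for $\phi_1^{(0)}$ and the harmonic correction, and show $|\partial_1\phi_2|\lesssim\lambda$ to obtain the new $\lambda r^2 h$ term. Your treatment is in fact slightly more explicit than the paper's in two places---you note the pointwise orthogonality of $-\nabla_\perp\phi_1$ and $-(\partial_1\phi_2)\,\ee_1$ to split $\int|\tilde\vv^{AF}|^2$ cleanly, and you integrate \eqref{def-p2} in closed form to get the exact expression $\partial_1\phi_2=-\lambda\rho_{\rm cls}^2/r^2$---whereas the paper simply asserts $|\partial_1\varphi_2|\lesssim\lambda$ from \eqref{def-p2} and \eqref{divAF}.
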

\begin{proof}
  The estimate for the energy in the closure domain cells proceeds
  similarly to that for the transition cells. We decompose $\phi_1 \ =
  \ \phi_1^{(0)} + \phi_1^{(1)}$, where the function $\phi^{(0)}$ is
  given by
 \begin{align*} %
   \phi_1^{(0)}(x_1, x_\perp) \ = \ 2 \rho_{\rm cls}(x_1) \ \partial_1
   \rho_{\rm cls}(x_1) H \big(|x_\perp| - \rho_{\rm cls}(x_1) \big)
   \ln \frac {\rho_{\rm cls}(x_1)}{|x_\perp|}.
\end{align*}
 As in the proof of Lemma \ref{lem-int-AF}, it can be shown that the
 contribution of the energy related to $\phi^{(1)}$ can be
 neglected. Hence, in the following, we only give the estimates for
 the functions $\phi_1^{(0)}$ and $\phi_2$. A straightforward
 calculation then yields that we have the following bound on the
 surface energy
 \begin{align*}
   \eps \int_{Z_{\rm cls}}|\nabla u_1^{\delta,AF}| \ \lesssim \ \eps r
   h,
 \end{align*}
 where we used the assumption $r \ll h$. Similarly to the arguments in
 the previous lemma, one can also show that
 \begin{align} \label{D-cell-1} %
   \int_{Z_{\rm cls}} |\nabla_\perp \phi_1^{(0)}|^2 \ \lesssim \ \frac
   {r^4 |\ln \lambda| }h %
   && \text{and} && %
   \int_{Z_{\rm cls}} |\nabla_\perp \phi_1^{(0)}|^4 \ \lesssim \ \frac
   {r^6}{h^3}.
 \end{align}
 Furthermore, from \eqref{def-p2} and \eqref{divAF} we get
 $|\partial_1 \varphi_2| \lesssim \lambda$, and so so
 \begin{align} \label{D-cell-2} %
   \int_{Z_{\rm cls}} |\partial_1 \phi_2|^2 \ \lesssim \ \lambda r^2 h, %
\end{align}
In view of \eqref{D-cell-1} and \eqref{D-cell-2}, the stray field
energy is estimated by
\begin{align} \label{stray-cls} %
  \int_{Z_{\rm cls}} |\tilde \vv^{AF}|^2 \lupref{def-taf}\lesssim \
  \int_{Z_{\rm cls}} |\nabla_\perp \phi^{(0)}|^2 + \int_{Z_{\rm cls}}
  |\partial_1 \varphi_2|^2 \ \lesssim \ \frac {r^4 |\ln \lambda| }h +
  \lambda r^2 h.
\end{align}
This concludes the proof of Lemma \ref{lem-cls-AF}.
\end{proof}

We next give the estimates for $\uu^{SF}$:
\begin{lemma}[Energy of a transition cell for
  $\uu^{SF}$] \label{lem-int-SF} %
  Consider a transition cell ${Z_{\rm trns}}$ with height $h$ and
  maximum needle radius $r$. Suppose that the needle is slender, in
  the sense of $r \ll h$. Then
  \begin{align} \label{int-sf} %
    \tilde E_{|Z_\mathrm{trns}}[\uu^{SF}, \tilde \vv^{SF}] \ \lesssim
    \ \eps r h + \frac{Q r^4 |\ln \lambda|}{h} + \frac{r^6}{h^3}.
\end{align}
\end{lemma}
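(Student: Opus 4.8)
The plan is to estimate the surface, anisotropy and stray-field parts of $\tilde E_{|Z_{\rm trns}}[\uu^{SF},\tilde\vv^{SF}]$ separately, re-using the cell estimates for $\phi = \phi^{(0)}+\phi^{(1)}$ from the proof of Lemma~\ref{lem-int-AF}. The starting point is the explicit form of the correction inside a transition cell. Since $\uu^{AF} + (1-\lambda)\chi_\Omega\ee_1 = (1-2\chi)\ee_1$ in $\Omega$, formula \eqref{def-tsf} gives $\tilde\vv^{SF} = g\,(1-2\chi)\ee_1$ in $Z_{\rm trns}$, where $g := 1 - \sqrt{1-|\tilde\vv^{AF}|^2}$ and $\tilde\vv^{AF} = -\nabla_\perp\phi$; in particular $\tilde\vv^{SF}_\perp = 0$ and $\uu^{SF}_\perp = \tilde\vv^{AF}_\perp = -\nabla_\perp\phi$. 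First I would establish the pointwise bound $|\tilde\vv^{AF}| = |\nabla_\perp\phi| \lesssim r/h \ll 1$: for $\phi^{(0)}$ each summand in \eqref{psipsi-int} contributes $\lesssim \rho_i|\rho_i'|/|x_\perp - x_\perp^{(i)}| \lesssim |\rho_i'| \lesssim r/h$ by \eqref{slender} outside the $i$-th needle and vanishes inside, while for $\phi^{(1)}$ it follows from \eqref{phi1-bdry} and an interior $L^\infty$-estimate for the harmonic field $\nabla_\perp\phi^{(1)}$, giving $\|\nabla_\perp\phi^{(1)}\|_{L^\infty(Z_{\rm trns})} \lesssim r^2/h$. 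Consequently $0 \le g \lesssim |\tilde\vv^{AF}|^2$ and $|\tilde\vv^{SF}| \lesssim |\tilde\vv^{AF}|^2$ pointwise.

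The anisotropy and stray-field terms then follow directly from Lemma~\ref{lem-int-AF}. Indeed, $Q\int_{Z_{\rm trns}}|\uu^{SF}_\perp|^2 = Q\int_{Z_{\rm trns}}|\nabla_\perp\phi|^2 \lesssim \tfrac{Q r^4|\ln\lambda|}{h}$ is precisely the bound on $\int_{Z_{\rm trns}}|\tilde\vv^{AF}|^2$ proved there (first estimate of \eqref{D-cell}, plus the subordinate $\phi^{(1)}$-contribution), and $\int_{Z_{\rm trns}}|\tilde\vv^{SF}|^2 \lesssim \int_{Z_{\rm trns}}|\tilde\vv^{AF}|^4 = \int_{Z_{\rm trns}}|\nabla_\perp\phi|^4 \lesssim \int_{Z_{\rm trns}}|\nabla_\perp\phi^{(0)}|^4 + \int_{Z_{\rm trns}}|\nabla_\perp\phi^{(1)}|^4 \lesssim \tfrac{r^6}{h^3}$, using the second estimate of \eqref{D-cell} for $\phi^{(0)}$ and $\int|\nabla_\perp\phi^{(1)}|^4 \le \|\nabla_\perp\phi^{(1)}\|_{L^\infty}^2 \int|\nabla_\perp\phi^{(1)}|^2 \lesssim (r^2/h)^2(r^4/h) \ll r^6/h^3$ for the harmonic part.

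The main work is the surface term. From the above, $u_1^{SF} = (\lambda - 2\chi) - g(1-2\chi)$, so its distributional gradient splits into (i) a jump part on the needle interfaces $\partial\{\chi=1\}$, of amplitude $2 - g^{\rm in} - g^{\rm out}$ with $g^{\rm in},g^{\rm out} \lesssim (r/h)^2$, hence $\lesssim 1$, supported on a set of $\mathcal H^2$-measure $\int_{Z_{\rm trns}}|\nabla\chi| \lesssim rh$, and (ii) an absolutely continuous part which, off the interfaces where $\chi$ is locally constant, equals $-(1-2\chi)\nabla g$ with $|\nabla g| \lesssim |\tilde\vv^{AF}|\,|\nabla\tilde\vv^{AF}|$. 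For (ii) I would use the correlated decay $|\nabla_\perp\phi^{(0)}|\,|\nabla_\perp^2\phi^{(0)}| \lesssim \rho_i^2|\rho_i'|^2/|x_\perp - x_\perp^{(i)}|^3$ (with the conical profiles \eqref{needle-tip} ensuring $\rho_\pm'' = 0$ a.e., so the $x_1$-derivatives contribute even less); integrating over a tangential slice gives $\lesssim \rho_i|\rho_i'|^2 \lesssim r(r/h)^2$, hence $\lesssim r^3/h$ after integrating in $x_1$, which is $\ll rh$ since $r \ll h$, the $\phi^{(1)}$-part being smaller still; moreover the $\delta$-truncation $u_1 \mapsto u_1^\delta$ clips $u_1^{SF}$ to the two constant values $-\delta^2 + \lambda$ and $-2+\delta^2+\lambda$ wherever $g < \delta^2$, so these $g$-contributions only persist on a thin neighbourhood of the needles. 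Since truncation is $1$-Lipschitz, $|Du_1^{\delta,SF}| \le |Du_1^{SF}|$, and we conclude $\eps\int_{Z_{\rm trns}}|\nabla u_1^{\delta,SF}| \lesssim \eps rh$. Summing the three bounds yields \eqref{int-sf}.

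The step I expect to be the main obstacle is this surface estimate: making rigorous the jump/a.c.\ decomposition of $Du_1^{SF}$ for a $BV$ function whose ``smooth'' part $g$ is only as regular as $|\nabla_\perp\phi|^2$ --- which degenerates near the needle axes and at the conical tips --- and controlling $\int_{Z_{\rm trns}}|\nabla g|$ by $rh$. This uses the slenderness hypothesis $r \ll h$ in an essential way, together with the bounds $|\rho_\pm'| \lesssim r/h$ from \eqref{slender} and the precise $|x_\perp - x_\perp^{(i)}|^{-3}$ decay of $|\nabla_\perp\phi^{(0)}|\,|\nabla_\perp^2\phi^{(0)}|$ away from the needles; the interplay with the truncation is what keeps the slowly-decaying (logarithmic) tails from entering the surface energy. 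Verifying $|\tilde\vv^{AF}| \ll 1$, needed both to make $\tilde\vv^{SF}$ well defined and to linearise $1 - \sqrt{1-s}$, is a secondary point handled as above.
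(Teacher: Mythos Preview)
Your proposal is correct and the anisotropy and stray-field estimates match the paper verbatim. The only divergence is in the surface term, where you do considerably more work than necessary. The paper dispatches this term in one line by observing that $u_1^{\delta,SF} = u_1^{\delta,AF}$ \emph{exactly}: since $|\tilde\vv^{AF}| = |\nabla_\perp\phi| \lesssim r/h \ll 1$ (in particular $< \delta$ once $r/h$ is small enough), one has $g = 1-\sqrt{1-|\tilde\vv^{AF}|^2} < \delta^2$ throughout the cell, so $m_1^{SF} = (1-g)(1-2\chi)$ satisfies $|m_1^{SF}| > 1-\delta^2$ everywhere and the truncation \eqref{eq:2} sends it to the same piecewise constant function as $m_1^{AF}$. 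You actually state this clipping observation at the end of your argument, but you treat the set $\{g \ge \delta^2\}$ as a nonempty ``thin neighbourhood of the needles'' and therefore still feel obliged to bound $\int |\nabla g|$; under the slenderness hypothesis that set is empty, and your entire jump/absolutely-continuous decomposition and the $|\nabla_\perp\phi^{(0)}|\,|\nabla^2_\perp\phi^{(0)}|$ estimate can be dropped. What your longer route does buy is robustness: it would still work if one used the untruncated $u_1$ in the surface term, whereas the paper's one-liner relies essentially on the $\delta$-cutoff built into the definition \eqref{eq:5} of $E$.
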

\begin{proof}
  We use the notation of the proof of Lemma \ref{lem-int-AF}, in
  particular, we decompose $\phi = \phi^{(0)} + \phi^{(1)}$, with
  $\varphi^{(0)}$ being the dominant term. We first note that by
  construction $|\nabla_\perp \varphi| \lesssim r/h \ll 1$ (see
  \eqref{psipsi-int}), so $\uu^{SF}$ is well-defined in $Z_{trns}$.
  Therefore, the surface energy can be estimated as before:
  \begin{align}
    \label{surfSF}
   \eps \int_{Z_{\rm trns}}|\nabla u_1^{\delta,SF}| = \eps
   \int_{Z_{\rm trns}}|\nabla u_1^{\delta,AF}| \ \lesssim \ \eps r h. 
 \end{align}
 The estimate for the anisotropy energy of $\uu^{SF}$ follows from
 \begin{align} \label{aniSF} %
   \int_{Z_{\rm trns}} |\uu_\perp^{SF}|^2 \ %
   \upref{def-tsf}= \ \int_{Z_{\rm trns}} |\nabla_\perp \phi|^2 \ %
   \lesssim \ \int_{Z_{\rm trns}} |\nabla_\perp \phi^{(0)}|^2 \ %
   \lesssim \ \ \frac {r^4 |\ln \lambda|}h.
 \end{align}
 In order to estimate the stray field energy of $\uu^{SF}$, we note
 that in view of \eqref{def-tsf}, we have $|\tilde \vv^{SF}(x)| \
 \lesssim \ |\nabla_\perp \phi(x)|^2$. Hence
 \begin{align*}
   \int_{Z_{\rm trns}} |\tilde \vv^{SF}|^2 \ %
   \lesssim \ \int_{Z_{\rm trns}} |\nabla_\perp \phi|^4 \ %
   \lesssim \ \int_{Z_{\rm trns}} |\nabla_\perp \phi^{(0)}|^4 \ %
   &\upref{D-cell}\lesssim \ \frac{r^6}{h^3}.
 \end{align*}
 The above estimates together yield \eqref{int-sf}.
\end{proof}
\begin{lemma}[Energy of a closure domain cell for $\uu^{SF}$] \label{lem-cls-SF} %
  Consider a closure cell ${Z_{\rm cls}}$ with height $h$ and maximum needle radius $r$.
  Suppose that the needle is slender in the sense of $r \ll h$.  Then
\begin{align*}
  \begin{gathered}
    \tilde E_{|{Z_{\rm cls}}}[\uu^{SF}, \tilde \vv^{SF}] \ %
    \lesssim \ \eps r h + \frac{Q r^4 |\ln \lambda|}h +
    \frac{r^6}{h^3} + \lambda r^2 h.
  \end{gathered}
\end{align*}
\end{lemma}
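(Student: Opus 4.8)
The plan is to combine the two refinements we have already made to the basic closure-cell estimate: the $\phi_2$-treatment of the normal component of the approximate stray field from the proof of Lemma \ref{lem-cls-AF}, and the trading of stray-field energy for anisotropy energy (plus a higher-order stray-field remainder) from the proof of Lemma \ref{lem-int-SF}. As always we work under slenderness $r \ll h$; by \eqref{slender-cls} and the explicit form of $\phi_1^{(0)}$ this gives $|\nabla_\perp \phi_1| \lesssim r/h \ll 1$, which in particular makes $\tilde\vv^{SF}$ in \eqref{def-tsfc}, hence $\uu^{SF}$ in \eqref{def-sf}, well defined on $Z_{\rm cls}$.

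First I would unwind the definitions on $Z_{\rm cls}$. Since $Z_{\rm cls} \subset \Omega$ we have $\uu^{AF} + (1-\lambda)\chi_\Omega \ee_1 = (1-2\chi)\ee_1$, a vector along $\ee_1$; writing $s := 1 - \sqrt{1 - |\tilde\vv^{AF}_\perp|^2}$ and $\tilde v_1^{AF} = -\partial_1 \phi_2$ as in \eqref{def-vaf}, the formulas \eqref{def-tsfc}--\eqref{def-sf} collapse to
\[
  \uu^{SF}_\perp = -\nabla_\perp \phi_1, \qquad u_1^{SF} = \sqrt{1 - |\tilde\vv^{AF}_\perp|^2}\,(1-2\chi) - (1-\lambda), \qquad \tilde\vv^{SF} = \bigl( s\,(1-2\chi) + \tilde v_1^{AF} \bigr)\ee_1 .
\]
Note that the $\tilde v_1^{AF}$ contributions cancel out of $u_1^{SF}$ but not out of $\tilde\vv^{SF}$, and that $|s| \lesssim |\nabla_\perp\phi_1|^2$.

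Then I would estimate the three pieces of $\tilde E_{|Z_{\rm cls}}[\uu^{SF},\tilde\vv^{SF}]$ separately. For the surface term, the jump set of $u_1^{\delta,SF}$ is the needle interface $\{|x_\perp| = \rho_{\rm cls}(x_1)\}$, exactly as for $u_1^{\delta,AF}$ --- the prefactor $\sqrt{1-|\tilde\vv^{AF}_\perp|^2}$ only perturbs the jump height by $O((r/h)^2)$ and is smooth off the needle --- so $\eps\int_{Z_{\rm cls}}|\nabla u_1^{\delta,SF}| \lesssim \eps r h$ just as in Lemma \ref{lem-cls-AF}. For the anisotropy term, $\uu^{SF}_\perp = -\nabla_\perp\phi_1$, so the decomposition $\phi_1 = \phi_1^{(0)} + \phi_1^{(1)}$ of Lemma \ref{lem-cls-AF} (with the harmonic remainder absorbed, as there, via its small Neumann data) together with the $L^2$ bound in \eqref{D-cell-1} gives $Q\int_{Z_{\rm cls}}|\uu^{SF}_\perp|^2 \lesssim Q r^4 |\ln\lambda| / h$. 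For the stray-field term, $|s\,(1-2\chi)| \lesssim |\nabla_\perp\phi_1|^2$ while $|\tilde v_1^{AF}| = |\partial_1\phi_2| \lesssim \lambda$ from \eqref{def-p2}, so $|\tilde\vv^{SF}|^2 \lesssim |\nabla_\perp\phi_1|^4 + |\partial_1\phi_2|^2$, and the $L^4$ bound in \eqref{D-cell-1} together with \eqref{D-cell-2} yields $\int_{Z_{\rm cls}}|\tilde\vv^{SF}|^2 \lesssim r^6/h^3 + \lambda r^2 h$. Summing the three bounds gives the claim.

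The only input not already isolated in Lemmas \ref{lem-cls-AF} and \ref{lem-int-SF} is the coexistence, inside $\tilde v_1^{SF}$, of the $O(|\nabla_\perp\phi_1|^2)$ piece and the $\phi_2$-piece; since these enter additively and are estimated separately the bookkeeping is routine and no cross term appears. I expect the only point needing a little care is the (re)verification that the harmonic remainder $\phi_1^{(1)}$ is lower order not just in $L^2$ but also in $L^4$: from $|\partial_{\nu_\perp}\phi_1^{(1)}| \lesssim r^2/(ha)$ and a standard two-dimensional elliptic estimate on the slice $K$ one gets $\int_{Z_{\rm cls}}|\nabla_\perp\phi_1^{(1)}|^4 \lesssim r^8/(h^3 a^2) \ll r^6/h^3$, the last step using $r^2/a^2 = \lambda/(2\pi) \ll 1$; this is essentially the same reduction used implicitly in the proof of Lemma \ref{lem-int-SF}.
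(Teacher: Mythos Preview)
Your argument is correct and follows essentially the same approach as the paper's proof: well-definedness via $|\nabla_\perp\phi_1|\lesssim r/h$, surface and anisotropy bounds as in Lemmas \ref{lem-cls-AF} and \ref{lem-int-SF}, and the stray-field bound $|\tilde\vv^{SF}|^2 \lesssim |\nabla_\perp\phi_1|^4 + |\partial_1\phi_2|^2$ followed by \eqref{D-cell-1}--\eqref{D-cell-2}. Your explicit unwinding of the definitions and the extra care with the $L^4$ control of the harmonic remainder $\phi_1^{(1)}$ are in fact more detailed than what the paper writes out, which simply cites \eqref{D-cell-1} and \eqref{D-cell-2} and leaves that reduction implicit.
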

\begin{proof}
  As in Lemma \ref{lem-int-SF}, the function $\uu^{SF}$ is
  well-defined, since $|\nabla_\perp \phi_1| \lesssim r/h \ll
  1$. Using the slenderness condition and the fact that by
  construction $|\partial_1 \varphi_2| \lesssim \lambda \ll 1$, we
  again get \eqref{surfSF}.  Similarly, \eqref{aniSF} also holds for
  the anisotropy energy.  Finally, in view of the definitions
  \eqref{def-vaf}, \eqref{def-pe}, \eqref{def-p2} and \eqref{def-sf}
  and in view of \eqref{D-cell-1} and \eqref{D-cell-2}, the stray
  field energy is estimated by
\begin{align*}
  \int_{Z_{\rm cls}} |\tilde \vv^{SF}|^2 \ &\lesssim \ \int_{Z_{\rm
      cls}} |\nabla_\perp \phi_1|^4 + \int_{Z_{\rm cls}} |\partial_1
  \phi_2|^2 \ \lupupref{D-cell-1}{D-cell-2}\lesssim \ \frac {r^6}{h^3}
  + \lambda r^2 h.
\end{align*}
The above estimates together conclude the proof of Lemma
\ref{lem-cls-SF}.
\end{proof}

We are ready to give the proof of Lemma \ref{lem-needle}.

\begin{proof}[Proof of Lemma 3.8] 
  In view of \eqref{localize}, it is enough to give an estimate in
  terms of the energy $\tilde E$ instead of $E$. As before, consider
  either a transition or a closure cell $Z$ with dimensions $h, a$ and
  the maximum needle radius $r$. The energy of the configurations
  $\uu^{AF}, \tilde \vv^{AF}$ or $\uu^{SF}, \tilde \vv^{SF}$ are
  estimated in the previous Lemmas. In the following, we will write
  $\uu, \tilde \vv$ for a configuration representing either $\uu^{AF},
  \tilde \vv^{AF}$ or $\uu^{SF}, \tilde \vv^{SF}$. Let us for a moment
  assume that all needles are slender, in the sense of $r \ll h$. In
  view of Lemmas \ref{lem-int-AF}--\ref{lem-cls-SF}, we then have for
  any transition cell $Z_{\rm trns}$ and any closure domain cell
  $Z_{\rm cls}$ with the above dimensions the following estimate:
\begin{align}  \label{ic-1} %
     \tilde E_{|Z_{\rm trns}}[\uu, \tilde \vv] \lesssim \eps r h +
     \frac{\gamma r^4 |\ln
      \lambda|}{h} + \frac{r^6}{h^3}  %
     && \text{and} && %
     \tilde E_{|{Z_{\rm cls}}}[\uu, \tilde \vv] %
     \lesssim \eps r h + \frac{\gamma r^4 |\ln \lambda|}h +
     \frac{r^6}{h^3} + \lambda r^2 h.
 \end{align}
 Balancing the first two terms in the right-hand sides of \eqref{ic-1}
 yields the optimal height for both transition and closure domain
 cells as a function of the maximum needle radius $r$:
 \begin{align} \label{height} %
  h \ = \  \gamma^{\frac 12} \eps^{-\frac 12} r^{\frac 32} |\ln
     \lambda|^{\frac 12}.
\end{align}
With this choice of $h$ for each cell, the estimates in \eqref{ic-1}
turn into
\begin{align} \label{ic-2} %
  \tilde E_{|Z_{\rm trns}}[\uu, \tilde \vv] \ \lesssim \ %
  \gamma^{\frac 12} \eps^{\frac 12} r^{\frac 52} |\ln \lambda|^{\frac
    12} + \frac{r^6}{h^3} && \text{and} && \tilde E_{|Z_{\rm
      cls}}[\uu, \tilde \vv] \ \lesssim \ %
  \gamma^{\frac 12} \eps^{\frac 12} r^{\frac 52} |\ln \lambda|^{\frac
    12} + \frac{r^6}{h^3} + \lambda r^2 h.
\end{align}
Given any initial height $h_1$ for the first generation of cells, we
use \eqref{height} to correspondingly choose the width of the first
generation of cells. The width of the following generation of cells is
inductively defined by \eqref{width} and \eqref{height}. We terminate
the algorithm after $M$ generations of cells as soon as closure domain
cells are not too expensive in the sense of
\begin{align} \label{termination} %
  \lambda r_M^2 h_M \ \ \lesssim \ \frac{\gamma r_M^4 |\ln
    \lambda|}{h_M}.
\end{align}
We choose the initial height $h_1 = h_1(M)$ such that
\eqref{total-height} is satisfied, i.e.  such that the cells exactly
cover $\Omega$. Since in view of \eqref{width} and \eqref{height},
$h_j$ is a geometric sum, as expected we must have $h_1 \sim 1$
independently of $M$. In view of \eqref{height} and
\eqref{termination}, we then get the following estimate for the needle
radius for the first and last generation of cells
\begin{align} \label{ratio} %
  r_1 \ \sim \ \frac{\eps^{\frac 13}}{\gamma^{\frac 13} |\ln \lambda|^{\frac 13}} %
  && \text{and} && %
  r_M \ \sim \ \frac{\eps}{\lambda}.
\end{align}
Note that the termination criterion \eqref{termination} is equivalent
to $r_M^2 /h_M^2 \ \lesssim \ \frac{\lambda}{\gamma |\ln \lambda|}$.
Since in view of \eqref{width} and \eqref{height} $r_j/h_j$ is
monotonically increasing in $j$, we get
\begin{align} \label{satisfied} %
  \frac{r_j^2}{h_j^2} \ \lesssim \ \frac{\lambda}{\gamma |\ln
    \lambda|} && \text{for all $0 \leq j \leq M$}.
\end{align}
Let us assume for the moment that
\begin{align} \label{consistency} %
  \frac{r_j^6}{h_j^3} \ \lesssim \ \frac{\gamma r_j^4 |\ln
    \lambda|}{h_j} %
  && \text{for all $0 \leq j \leq M$}.
\end{align}
In this case, in view of \eqref{ic-2}, the total energy is estimated by
\begin{align*}
  \frac 1{\ell^2} \tilde E[\uu, \tilde \vv] \ &\lesssim \ \frac
  1{a_1^2} \sum_{j=0}^{\infty} \gamma^{\frac 12} \eps^{\frac 12}
  r_j^{\frac 52} |\ln \lambda|^{\frac 12} \ %
  \lupref{ratio}\lesssim \ \gamma^{\frac 13} \eps^{\frac 23} \lambda
  |\ln \lambda|^{\frac 13}.
\end{align*}
In order to complete the proof, it remains to check the following
three consistency criteria.

\medskip

We first need to verify \eqref{consistency}. Indeed,
\eqref{consistency} follows from \eqref{satisfied} and our second
assumption on $\lambda$.  Secondly, we need to check that our
algorithm allows for at least one generation of cells, i.e. $M \gtrsim
1$. In order to see this, we note that $M \gtrsim 1$ is equivalent to
$3^M \gtrsim 1$. By our first assumption on $\lambda$, we have
\begin{align*}
  3^M \ \lupref{width}= \ \frac{r_1}{r_M} \ %
  \lupref{ratio}\sim \ \frac{\lambda}{\gamma^{\frac 13} \eps^{\frac
      23} |\ln \lambda|^{\frac 13}} \ %
  \gtrsim \ \frac{\lambda}{\gamma^{\frac 13} \eps^{\frac 23} |\ln
    \eps|^{\frac 13}} \ %
  \gtrsim 1.
\end{align*}
Finally, we need to check that the cells are indeed slender in the
sense of $r_j \ll h_j$.  Indeed, this follows from the second of
\eqref{satisfied}, together with the second assumption on $\lambda$.
This concludes the proof of Lemma \ref{lem-needle}.
\end{proof}

\subsection{Constructions for the full energy} \label{ss-diffuse}

In this section, we give the proof of Theorem \ref{thm-diffuse}. In
order to do so, it remains to give a diffuse interface version of the
upper constructions in Section \ref{ss-upper}.

\medskip

We first consider the case of a hard material, i.e. $Q \gtrsim
1$. Hence, we will construct a diffuse interface version $\tilde
\uu^{AF}$ of the magnetization $\uu^{AF}$. It is enough to show the
construction for a single needle. For simplicity, consider a closure
cell $Z_\mathrm{cls}$ with height $h$ and width $a$. We recall that
for sharp interfaces, we defined $\uu^{AF}$ by \eqref{def-af}, where
the shape of the needle is described by the characteristic function
$\chi$ in \eqref{chicls}.  Now, we define the transition layer
\begin{align}
  \label{eq:3}
  \SSS \ = \ \big \{ (x_1, x_\perp)\in Z_{\rm cls} \ : \ | \ |
  x_\perp| - \rho_\mathrm{cls}(x_1) - d(x_1) \, | \ \leq \ w(x_1) \big
  \},
\end{align}
where the functions $w(x_1)$ and $d(x_1)$ are the thickness and the
displacement of the diffuse interface, respectively, given by
\begin{align} \label{def-w} %
  w(x_1) := \frac {\eps}{Q r} \, \rho_\mathrm{cls}(x_1), \qquad
  \textstyle d(x_1) := -\rho_\mathrm{cls}(x_1) \left( 1 - \sqrt{1 -
      {\eps^2 \over 3 Q^2 r^2}} \, \right).
\end{align}
We recall that in the present units the quantity $\eps Q^{-1}$ is just
the typical width of Bloch walls, as described in Section
\ref{sec-model}. We also note that by our assumptions $\eps Q^{-1} \ll
r$ and, therefore, $|d(x_1)| \ll w(x_1) \ll \rho_\mathrm{cls}(x_1)$,
for all $x_1 \in (0, h)$ and all cells. Indeed, by our assumptions
$\gamma \gtrsim \lambda^{\frac 12} |\ln \lambda|^{-1} \gg
\lambda$. Therefore, by \eqref{ratio} the inequality holds for the
closure cell, in which $r$ is the smallest.

\medskip

Let the ``mollification of the Heaviside function'' $\tilde H \in
W^{1,\infty}({\R})$ be given by $\tilde H(t) = 0$ in $(-\infty,-1]$,
$\tilde H(t) = \tfrac12 (t+1) \in [-1,1]$ and $\tilde H(t) = 1$ in
$[1, \infty)$. Furthermore let $\tilde H_R(t) := \tilde H(t/R)$.
Analogously to \eqref{chicls}, we define
\begin{align*}
  \tilde \chi(x_1, x_\perp) \ := \ \tilde
 H_{w(x_1)}(\rho_\mathrm{cls}(x_1) + d(x_1)- |x_\perp|) && \text{for
  } x \in Z_\mathrm{cls},
\end{align*}
We then define $\tilde \uu^{AF}$ as follows. First we set 
\begin{align*}
  \tilde u_1^{AF} \ = \ \lambda - 2 \tilde \chi(x) \ \text{ in }
  Z_\mathrm{cls} %
  && \text{ and } && \uu_\perp \ := \ 0 \ \text{ outside of $\SSS$}.
\end{align*}
Inside the transition layer, we define $\tilde \uu_\perp^{AF}$, such
that $|\tilde \uu_\perp^{AF}|$ ensures that $\tilde \uu \in \AA$, and
that the vectors $\ee_1$, $x_\perp$, and $\tilde \uu_\perp^{AF}$ form
a right-handed triplet. A straightforward calculation shows that our
choice of $d(x_1)$ ensures charge neutrality on every slice, i.e.
$\int_K \tilde u_1^{AF} (x_1, x_\perp) \, d x_\perp = 0$ for all $x_1
\in [0,h]$.

\medskip

In the above construction the exchange and anisotropy energy are
supported only in the transition layer $\SSS$. One easily gets that
\begin{align*}
  \int_\SSS \left( \frac{\eps^2}{4 Q} |\nabla \tilde \uu^{AF}|^2 + Q
    |\tilde \uu_\perp^{AF}|^2 \right) \ %
  \lesssim \ \int_0^h \left( \frac {\eps^2 \rho_\mathrm{cls}(x_1)}{4 Q
      w(x_1)} + Q \rho_\mathrm{cls}(x_1) w(x_1) \right) dx_1 \ \sim \
  \eps r h.
\end{align*}
In estimating the stray field $\tilde{\vv}^{AF}$ of $\tilde \uu^{AF}$,
we can follow the same arguments as for $\uu^{AF}$, modifying the
definition of $\varphi_1^{(0)}$ to be the radially-symmetric potential
due to charges $\nabla \cdot \tilde \uu^{AF}$ rather than $\nabla
\cdot \uu^{AF}$. It is easy to see that all the estimates remain
unchanged. Comparing the above estimates with those in \eqref{ext-af},
it follows that the localized diffuse interface energy $\EEE[\tilde
\uu^{AF}, \tilde{\vv}^{AF}]$ for a single cell, based on the
construction $\tilde \uu^{AF}$ is not larger in terms of scaling than
the localized sharp interface energy $E[\uu^{AF}, \tilde \vv^{AF}]$ of
the optimal sharp interface construction in Section \ref{ss-upper}.

\medskip

The function $\tilde \uu^{AF}$ can be defined throughout $\Omega$ by
applying the above construction to every needle in the self-similar
geometry described in Section \ref{ss-upper}. The corresponding
estimate for $\tilde \uu^{AF}$ follows. Lastly, we note that the
construction in the case of a soft material ($Q \lesssim 1$) proceeds
analogously. This concludes the proof of Theorem \ref{thm-diffuse}.

\section{Reduced energy} \label{sec-reduced} %

The analysis of Section \ref{sec-bulk} provides the scaling of the
minimal energy in the limit of thick samples or, correspondingly, when
$\eps \to 0$.  While the analysis does not require any assumptions
about the minimizers (the analysis performed is ansatz-free), the
results obtained give only a rough idea about the structure of the
minimizers. It seems natural to expect that the minimizers should look
like the trial functions used in the construction of the upper bounds.
Yet, the precise shape of the domains, as well as the precise
constants in the asymptotic behavior of the minimal energy cannot be
captured by the analysis above, since it does not address the leading
order constant in the scaling of the energy.

\medskip

In this section, under the assumption that the magnetization is mostly
aligned with the easy axis and that the geometry of the minimizers is
slender, we derive a reduced sharp interface energy which should
provide the leading order behavior of energy for $\eps \ll 1$. Our aim
is to reduce energy minimization of $E$ to a two-step process. In the
first step, we fix the ``shape'' of the magnetic domains and construct
the energy-minimizing configuration of the magnetization away from the
domain walls. In the second step, we minimize the obtained, reduced
energy, which depends only on that shape. We note that the heuristic
idea of computing the combined contribution of the magnetostatic and
anisotropy energies away from the domain walls has been known as the
$\mu^*$-method in the physics literature \cite{williams49,hubert}.
Below, we assign a precise mathematical meaning to this idea and
provide its rigorous justification under specific assumptions.
Finally, let us also point out that while in this paper we are
interested in the case of applied field near saturation ($\lambda \ll
1$), we expect that the obtained reduced energy to be valid
independently of the applied field, even in zero applied field
($\lambda = 1$).

\medskip 

We first introduce the characteristic function $\chi$ representing the
shape of the domains where the magnetization vector is not aligned
with the external field,
\begin{align} \label{def-chi} %
  \chi(x) \ = \ 1 \ \text{if \ } m_1(x) < 0, && \chi(x) \ = \ 0 \
  \text{if \ } m_1(x) \geq 0.
\end{align}
The reduced energy which is derived in this section is then given by
\begin{align} \label{eq:21} %
  E_0[\chi] = 2 \eps \int_\Omega |\nabla \chi| + \gamma \int_{\mathbb
    R \times \TTT} \partial_1 (\lambda \chi_\Omega - 2 \chi)
  (-\Delta_Q^{-1})
\partial_1  (\lambda \chi_\Omega - 2 \chi),
\end{align}
where the operator $\Delta_Q$ is defined by
\begin{align*}
 \Delta_Q = \big(\chi_\Omega + \gamma (1 - \chi_\Omega) \big)
 \Delta_\perp + \gamma
\partial^2_1,
\end{align*}
with constant $\gamma$ defined earlier in \eqref{def-gamma}. The
admissible class of functions $\chi$ for $E_0$ is
\begin{align*}
\AA_0 = \big \{ \chi \in BV({\R \times \TTT}; \{0,1\}) : \chi
= 0 ~\mathrm{in}~ (\mathbb R \times \TTT) \backslash \Omega \big \}.
\end{align*}

\medskip

Note that $\Delta_Q \approx \Delta_\perp$ in $\Omega$, when acting on
functions that vary slowly in the easy direction, compared to the
directions normal to the easy axis. There is a second equivalent
formulation for \eqref{eq:21}. Indeed, a straightforward calculation
yields that \eqref{eq:21} can be written as
\begin{align} \label{eq:E00} %
  E_0[\chi] = \lambda^2 \ell^2 + 2 \eps \int_\Omega |\nabla \chi| - 4
  \lambda \int_\Omega \chi + 4 \gamma \int_{\mathbb R \times \TTT}
\partial_1 \chi (-\Delta_Q^{-1}) \partial_1 \chi.
\end{align}
Notice that by lower semicontinuity and coercivity, the minimum of the
energy of $E_0$ is attained in $\AA_0$ (see also \cite[Theorem
1.2]{choksi98}).

\medskip

We first note that up to the leading order constant, the reduced
energy $E_0$ has the same scaling of minimal energy as $E$, i.e., for
$\lambda \lesssim \gamma^2 |\ln \lambda|^2$ and $\ell$ sufficiently
large we have
\begin{equation*}
 \frac 1{\ell^2} \inf_{\chi \in \AA_0} E_0[\chi] \ \sim \ %
 \min \left \{ \lambda^2, \gamma^{\frac 13} \eps^{\frac 23} \lambda |\ln
   \lambda|^{\frac 13} \right\}.
\end{equation*}
It can be checked that this result follows by a slight modification of the proof of Theorem
\ref{thm-bulk} (replacing $u_1$ with $\lambda \chi_\Omega - 2 \chi$).

\medskip

In order to show the asymptotic equivalence of the minimum energies
for $E$ and $E_0$, including the leading order constant, we need to
make an assumption on the magnetization $\mm$. If we assume that the
magnetization vector $\mm$ does not deviate strongly from the easy
axis throughout the sample and, furthermore, that the geometry of the
magnetization configuration is slender, we can show that the minimal
energies for $E_0$ and $E$ essentially agree to the leading order:
\begin{theorem} \label{t:E0} %
  Let $\eps \ll 1$ and $\gamma \gg \delta$, where $0 < \delta \ll 1$
  is the same as in \eqref{energy}. Then
  \begin{enumerate}
  \item[(i)] For every $\uu \in \AA$ satisfying $|\uu_\perp| <
    \delta$ there exists $\chi \in \AA_0$, such that 
   \begin{align*}
      E[\uu] \geq (1 - \delta^{\frac 12}) E_0[\chi].
   \end{align*}
  \item[(ii)] For every $\chi \in \AA_0$, for which the solution
    $\tilde\varphi$ of
    \begin{align} \label{eq:7-2} %
      \Delta_Q \tilde \varphi = \gamma \partial_1 (\lambda \chi_\Omega
      - 2 \chi).
    \end{align}
    satisfies $|\nabla \tilde\varphi| < Q \delta$, there exists $\uu
    \in \AA$, such that
   \begin{align*}
      E_0[\chi] \geq (1 - \delta^{\frac 12}) E[\uu].
   \end{align*}
  \end{enumerate}
\end{theorem}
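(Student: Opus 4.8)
The plan is to establish the two inequalities by relating the full sharp-interface energy $E[\uu]$ to the reduced energy $E_0[\chi]$ through the identification $\chi=\chi_{\{m_1<0\}}$ and through an explicit ansatz for the perpendicular component $\uu_\perp$. The key algebraic fact underlying both directions is that, once we commit to a given $u_1$ (or equivalently $\chi$), the stray-field term $\int_{\R\times\TTT}|\vv|^2$ together with the anisotropy term $Q\int|\uu_\perp|^2$ can be jointly minimized over admissible $\uu_\perp$, and the minimum is exactly the quadratic form $\gamma\int\partial_1(\lambda\chi_\Omega-2\chi)(-\Delta_Q^{-1})\partial_1(\lambda\chi_\Omega-2\chi)$ appearing in \eqref{eq:21}. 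This is the rigorous content of the $\mu^*$-method: the operator $\Delta_Q$ is precisely what one gets by allowing $\uu_\perp$ inside $\Omega$ to screen the stray field by a factor governed by the effective permeability $1+Q^{-1}$, while outside $\Omega$ no such screening is available (hence the coefficient $\gamma$ multiplying $\Delta_\perp$ in the complement).

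For part (i), given $\uu\in\AA$ with $|\uu_\perp|<\delta$ everywhere, I would first replace $u_1$ by the sharp profile $\lambda\chi_\Omega-2\chi$ with $\chi$ as in \eqref{def-chi}. Since $|\uu_\perp|<\delta$ the constraint $|\uu+(1-\lambda)\chi_\Omega\ee_1|=\chi_\Omega$ forces $m_1=\pm\sqrt{1-|\mm_\perp|^2}$ to be within $O(\delta^2)$ of $\pm1$, so $u_1^\delta$ is within $O(\delta^2)$ of $\lambda\chi_\Omega-2\chi$ and $\eps\int|\nabla u_1^\delta|\ge 2\eps(1-C\delta^2)\int|\nabla\chi|$; this handles the interfacial term. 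For the remaining terms, I want to discard $\uu_\perp$ from the magnetization and instead recognize $\int_{\R\times\TTT}|\vv|^2+Q\int_\Omega|\uu_\perp|^2$ as being bounded below by its value on the optimal $\uu_\perp$, which is the $\Delta_Q^{-1}$ quadratic form. Concretely, writing $\vv=-\nabla\varphi$ with $\Delta\varphi=\nabla\cdot\uu=\partial_1 u_1+\nabla_\perp\cdot\uu_\perp$, one completes the square: for any test potential, $\int|\nabla\varphi|^2+Q\int_\Omega|\uu_\perp|^2\ge \min$, and carrying out the minimization over $\uu_\perp\in L^2(\Omega;\R^2)$ (unconstrained, since the constraint only costs $O(\delta)$) yields exactly the bilinear form with $\Delta_Q$. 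The factor $(1-\delta^{1/2})$ absorbs: the $O(\delta^2)$ error from truncation, the $O(\delta)$ error from relaxing the sphere constraint on $\uu_\perp$, and the error from dropping the extra $\delta^2 Q\int_{\{|\uu_\perp|\ge\delta\}}$ term, which is vacuous here since $|\uu_\perp|<\delta$ everywhere.

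For part (ii), I would run the construction in reverse: given $\chi$, solve \eqref{eq:7-2} for $\tilde\varphi$, set $u_1=\lambda\chi_\Omega-2\chi$ away from a thin Bloch-wall layer around $\partial\{\chi=1\}$, and inside $\Omega$ choose $\uu_\perp$ so that $\nabla_\perp\cdot\uu_\perp$ exactly produces the screening charge, i.e. $\uu_\perp$ is essentially $(1-\gamma)Q^{-1}\nabla_\perp\tilde\varphi$ or the appropriate combination making $\nabla\cdot(\uu+\vv)$ vanish; the hypothesis $|\nabla\tilde\varphi|<Q\delta$ guarantees $|\uu_\perp|<\delta$ so that $\uu$ genuinely lands in $\AA$ after the mild rescaling of $u_1$ needed to restore $|\mm|=\chi_\Omega$ pointwise. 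One then checks that $\int_{\R\times\TTT}|\vv|^2+Q\int_\Omega|\uu_\perp|^2$ equals the $\Delta_Q^{-1}$ form up to a $(1+\delta^{1/2})$ factor, that $\eps\int|\nabla u_1^\delta|$ matches $2\eps\int|\nabla\chi|$ up to the Bloch-wall contribution (which is lower order), and that the correction term $\delta^2Q\int_{\{|\uu_\perp|\ge\delta\}}$ is again vacuous.

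The main obstacle I anticipate is the careful bookkeeping in part (i) of the variational lower bound $\int_{\R\times\TTT}|\vv|^2+Q\int_\Omega|\uu_\perp|^2\ge\gamma\int\partial_1(\lambda\chi_\Omega-2\chi)(-\Delta_Q^{-1})\partial_1(\lambda\chi_\Omega-2\chi)$: one must correctly handle the coupling between $\partial_1 u_1$ and $\nabla_\perp\cdot\uu_\perp$ in the divergence, confirm that minimizing over $\uu_\perp\in L^2$ rather than over the nonlinear constraint set only loses a factor $1-O(\delta)$, and verify that the Euler--Lagrange equation of the minimization is exactly \eqref{eq:7-2} so that the minimal value is the claimed quadratic form. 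The identity $\int_\Omega\hh\cdot\mm=-\int|\hh|^2$ from Lemma \ref{l:mh}, together with the substitution \eqref{def-uv}, will be the tool that converts the raw stray-field energy into the $|\vv|^2$ form that the completion-of-squares argument acts on; getting the algebra of $\Delta_Q$ to emerge cleanly — in particular the coefficient $\chi_\Omega+\gamma(1-\chi_\Omega)$ in front of $\Delta_\perp$ — is where the proof will require the most attention.
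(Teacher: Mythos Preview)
Your approach is essentially the same as the paper's: both parts hinge on the decomposition $\uu=\uu^{(0)}+\uu^{(1)}$ with $\uu^{(0)}=\chi_\Omega\bigl((\lambda-2\chi)\ee_1-Q^{-1}\nabla_\perp\varphi^{(0)}\bigr)$ and $\Delta\varphi^{(0)}=\nabla\cdot\uu^{(0)}$, which is exactly the minimizer of the unconstrained problem you describe and makes $\varphi^{(0)}$ solve \eqref{eq:7-2}. The paper then expands $Q\int_\Omega|\uu_\perp|^2+\int_{\R\times\TTT}|\nabla\varphi|^2$ in this decomposition and bounds the single cross term $2\int_\Omega\partial_1\varphi^{(0)}\,u_1^{(1)}$ via Cauchy--Schwarz together with $|u_1^{(1)}|\le|\uu_\perp|^2\le\delta|\uu_\perp|$; this is your ``relaxing the sphere constraint costs $O(\delta)$'' made precise.

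Two small corrections to part (ii). First, no Bloch-wall layer is needed: $E$ is already a sharp-interface energy, and the truncation $u_1^\delta$ together with the exact identity $\int_\Omega|\nabla u_1^\delta|=2(1-\delta^2)\int_\Omega|\nabla\chi|$ (valid under $|\uu_\perp|<\delta$) handles the interfacial term without any smoothing. Second, the correct screening field is $\uu_\perp=-Q^{-1}\nabla_\perp\tilde\varphi$, not $(1-\gamma)Q^{-1}\nabla_\perp\tilde\varphi$; the factor $\gamma$ and the coefficient $\chi_\Omega+\gamma(1-\chi_\Omega)$ in $\Delta_Q$ emerge only after you substitute this choice back into $\Delta\varphi=\nabla\cdot\uu$ and collect the $\Delta_\perp\varphi$ terms on both sides. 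With these fixes your outline matches the paper's argument.
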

Theorem \eqref{t:E0} is proved at the end of this section via
Propositions \ref{p:reduced} and \ref{p:reducedup}. Before giving the
proofs for Propositions \ref{p:reduced} and \ref{p:reducedup}, it is
instructive to present a formal derivation of $E_0$ (see also
\cite{williams49}).

\medskip

We note that both anisotropy and the external field favor alignment of
the magnetization with the easy axis. We hence expect that $\mm
\approx \pm \mathbf e_1$ in $\Omega$ and should, therefore, have $u_1
\approx \lambda - 2 \chi$ there. In view of $|\mm| = \chi_\Omega$,
this motivates to write $\uu$ in $\Omega$ in the form
\begin{align*} 
 \uu \ %
  &= \left(\lambda - 1 + (1 - 2 \chi) \sqrt{1 - |\uu_\perp|^2}
  \right) \mathbf e_1 + \uu_\perp \ = \ (\lambda - 2 \chi ) \,
  \mathbf e_1 + \uu_\perp + \OOO(|\uu_\perp|^2).
\end{align*}
For $|\mm_\perp| = |\uu_\perp| \ll 1$, to the leading order
(\ref{energy}) then turns formally into
\begin{align} \label{eq:E0u} %
E[\uu] \simeq 2 \eps \int_\Omega |\nabla \chi| + Q \int_\Omega
|\uu_\perp|^2 + \int_{\mathbb R \times \TTT} |\tilde{\mathbf
  v}|^2.
\end{align}
Here, the function $\tilde{\mathbf v}$ satisfies 
\begin{align}  \label{eq:6} %
\tilde{\mathbf v} = -\nabla \tilde\varphi, \qquad \Delta \tilde\varphi
= \partial_1 
(\lambda \chi_\Omega - 2 \chi) + \nabla_\perp \cdot
\uu_\perp \quad \mathrm{in} \quad \mathbb R \times \TTT.
\end{align}

\medskip

Following our approach, we minimize \eqref{eq:E0u} in two steps. First
we take the minimum with respect to $\uu_\perp$ with $\chi$ fixed. In
the second step we minimize the result with respect to all admissible
characteristic functions $\chi$. It is not difficult to see (see the
proof of Proposition \ref{p:reduced} for details) that for fixed
$\chi$ the sum of the last two terms in (\ref{eq:E0u}) is minimized
when
\begin{align*}  %
\uu_\perp = -Q^{-1} \chi_\Omega \nabla_\perp \tilde \varphi.
\end{align*}
Substituting this relation into (\ref{eq:6}), we find that
\begin{align*}
  \partial_1^2 \tilde \varphi + \Delta_\perp \tilde \varphi %
  &= \ \partial_1 (\lambda \chi_\Omega - 2 \chi) + \nabla_\perp \cdot
 \uu_\perp \ = \
  \partial_1 (\lambda \chi_\Omega - 2 \chi) - Q^{-1} \chi_\Omega
  \Delta_\perp \tilde \varphi,
\end{align*}
which is precisely \eqref{eq:7-2}. Finally, substituting the
expression for $\tilde \varphi$ in \eqref{eq:7-2} into \eqref{eq:E0u},
it then follows that $E[\uu] \simeq E_0[\chi]$, where the
reduced energy $E_0$ is given by \eqref{eq:21}. Here we took into
account that $\Delta_Q$ is an invertible operator. Thus, to the
leading order the energy of minimizers of $E_0$ should coincide with
that of $E$.

\medskip

Observe that the reduced energy $E_0[\chi]$ just derived has a form
very similar to that of the sharp interface energy $E$ in the case of
infinite anisotropy, $Q = \infty$. Indeed, in the latter case the
magnetization vector $\mm$ is restricted to take only two values in
$\Omega$: $\mm = \pm \mathbf e_1$. For such magnetization
configurations the sharp interface energy \eqref{energy} turns into
\begin{align*}
 E^{AF}[\uu] = \eps \int_\Omega |\nabla u^\delta_1 | +
  \int_{\mathbb R \times \TTT} \partial_1 u_1 (-\Delta^{-1})
  \, \partial_1 u_1,
\end{align*}
which coincides with (\ref{eq:21}) for $\gamma = 1$, since in this
case $u_1 = \lambda \chi_\Omega - 2 \chi$. On the other hand, noting
that for slender magnetization configurations we have $-\Delta^{-1}
\simeq -\Delta_Q^{-1} \simeq -\Delta_\perp^{-1}$, one should expect
\begin{align}
  \label{eq:23}
  E^{AF} & \simeq 2 \eps \int_\Omega |\nabla_\perp \chi | +
  \int_{\mathbb R \times \TTT} \partial_1 (\lambda \chi_\Omega - 2
  \chi) (-\Delta_\perp^{-1}) \, \partial_1 (\lambda \chi_\Omega - 2
  \chi),
  \\
  E_0 & \simeq 2 \eps \int_\Omega |\nabla_\perp \chi | + \gamma
  \int_{\mathbb R \times \TTT} \partial_1 (\lambda \chi_\Omega - 2
  \chi) (-\Delta_\perp^{-1}) \, \partial_1 (\lambda \chi_\Omega - 2
  \chi), \label{eq:23a}
\end{align}
where $\chi$ in (\ref{eq:23}) and (\ref{eq:23a}) is given by
(\ref{def-chi}). It is easy to see that up to a multiplicative factor
the expression in (\ref{eq:23a}) coincides with that in (\ref{eq:23})
after rescaling $x_2$ and $x_3$ with $\gamma^{-\frac 13}$. Similarly,
the energy per unit area, according to (\ref{eq:23a}), is
$\gamma^{\frac 13}$ times the expression in (\ref{eq:23}). Thus, the
two energies approximately agree with each other for $Q \gg 1$, and
the minimal energy per unit area is smaller by a factor $Q^{\frac 13}$
for $Q \ll 1$ (see \eqref{def-gamma}). The latter is due to the fact
that in this case the {\em effective} magnetostatic interaction is
weakened by a factor of $Q$, since the stray field is shielded by
small deviations of $\mm$ from the easy direction, creating magnetic
counter-charges. These arguments provide a physical explanation of the
apparently surprising fact that the scalings of the energy of
non-trivial minimizers both in the case of hard and soft materials
agree up to a factor of $Q^{\frac 13}$ (see also
\cite{williams49,hubert}).

\medskip

We now give a rigorous derivation of the relationship between the
reduced energy $E_0$ and the sharp interface energy $E$ under a few
assumptions which appear quite natural physically. Theorem \ref{t:E0}
is an immediate consequence of the two propositions that follow (for
the lower and upper bounds, respectively).  We begin with the analysis
of the lower bound for $E[\uu]$ in terms of $E_0[\chi]$.
\begin{proposition}
 \label{p:reduced}
 Let $\gamma \gg \delta$, where $0 < \delta \ll 1$ is the same as in
\eqref{energy}, let $\uu \in \AA$, and let $|\uu_\perp| <
 \delta$.  Then $E[\uu] \geq (1 - \delta^{\frac 12}) E_0[\chi]$,
 where $\chi$ is given by (\ref{def-chi}).
\end{proposition}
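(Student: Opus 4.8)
The plan is to carry out the two-step minimization sketched in the formal derivation, but now with honest inequalities and explicit error control in powers of $\delta$. Fix $\uu \in \AA$ with $|\uu_\perp| < \delta$ and let $\chi$ be given by \eqref{def-chi}. The first step is to relate the interfacial term $\eps \int_\Omega |\nabla u_1^\delta|$ to $2\eps \int_\Omega |\nabla \chi|$. Since $|\uu_\perp| < \delta$ everywhere and $|\mm| = \chi_\Omega$, on $\Omega$ we have $m_1 = \pm\sqrt{1 - |\uu_\perp|^2}$, so $m_1$ takes values only in $(-1, -\sqrt{1-\delta^2}] \cup [\sqrt{1-\delta^2}, 1)$; hence the truncated function $u_1^\delta = m_1^\delta - 1 + \lambda$ jumps by exactly $2(1-\delta^2)$ across the interface $\partial^*\{\chi = 1\}$ (and is constant on each side up to the truncation), which gives $\int_\Omega |\nabla u_1^\delta| \geq 2(1-\delta^2)\int_\Omega |\nabla \chi|$. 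This already produces a factor $(1-\delta^2) \geq (1-\delta^{1/2})$ on the interfacial part; some care is needed because $m_1$ itself need not be of bounded variation a priori, but $u_1^\delta$ is (it is the relevant BV quantity in \eqref{energy}), and the coarea formula together with $|u_1^\delta|$ only crossing the "band" $(-1+\delta^2, 1-\delta^2)$ on a set contributing the full jump gives the bound.

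The second step handles the bulk terms $Q\int_{\{|\uu_\perp|<\delta\}} |\uu_\perp|^2 + \int_{\R\times\TTT} |\vv|^2 + \delta^2 Q \int_{\{|\uu_\perp|\geq\delta\}} |\uu_\perp|^2$. Under the hypothesis the last term vanishes, and $\vv = -\nabla\varphi$ with $\Delta\varphi = \nabla\cdot\uu = \partial_1 u_1 + \nabla_\perp\cdot\uu_\perp$ in $\R\times\TTT$. Writing $u_1 = (\lambda\chi_\Omega - 2\chi) + e$ where $e := u_1 - (\lambda - 2\chi)\chi_\Omega$ satisfies $|e| \lesssim |\uu_\perp|^2 < \delta^2$ pointwise on $\Omega$ (this is exactly the $\OOO(|\uu_\perp|^2)$ expansion in the formal derivation, using $\sqrt{1-|\uu_\perp|^2} = 1 - \OOO(|\uu_\perp|^2)$ and $u_1 = \lambda - 1 + (1-2\chi)\sqrt{1-|\uu_\perp|^2}$ a.e.), we split $\varphi = \varphi_0 + \varphi_e$ where $\Delta\varphi_0 = \partial_1(\lambda\chi_\Omega - 2\chi) + \nabla_\perp\cdot\uu_\perp$ and $\Delta\varphi_e = \partial_1 e$. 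The bulk quadratic form $Q\int|\uu_\perp|^2 + \int|\nabla\varphi_0|^2$ is, for fixed $\chi$ and fixed $\uu_\perp$-data, minimized over $\uu_\perp$ by $\uu_\perp = -Q^{-1}\chi_\Omega\nabla_\perp\tilde\varphi$, yielding exactly the operator $-\Delta_Q^{-1}$ and the value $\gamma\int \partial_1(\lambda\chi_\Omega-2\chi)(-\Delta_Q^{-1})\partial_1(\lambda\chi_\Omega-2\chi)$ — i.e. the second term of $E_0[\chi]$ — up to a positive definite correction; concretely, completing the square in $\uu_\perp$ gives $Q\int|\uu_\perp|^2 + \int|\nabla\varphi_0|^2 \geq \gamma \int \partial_1(\lambda\chi_\Omega - 2\chi)(-\Delta_Q^{-1})\partial_1(\lambda\chi_\Omega - 2\chi)$. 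This is where the hypothesis $\gamma \gg \delta$ enters: one must absorb the cross terms and $\varphi_e$-contribution, whose size is controlled by $\|e\|_{L^2} \lesssim \delta^2 \ell$ (times the natural scale), against the main term, whose size is $\gtrsim \gamma \times (\text{same scale})$, so the relative error is $\OOO(\delta^2/\gamma) \ll 1$, and more precisely one arranges it to be $\leq \delta^{1/2}$.

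Assembling: $E[\uu] \geq 2\eps(1-\delta^2)\int_\Omega|\nabla\chi| + (\text{bulk}) \geq (1-\delta^{1/2}) E_0[\chi]$ once the bulk part is shown to be $\geq (1-\delta^{1/2})$ times the magnetostatic term of $E_0$. The main obstacle I expect is the bookkeeping in the second step: rigorously justifying the pointwise expansion $u_1 = (\lambda-2\chi)\chi_\Omega + \OOO(|\uu_\perp|^2)$ as an $L^2$ (and distributional-derivative) statement — in particular that $\partial_1 e$ makes sense and has controlled $H^{-1}$ norm — and then propagating that through the (nonlocal, variable-coefficient) operator $\Delta_Q^{-1}$ with constants that are uniform in $\chi$. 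The Cauchy–Schwarz / completion-of-square estimates are routine once $e$ is controlled, and the $\delta^2$ from $|e| \lesssim |\uu_\perp|^2$ comfortably beats the required $\delta^{1/2}$ provided $\gamma \gg \delta$, which is exactly the stated assumption. One also needs the elementary fact that $-\Delta_Q^{-1}$ is a bounded positive self-adjoint operator on the relevant space (so that "completing the square" is legitimate), which follows since $\Delta_Q$ is uniformly elliptic with ellipticity constants between $\gamma$ and $1$.
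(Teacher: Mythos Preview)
Your approach is essentially the paper's. The interfacial identity $\int_\Omega|\nabla u_1^\delta| = 2(1-\delta^2)\int_\Omega|\nabla\chi|$ (which holds exactly under $|\uu_\perp|<\delta$, since then $|m_1|>\sqrt{1-\delta^2}$ and the truncation saturates) and the completion of the square in $\uu_\perp$ leading to the $\Delta_Q^{-1}$ form are precisely what the paper uses. The paper organizes the second step slightly differently, writing $\uu=\uu^{(0)}+\uu^{(1)}$ with the optimal transverse part $\uu_\perp^{(0)}=-Q^{-1}\chi_\Omega\nabla_\perp\varphi^{(0)}$ already built into $\uu^{(0)}$, so that $\varphi^{(0)}=\tilde\varphi$ solves \eqref{eq:7-2} directly and the only surviving cross term is $2\int_\Omega\partial_1\varphi^{(0)}\,u_1^{(1)}$; but this is a cosmetic reorganization of your $\varphi=\varphi_0+\varphi_e$ split.

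One refinement is needed in your error control. The bound ``$\|e\|_{L^2}\lesssim\delta^2\ell$ against a main term of size $\gtrsim\gamma\times(\text{same scale})$'' does not work: the magnetostatic part of $E_0[\chi]$ has no a priori lower bound in terms of $\ell$, so an absolute error cannot be absorbed. What you need (and already have pointwise) is the sharper estimate $|e|\leq|\uu_\perp|^2\leq\delta|\uu_\perp|$, giving $\int e^2\leq\delta^2\int|\uu_\perp|^2\leq\delta^2 Q^{-1}E[\uu]$. In the paper's organization this yields the cross-term bound $2\delta Q^{-1/2}E_0[\chi]^{1/2}E[\uu]^{1/2}$ (using $\int|\partial_1\varphi^{(0)}|^2\leq E_0[\chi]$), hence $E[\uu]\geq(1-\delta^2)E_0[\chi]-2\delta Q^{-1/2}E_0^{1/2}E^{1/2}$, and the conclusion follows from the quadratic formula once $\delta Q^{-1/2}\ll\delta^{1/2}$, i.e.\ $\delta\ll Q$; since $Q\geq\gamma$ this is exactly the hypothesis $\gamma\gg\delta$. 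In your splitting the analogous route is $\|\nabla\varphi_e\|\leq\|e\|_{L^2}\leq\delta Q^{-1/2}E[\uu]^{1/2}$ together with $\|\nabla\varphi_0\|\lesssim E[\uu]^{1/2}$, after which the cross term is $\lesssim\delta Q^{-1/2}E[\uu]$ and can be moved to the left-hand side.
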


\begin{proof}
 Let us write $\uu = \uu^{(0)} + \uu^{(1)}$ and
  $\varphi = \varphi^{(0)} + \varphi^{(1)}$, where $\varphi$ is
  defined in (\ref{def-v}), and
  \begin{align}
    \label{eq:28}
   \uu^{(0)} = \chi_\Omega \{ (\lambda - 2 \chi) \mathbf e_1 -
    Q^{-1} \nabla_\perp \varphi^{(0)}\}, \qquad \Delta \varphi^{(0)} =
   \nabla \cdot \uu^{(0)}.
  \end{align}
  This is always possible, since $\varphi^{(0)}$ is uniquely solvable
  in terms of $\chi$.  Indeed, eliminating $\uu^{(0)}$ in the second
  equation in (\ref{eq:28}) via the first equation in (\ref{eq:28}),
  one immediately sees that $\varphi^{(0)}$ solves the same equation
  as $\tilde\varphi$ in (\ref{eq:7-2}). Then, after a straightforward
  computation, using \eqref{def-v} and \eqref{eq:28}, the bulk part of
  energy $E_\mathrm{bulk}[\uu] = Q \int_\Omega |\uu_\perp|^2 +
  \int_{\mathbb R \times \TTT} |\nabla \varphi|^2$ can be written as
 \begin{align*}
    E_\mathrm{bulk}[\uu] %
    &= Q \int_\Omega \left| \uu^{(1)}_\perp - Q^{-1} \nabla_\perp
      \varphi^{(0)} \right|^2 + \int_{\mathbb R \times \TTT} \left|
      \nabla (\varphi^{(0)} +
      \varphi^{(1)}) \right|^2 \notag \\
    &= \int_\Omega \left( Q \left| \uu^{(1)}_\perp \right|^2 + Q^{-1}
      \left| \nabla_\perp \varphi^{(0)} \right|^2 + 2 \partial_1
      \varphi^{(0)} \, u_1^{(1)} \right) + \int_{\mathbb R \times
      \TTT} \left( \left| \nabla \varphi^{(0)} \right|^2 + \left|
        \nabla \varphi^{(1)} \right|^2 \right).
 \end{align*}
 Using Cauchy-Schwarz inequality, we obtain
 \begin{align} \label{eq:11}
    E_\mathrm{bulk}[\uu] \geq \int_{\mathbb R \times \TTT} (1 +
    Q^{-1} \chi_\Omega) |\nabla_\perp \varphi^{(0)}|^2 +
    \int_{\mathbb R \times \TTT} |\partial_1 \varphi^{(0)}|^2 
    - 2 \left( \int_\Omega |\partial_1 \varphi^{(0)}|^2
      \int_\Omega |u_1^{(1)}|^2 \right)^{\frac 12}.
  \end{align}

  \medskip

  Now, recalling \eqref{def-uv} and \eqref{def-chi}, one can see that
  in $\Omega$ the angle between the vector $\mm$ and the vector
  $m^{(0)}_1 \ee_1$, where $m^{(0)}_1$ is the first component of the
  vector $\mm^{(0)} = \uu^{(0)} + (1 - \lambda) \ee_1$, does not
  exceed $\frac{\pi}{2}$. Note that by \eqref{eq:28} we have
  $|m^{(0)}_1| = 1$. Therefore, since $|\mm| = 1$ as well, the
  magnitude of the projection of the vector $\uu^{(1)} = \mm -
  \mm^{(0)}$ onto $\ee_1$ does not exceed 1, and $u^{(1)}_1 \ee_1$
  points in the direction opposite to $m^{(0)}_1 \ee_1$. From this we
  conclude that $|u_1^{(1)}| = 1 - \sqrt{1 - |\uu_\perp|^2} \leq
  |\uu_\perp|^2$.  Therefore, by assumption we have $|u_1^{(1)}| \leq
  \delta |\uu_\perp|$.  On the other hand, as can be easily seen, the
  non-local part of $E_0$ equals the sum of the first two terms in
  (\ref{eq:11}). Furthermore, by Cauchy-Schwarz inequality
  \begin{align*}
    2 \left( \int_\Omega \left| \partial_1 \varphi^{(0)} \right|^2
     \int_\Omega \left| \, u_1^{(1)}
      \right|^2 \right)^{\frac 12} %
   \leq 2 \left( E_0[\chi] \delta^2 \int_\Omega |\uu_\perp|^2
    \right)^{\frac 12} %
    \leq 2 \delta Q^{-\frac 12} E_0[\chi]^{\frac 12} E_0[\uu]^{\frac 12}.
 \end{align*}
 Combining the above estimates with the fact that
 \begin{eqnarray*}
    \int_\Omega |\nabla u^\delta_1| = 2 (1 - \delta^2) \int_\Omega
    |\nabla \chi|,
 \end{eqnarray*}
  we arrive at
 \begin{eqnarray*}
    E[\uu] \geq 2 \eps (1 - \delta^2) \int_\Omega |\nabla \chi|
    + E_\mathrm{bulk}[\uu] \ \geq (1 - \delta^2) E_0[\chi] - 2
    \delta Q^{-\frac 12} E ^{\frac 12}[\uu] \, E_0 ^{\frac 12}[\chi],
  \end{eqnarray*}
  which yields the statement. 
\end{proof}

We next give the proof for the upper bound for $E[\uu]$ in terms of
$E_0[\chi]$.

\begin{proposition}
\label{p:reducedup}
Let $\mathbf \chi \in \AA_0$, let $|\nabla_\perp \tilde \varphi| \leq
Q \delta$, where $\tilde \varphi$ is given by \eqref{eq:7-2}, for some
$0 < \delta \ll \gamma$.  Then there exists $\uu \in \AA$ with
$|\uu_\perp| \leq \delta$, such that $E_0[\chi] \geq (1 -
\delta^{\frac 12}) E[\uu]$.
\end{proposition}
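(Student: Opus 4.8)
The plan is to construct the competitor $\uu$ directly from the formal Euler--Lagrange relation in the preamble. Given $\chi \in \AA_0$ and the solution $\tilde\varphi$ of \eqref{eq:7-2} with $|\nabla_\perp\tilde\varphi| \leq Q\delta$, I would set
\begin{align*}
  \uu_\perp \ := \ -Q^{-1}\chi_\Omega\nabla_\perp\tilde\varphi, \qquad\qquad
  u_1 \ := \ \chi_\Omega\Big((1-2\chi)\sqrt{1-|\uu_\perp|^2} - (1-\lambda)\Big),
\end{align*}
and $\uu := u_1\ee_1 + \uu_\perp$. Then $m_1 = u_1 + (1-\lambda)\chi_\Omega = (1-2\chi)\sqrt{1-|\uu_\perp|^2}$ has the sign dictated by $\chi$ (so $\chi$ is indeed the set $\{m_1<0\}$ of \eqref{def-chi}), $|\mm|^2 = m_1^2 + |\uu_\perp|^2 = \chi_\Omega$, and $|\uu_\perp| = Q^{-1}|\nabla_\perp\tilde\varphi| \leq \delta$; hence $\uu \in \AA$ with $|\uu_\perp| \leq \delta$. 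The only delicate point for admissibility is the $BV$-regularity of $\uu$: since $\nabla_\perp\tilde\varphi$ need not be $BV$ for a general $\chi\in\AA_0$, I would first prove the bound for $\chi$ with smooth reduced boundary --- for which $\tilde\varphi$ is piecewise smooth by elliptic transmission estimates --- and then pass to the general case by density of such $\chi$ in $\AA_0$ together with continuity of $E$ and $E_0$ along the approximation (the same issue is implicit in Proposition \ref{p:reduced}).

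Next I would estimate the three contributions to $E[\uu]$ in \eqref{energy}. Since $|m_1| = \sqrt{1-|\uu_\perp|^2}\geq\sqrt{1-\delta^2}>1-\delta^2$ in $\Omega$, the truncation \eqref{eq:2} gives $m_1^\delta = (1-2\chi)(1-\delta^2)$ in $\Omega$, hence $\eps\int_\Omega|\nabla u_1^\delta| = 2\eps(1-\delta^2)\int_\Omega|\nabla\chi|$. The anisotropy terms satisfy $Q\int_{\{|\uu_\perp|<\delta\}}|\uu_\perp|^2 + \delta^2 Q\int_{\{|\uu_\perp|\geq\delta\}}|\uu_\perp|^2 \leq Q\int_\Omega|\uu_\perp|^2 = Q^{-1}\int_\Omega|\nabla_\perp\tilde\varphi|^2$. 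For the stray-field term I would use the dual characterization \eqref{h<m}: writing $r_1 := u_1 - (\lambda\chi_\Omega-2\chi) = \chi_\Omega(1-2\chi)\big(\sqrt{1-|\uu_\perp|^2}-1\big)$, so that $|r_1| \leq |\uu_\perp|^2 \leq \delta|\uu_\perp|$, let $\xi$ solve $\Delta\xi = \partial_1 r_1$ in $\R\times\TTT$. A short computation (rewriting \eqref{eq:7-2} as $\Delta\tilde\varphi = \partial_1(\lambda\chi_\Omega-2\chi)+\nabla_\perp\cdot\uu_\perp$ via $\gamma = Q/(1+Q)$ and $\nabla_\perp\chi_\Omega=0$) shows $\tilde\vv := -\nabla\tilde\varphi - \nabla\xi$ satisfies $\nabla\cdot\tilde\vv + \nabla\cdot\uu = 0$; hence by \eqref{h<m}, Minkowski's inequality and the Fourier bound $\|\nabla\xi\|_{L^2}\leq\|r_1\|_{L^2}\leq\delta\|\uu_\perp\|_{L^2(\Omega)}$,
\begin{align*}
  \int_{\R\times\TTT}|\vv|^2 \ \leq \ \int_{\R\times\TTT}|\nabla\tilde\varphi + \nabla\xi|^2 \ \leq \ \Big(\|\nabla\tilde\varphi\|_{L^2} + \delta\|\uu_\perp\|_{L^2(\Omega)}\Big)^2.
\end{align*}

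The crucial identity is that, integrating by parts once in \eqref{eq:7-2} and using $\gamma = Q/(1+Q)$, the non-local part of $E_0[\chi]$ equals $\int_{\R\times\TTT}|\nabla\tilde\varphi|^2 + Q^{-1}\int_\Omega|\nabla_\perp\tilde\varphi|^2 = \int_{\R\times\TTT}|\nabla\tilde\varphi|^2 + Q\int_\Omega|\uu_\perp|^2$, so $E_0[\chi] = 2\eps\int_\Omega|\nabla\chi| + \int_{\R\times\TTT}|\nabla\tilde\varphi|^2 + Q\int_\Omega|\uu_\perp|^2$. Combining this with the three estimates above,
\begin{align*}
  E[\uu] \ \leq \ E_0[\chi] + 2\delta\,\|\nabla\tilde\varphi\|_{L^2}\|\uu_\perp\|_{L^2(\Omega)} + \delta^2\|\uu_\perp\|_{L^2(\Omega)}^2 \ \leq \ \big(1+\delta Q^{-1/2}\big)^2 E_0[\chi],
\end{align*}
where I bounded $\|\nabla\tilde\varphi\|_{L^2}^2\leq E_0[\chi]$ and $Q\|\uu_\perp\|_{L^2(\Omega)}^2\leq E_0[\chi]$. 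Since $Q\geq\gamma$ and $\delta\ll\gamma$ one has $\delta Q^{-1/2}\leq\delta\gamma^{-1/2}\leq\delta^{1/2}$, and for $\delta$ small relative to $\gamma$ one checks the elementary inequality $(1+\delta Q^{-1/2})^2\leq(1-\delta^{1/2})^{-1}$, which gives $(1-\delta^{1/2})E[\uu]\leq E_0[\chi]$, as claimed.

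I expect the $BV$-admissibility of $\uu$ to be the main obstacle --- specifically, justifying the reduction to $\chi$ with smooth reduced boundary and the continuity of the energies under that approximation; the remaining steps are integrations by parts and Cauchy--Schwarz/Young estimates, with the minor bookkeeping point that retaining the loss $\delta^{1/2}$ exactly is precisely what makes the hypothesis $\delta\ll\gamma$ (hence $\delta Q^{-1/2}\leq\delta^{1/2}$) necessary.
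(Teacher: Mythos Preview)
Your proof is correct and takes essentially the same approach as the paper: the competitor $\uu$ you construct coincides with the paper's $\uu^{(0)}+\uu^{(1)}$ (with $\uu^{(0)}=\chi_\Omega((\lambda-2\chi)\ee_1-Q^{-1}\nabla_\perp\tilde\varphi)$ and $\uu^{(1)}=r_1\ee_1$), your correction potential $\xi$ is the paper's $\varphi^{(1)}$, and the key estimate $\|\nabla\xi\|_{L^2}\leq\|r_1\|_{L^2}\leq\delta\|\uu_\perp\|_{L^2}$ is obtained in both by the same integration by parts. The only differences are cosmetic --- you spell out the identity $E_0[\chi]=2\eps\int_\Omega|\nabla\chi|+\int_{\R\times\TTT}|\nabla\tilde\varphi|^2+Q\int_\Omega|\uu_\perp|^2$ explicitly where the paper simply refers back to the computation in Proposition~\ref{p:reduced}, your appeal to the dual characterization \eqref{h<m} is in fact an equality (since $\tilde\vv$ is curl-free and hence equals the true stray field), and you raise the $BV$-admissibility of $\uu$ as a technical point that the paper leaves implicit.
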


\begin{proof}
 Set $\varphi^{(0)} = \tilde \varphi$ and $\uu = \mathbf
  u^{(0)} + \uu^{(1)}$, where $\uu^{(0)} = \chi_\Omega (
  (\lambda - 2 \chi) \mathbf e_1 - Q^{-1} \nabla_\perp
 \varphi^{(0)})$, and $\uu^{(1)} = u^{(1)}_1 \mathbf e_1$
  ensures that $|\uu^{(0)} + \uu^{(1)} + (1 - \lambda)
  \chi_\Omega \mathbf e_1| = \chi_\Omega$. Then $\varphi^{(0)}$ solves
 $\Delta \varphi^{(0)} = \nabla \cdot \uu^{(0)}$, and by
  assumption $|u^{(1)}_1| \leq Q^{-2} |\nabla_\perp
  \varphi^{(0)}|^2$. The proof is then obtained by retracing the
  calculation in the proof of Proposition \ref{p:reduced}, noting that
  in this case $\uu^{(1)}_\perp = 0$. We will only need one extra
  estimate for $\varphi^{(1)}$ solving $\Delta \varphi^{(1)} = \nabla
 \cdot \uu^{(1)}$.

\medskip

Integrating by parts and applying Cauchy-Schwarz inequality, we obtain
\begin{align*}
  \int_{\mathbb R \times \TTT} |\nabla \varphi^{(1)}|^2 &= -
  \int_{\mathbb R \times \TTT} \varphi^{(1)} \Delta \varphi^{(1)} = -
  \int_{\mathbb R \times \TTT} \varphi^{(1)} \partial_1 u^{(1)}_1 =
  \int_\Omega \partial_1 \varphi^{(1)} u^{(1)}_1 \leq \left(
    \int_\Omega |\nabla \varphi^{(1)}|^2 \int_\Omega |u^{(1)}_1|^2
  \right)^{\frac 12}.
\end{align*}
Squaring both sides and using the above estimate for $\left| u^{(1)}_1
\right|$, we find that
\begin{align*}
 \int_\Omega \left| \nabla \varphi^{(1)} \right|^2 \leq Q^{-2}
 \delta^2 \int_\Omega \left| \nabla_\perp \varphi^{(0)} \right|^2 \leq
 \delta^2 Q^{-1} E_0[\chi],
\end{align*}
which completes the proof.
\end{proof}

\section{Transition to non-trivial minimizers}
\label{sec-transition}

As we showed in Theorem \ref{thm-diffuse}, there is a change in the
scaling behavior of the minimum energy due to appearance of
non-trivial minimizers of $\mathcal E$ at $\lambda \sim \gamma^{\frac 13}
\eps^{\frac 23} |\ln \eps|^{\frac 13}$. In this section we analyze the nature of
this transition in more detail. Specifically, we are interested in
locating the precise critical value of $\lambda$ (corresponding to the
critical applied field away from saturation) at which this transition
occurs. We also address the structure of the domain patterns near the
transition point.

\medskip

As a first step, based on an asymptotic study of the reduced energy
$E_0$, we derive an even further reduced energy $E_{00}$. We will
formally show that near the transition point, in rescaled variables,
it is appropriate to consider the energy
\begin{align} \label{eq:37} %
  \bar E_{00}[\bar A] = \int_0^h \left\{ {1 \over 2 \pi} \left( {d
        \bar A \over d \xi} \right)^2 - \bar A + \sqrt{\pi \bar A}
  \right\} d \xi,
\end{align}
where the set of admissible functions is 
\begin{align*}
  \overline \AA \ = \ \big \{ \bar A \in H^1((0,h)) \ : \ \bar A \
  \geq \ 0, \ \bar A(0) = \bar A(h) = 0 \big \}
\end{align*}
The function $\bar A(\xi)$ is simply the rescaled area of the
cross-section of a single needle as a function of the rescaled
coordinate along the needle.  The single parameter $h$ can be
understood as a measure for the effective thickness of the plate. It
is defined by
\begin{align} \label{eq:14} %
  h := {\lambda^{\frac 32} \over \eps \gamma^{\frac 12} \ln^{\frac 12} \bigl(
    \gamma^{\frac 13} \eps^{-\frac 13})}.
\end{align}
A detailed derivation of the reduced energy $\bar E_{00}$ and the
precise definition of the quantities $\bar A$ and $\xi$ in terms of
the original quantities is given in the next subsection.

\medskip

The advantage of the energy $\bar E_{00}$ is that it can be explicitly
minimized and its minimizers can be explicitly computed. We identify
two critical values, denoted as $h_0^*$ and $h_1^*$, with $h_0^* <
h_1^*$ for the thickness of the rescaled slab, at which transitions in
the qualitative behavior of the critical points of $\bar E_{00}$
occur. Basically, the result is that the uniform state is the unique
global minimizer whenever the effective thickness $h$ of the sample
satisfies $h < h_1^*$.  Furthermore, the uniform state is even the
unique critical point as long as $h < h_0^*$. The precise statement is
the following:
\begin{theorem} \label{thm-needle} %
  Let $h_0^* = \pi \sqrt{2}$ and let $h_1^* > h_0^*$ be the unique
  solution of the system of equations
\begin{align} \label{def-h1} %
F(\rho_m) = 0 && \text{ and } && G(\rho_m) = h_1^*,
\end{align}
where the functions $G$ and $F$ are defined in (\ref{eq:40}) and
(\ref{eq:41}), respectively. Depending on the value of $h$, we
then have
\begin{enumerate}
\item If $h < h_0^*$, then $\bar A = 0$ is the unique global
  minimizer of $\bar E_{00}$, and there are no other critical points
  of $\bar E_{00}$.
\item If $h_0^* \leq h < h_1^*$, then $\bar A = 0$ is the unique
  global minimizer of $\bar E_{00}$, but there exist non-trivial
  critical points of $\bar E_{00}$.
\item If $h = h_1^*$, then there are two global minimizers, given by
$\bar A = 0$ and by the unique positive solution of (\ref{eq:38})
 vanishing at the endpoints.
\item If $h > h_1^*$, then the unique positive solution of
  (\ref{eq:38}) vanishing at the endpoints is the unique minimizer of
  $\bar E_{00}$.
\end{enumerate}
Furthermore, the unique positive critical point $\bar A_h$ of $\bar
E_{00}$ obeys $\bar A_h(\xi) \sim \xi^{4/3}$ for $h = h_0^*$ and $\bar
A_h(\xi) \sim \xi$ for $h > h_0^*$.
\end{theorem}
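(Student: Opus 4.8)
The plan is to convert the statement into a complete phase--plane analysis of the scalar variational problem \eqref{eq:37} and then to read off the four regimes from the behaviour of two explicit functions of the maximal needle radius. On the open set $\{\bar A>0\}$ critical points of \eqref{eq:37} solve the Euler--Lagrange equation $\tfrac1\pi\bar A''=-1+\tfrac{\sqrt\pi}{2\sqrt{\bar A}}$, and since the Lagrangian has no explicit $\xi$--dependence there is the first integral $\tfrac1{2\pi}(\bar A')^2+\bar A-\sqrt{\pi\bar A}=C$ along every solution. I would first record that, because the term $\sqrt{\pi\bar A}$ has infinite slope at $\bar A=0$, raising $\bar A$ from the value $0$ costs $O(t^{1/2})$ to first order in the amplitude $t$; hence $\bar A\equiv0$ is a constrained local minimizer for every $h$, and on any interval where the obstacle $\bar A\ge0$ is active the first--order condition is the (automatically satisfied) variational inequality rather than the Euler--Lagrange equation. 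The potential $W(\bar A):=\bar A-\sqrt{\pi\bar A}$ vanishes at $\bar A=0$, is negative on $(0,\pi)$, positive for $\bar A>\pi$, and has a single interior minimum; using this together with the first integral one shows that the only non--trivial solutions that are positive on an open interval and vanish at both of its endpoints are the symmetric single ``bumps'' parametrized by their maximum $A_m\ge\pi$, equivalently by $\rho_m=\sqrt{A_m/\pi}\ge1$, for which the conserved constant is $C=A_m-\sqrt{\pi A_m}=\pi\rho_m(\rho_m-1)\ge0$. From $(\bar A')^2=2\pi\bigl(C+\sqrt{\pi\bar A}-\bar A\bigr)$ near an endpoint one gets $\bar A\sim\xi$ when $C>0$ (i.e. $\rho_m>1$) and $\bar A\sim\xi^{4/3}$ when $C=0$ (i.e. $\rho_m=1$), which is already the last assertion of the theorem once the correspondence $h\leftrightarrow\rho_m$ is pinned down.

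Next I would introduce $G(\rho_m)$, the length of the bump with maximal radius $\rho_m$, obtained from the time--map integral $G(\rho_m)=\sqrt{2/\pi}\int_0^{A_m}\bigl(C+\sqrt{\pi\bar A}-\bar A\bigr)^{-1/2}\,d\bar A$, and $F(\rho_m):=\bar E_{00}$ evaluated on that bump, which via the first integral equals $C\,G(\rho_m)+2\int_0^{A_m}\bigl(\sqrt{\pi\bar A}-\bar A\bigr)\bigl[2\pi(C+\sqrt{\pi\bar A}-\bar A)\bigr]^{-1/2}d\bar A$. The substitution $\bar A=A_m s^2$ turns $G$ into $\tfrac{2\sqrt2}{\kappa}\int_0^1 s\,[(1-s)(1+s-\kappa)]^{-1/2}ds$ with $\kappa:=1/\rho_m\in(0,1]$; evaluating at $\rho_m=1$ reduces the integral to a Beta function and gives $G(1)=\pi\sqrt2=h_0^*$, while on the $C=0$ bump the integrand of $\bar E_{00}$ collapses to $2\sqrt{\bar A}\,(\sqrt\pi-\sqrt{\bar A})\ge0$, so that $F(1)>0$. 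The two facts to establish are then: (a) $G$ is strictly increasing on $[1,\infty)$ with $G(\rho_m)\to\infty$; (b) $F$ is strictly decreasing on $[1,\infty)$ with $F(\rho_m)\to-\infty$. Granting (a), $G$ is a homeomorphism of $[1,\infty)$ onto $[h_0^*,\infty)$, so a positive bump exists iff $h\ge h_0^*$ and is then unique with $\rho_m=G^{-1}(h)$; granting (b), $F$ has a unique zero $\rho_m=\rho^*>1$, and setting $h_1^*:=G(\rho^*)>h_0^*$ recovers \eqref{def-h1}.

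It then remains to enumerate all constrained critical points and compare energies. By the analysis above any critical point of \eqref{eq:37} is one of: (i) $\bar A\equiv0$; (ii) a single positive bump filling $(0,h)$, which needs $h\ge h_0^*$ and has energy $F(G^{-1}(h))$; (iii) a disjoint union of $k\ge1$ translates of the $C=0$ bump, each of length $\pi\sqrt2$, padded by intervals where $\bar A=0$ --- here the $C^1$--matching forced at any interior contact point of two positive components makes each such component carry $C=0$, so this requires $h\ge k\pi\sqrt2$ and has energy $kF(1)>0$. Since \eqref{eq:37} is coercive and weakly lower semicontinuous on $\overline{\AA}$, a global minimizer exists and must lie among these, and family (iii) never undercuts $\bar E_{00}[0]=0$. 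Comparing $0=\bar E_{00}[0]$ with $F(G^{-1}(h))$ --- which by (a)--(b) is positive for $h_0^*\le h<h_1^*$, zero for $h=h_1^*$, and negative for $h>h_1^*$ --- gives exactly the four cases: no non--trivial critical point for $h<h_0^*$; $\bar A\equiv0$ the unique global minimizer but a non--trivial bump present for $h_0^*\le h<h_1^*$; two global minimizers at $h=h_1^*$; and the unique positive bump as the unique global minimizer for $h>h_1^*$ (uniqueness from (a)). Finally $G^{-1}(h)=1$ precisely for $h=h_0^*$, so $\bar A_h$ is of the $\xi^{4/3}$ type there and of the linear type for $h>h_0^*$, as claimed.

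The main obstacle I expect is part (a), the strict monotonicity of the time--map $G$. After the reduction it amounts to showing $\kappa I'(\kappa)<I(\kappa)$ for $I(\kappa)=\int_0^1 s\,[(1-s)(1+s-\kappa)]^{-1/2}ds$ on $(0,1]$; differentiating produces the weight $1+s-\tfrac32\kappa$, which is negative near $s=0$ once $\kappa>\tfrac23$, so a pointwise comparison of integrands fails and one needs either a careful splitting of the $s$--integral against the vanishing factor $s/\sqrt{1-s}$, or a monotonicity--of--period argument (in the spirit of Chicone and Schaaf) tailored to the potential $W$. Part (b) is comparatively soft once (a) is available: the classical action--period relation gives $F'(\rho_m)=-C(\rho_m)\,G'(\rho_m)$, which is $<0$ for $\rho_m>1$ since $C>0$ and $G'>0$, while $F(1)>0$ and $F\to-\infty$ follow from the explicit reductions above; the only care needed there is to justify differentiating under the integral near $\rho_m=1$, where $G$ and the relevant integrals are only conditionally convergent.
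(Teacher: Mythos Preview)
Your overall strategy matches the paper's: derive the Euler--Lagrange equation and its first integral, parametrize the positive bump solutions by their maximum $\rho_m\ge1$, study the length map $G(\rho_m)$ and the energy map $F(\rho_m)$, and read off the four regimes from their monotonicity. The essential difference is in how the monotonicity is obtained. You flag (a) --- strict monotonicity of the time map $G$ --- as the main obstacle and propose Chicone--Schaaf type period arguments or a careful splitting of the integral. The paper sidesteps this entirely: substituting $\bar A=\pi\rho^2$ reduces the time map to $\sqrt{2}\int_0^{\rho_m}\rho\,(C+\rho-\rho^2)^{-1/2}\,d\rho$, and since $C+\rho-\rho^2=(\rho_m-\tfrac12)^2-(\rho-\tfrac12)^2$ this has an elementary antiderivative, giving the closed form \eqref{eq:40}. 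Direct differentiation then yields $G'(\rho_m)=4\sqrt{2\rho_m(\rho_m-1)}/(2\rho_m-1)>0$. The same computation for $F$ produces \eqref{eq:41} and $F'(\rho_m)=-4\pi\sqrt{2\rho_m^3(\rho_m-1)^3}/(2\rho_m-1)<0$, which incidentally verifies your action--period relation $F'=-C\,G'$ exactly (with your normalization of $C$). So the obstacle you anticipate dissolves once one notices the integrals are elementary. Conversely, your treatment of the constrained critical-point structure is more systematic than the paper's: you classify all candidates, including the multi-bump families built from $C=0$ pieces glued along intervals where $\bar A=0$, and discard them by their positive energy $kF(1)$. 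The paper instead only argues directly that the \emph{global minimizer} for $h>h_1^*$ must be strictly positive, via an ad hoc competitor (replace $\bar A$ by the constant $\pi$ between two points where $\bar A=\pi$). Your route is cleaner conceptually; the paper's is shorter because the explicit formulas carry the weight.
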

Thus, the transition from trivial to non-trivial minimizers of $\bar
E_{00}$ occurs precisely when $h = h_1^*$. Numerically, the critical
values of the parameter $h$ are
\begin{align*} %
 h_0^* \ \approx \ 4.443, && h_1^* \ \approx \ 6.113.
\end{align*}
In terms of the original energy $\mathcal E$, the statement of Theorem
\ref{thm-needle} has the following interpretation.  In view of
\eqref{eq:14}, the critical values $h_0^*$ and $h_1^*$ of $h$ define
the respective critical values of $\bar \lambda$:
\begin{eqnarray*}
\bar\lambda_0^* = \left( { \gamma {h_0^*}^2 \over 3 } \right)^{\frac 13},
\qquad   \bar\lambda_1^* = \left( { \gamma  {h_1^*}^2 \over 3 }
\right)^{\frac 13}, 
\end{eqnarray*}
with the meaning that if one chooses $\lambda = \bar \lambda
\eps^{\frac 23} |\ln \eps|^{\frac 13}$, then the global minimizer of $\mathcal
E$ will be trivial when $\bar \lambda < \bar \lambda_1^*$, and
non-trivial when $\bar \lambda > \bar \lambda_1^*$, for fixed $\bar
\lambda$ and $\gamma$ as $\eps \to 0$. Similarly, the trivial
minimizer is expected to be the unique critical point of $\mathcal E$
for $\bar \lambda < \bar \lambda_0^*$ when $\eps \to 0$. Thus, the
transition to non-trivial minimizers for $\eps \ll 1$ is expected to
occur at $\lambda = \lambda_1^* \simeq \bar \lambda_1^* \eps^{\frac 23}
|\ln \eps|^{\frac 13}$. We would similarly expect the transition to
non-trivial minimizers to occur at this value of $\lambda_1^*$ in all
the energies: $E_0$, $E$, and $\mathcal E$. We note that $\lambda_1^*$
can be rigorously shown to give the asymptotic upper bound for the
critical value of $\lambda$ at which non-trivial minimizers emerge by
constructing suitable trial functions out of the non-trivial
minimizers of $\bar E_{00}$.  However, since the arguments in this
section are based on certain assumptions on the geometry of the
magnetic domains, the arguments are not rigorous in terms of a lower
bound for $\lambda_1^*$.

\subsection{Isolated needles}

In this section, we present a formal asymptotic derivation of $\bar
E_{00}$ in \eqref{eq:37}. At the onset of the transition from uniform
magnetization to a patterned state as the applied field is reduced it
seems natural to expect the appearance of thin slender needle-shaped
domains of magnetization opposing the applied field. Under this
assumption, it is possible to further reduce the energy $E_0$ to
obtain the precise information about the shape of these domains.

\medskip

The starting point of the analysis in this section is the reduced
energy $E_0$ in the form of \eqref{eq:E00}. We are interested in the
magnetization configuration in the form of a single needle. More
precisely, we assume that the configuration consists of a a single
needle-shaped domain in a sufficiently large sample (i.e. $\ell
\gtrsim 1$). In particular, $\mathrm{supp} \, \chi$ looks like the
characteristic function of a prolate ellipsoid of radius $r_0 \ll 1$,
extending across $\Omega$ in the direction of the easy axis. The
crucial observation for the analysis in this section is that due to
the slender geometry, the dominant stray field interaction is
restricted to slices normal to the easy axis. This interaction,
however, is logarithmic and hence does not see the precise shape of
the magnetic domains (a similar phenomenon occurs in a related model
\cite{m:cmp09}). In fact, for a domain pattern described above, to the
leading order in $r_0 \ll 1$ the Green's function $G_Q$ of the
operator $-\Delta_Q$ can be approximated by $G_0$ given by
\begin{align*}
  G_0(\mathbf r) = {|\ln r_0| \over 2 \pi} \, \delta (x_1).
\end{align*}
Indeed, $G_0$ gives the leading order behavior of the Green's function
for the operator $-\Delta_0 = -\Delta_\perp$.  Then the non-local term
in the definition of $E_0$ may be written as
\begin{align*}
  \int_{\mathbb R \times \TTT} \partial_1 \chi
  (-\Delta_Q^{-1}) \partial_1 \chi &\simeq \ {|\ln r_0| \over 2 \pi}
  \int_0^1 \int_\TTT \int_\TTT \partial_1 \chi(x_1, \mathbf r_\perp)
  \, \partial_1 \chi(x_1, \mathbf r_\perp') \, d^2 \mathbf
  r_\perp \, d^2 \mathbf r_\perp' \, dx_1 \notag \\
  &= \ {|\ln r_0| \over 2 \pi} \int_0^1 \left( \partial_1 \int_\TTT
    \chi(x_1, \mathbf r_\perp) d \mathbf r_\perp \right)^2 dx_1.
\end{align*}
This motivates to define the function $A : [0,1] \to \R$ by $A(x_1) =
\int_\TTT \chi(x_1, \cdot)$, denoting the cross-sectional area of the
needle in the slice at $x_1$. Then \eqref{eq:E00} turns into
\begin{align*} %
  E_0[\chi] \ \approx \lambda^2 \ell^2 + \int_0^1 \left( 2 \eps
    \int_\TTT |\nabla_\perp \chi| \, d \mathbf r_\perp - 4 \lambda A +
    {2 \gamma |\ln r_0| \over \pi} |A'|^2 \right) d x_1,
\end{align*}
where we again have used slenderness of the needle in the sense of
$\int_\Omega |\nabla \chi| \approx \int_\Omega |\nabla_\perp \chi|$.
Minimizing the interfacial contribution of the energy at each $x_1$
for fixed cross-sectional area then leads to the following expected
behaviour for minimizers:
\begin{align*}
 \inf_\chi E_0[\chi] \ \approx \ \lambda^2 \ell^2 + \inf_A E_{00}[A],
\end{align*}
where $E_{00}$ is given by
\begin{align} \label{eq:25} %
E_{00}[A] = \int_0^1 \biggl( 4 \eps \sqrt{\pi} \, A^{\frac 12} - 4
\lambda \, A + {2 \gamma |\ln r_0| \over \pi} |A'|^2 \biggr) d
x_1.
\end{align}
Note that according to \eqref{ratio}, at the transition, where
$\lambda \sim \gamma^{\frac 13} \eps^{\frac 23} |\ln \eps|^{\frac 13}$, we expect
$r_0 \sim \eps^{\frac 13} \gamma^{-\frac 13} |\ln \eps|^{-\frac 13}$ and, hence,
$|\ln r_0| \simeq |\ln (\gamma / \eps)^{\frac 13}|$.  Using these two
scalings and by introducing the rescaled variables
\begin{align*}
  \bar A = {\lambda^2 \over \eps^2} A, %
  && \xi = h \, x_1, %
  && \bar E_{00}[A] = {\lambda h \over 4 \eps^2} E_{00}[\bar A],
\end{align*}
we obtain the energy $\bar E_{00}$ in \eqref{eq:37}. Finally, since
$\bar A(0) > 0$ or $\bar A(h) > 0$ imply that there is a charge layer
at the plate's surface, causing a lot of stray field energy, to the
leading order the minimizers of $E_{00}$ are expected to satisfy
$\overline A(0) = \overline A(h) = 0$.

\subsection{Needle shapes and critical fields}

In this section, we investigate the minimizers of $\bar E_{00}$ and
give the proof of Theorem \ref{thm-needle}.  We first assume there
exists a positive critical point $\bar A_h(\xi) \in H^1_0((0, h))$ of
$\bar E_{00}$, i.e. $A_h > 0$ in $(0,h)$.  By standard ODE theory,
$\bar A_h \in C^2((0,h))$ satisfies the Euler-Lagrange equation for
\eqref{eq:37}, i.e.
\begin{align} \label{eq:38} %
  {1 \over \pi} {d^2 \bar A_h \over d \xi^2} = - 1 + {\sqrt{\pi} \over
   2 \sqrt{\bar A_h}} %
  && \text{ for $\xi \in (0,h)$}.
\end{align}
Note that \eqref{eq:38} admits a first integral
\begin{align} \label{eq:39} %
{1 \over 2 \pi} \left( {d \bar A_h \over d \xi} \right)^2 + \bar A_h
- \sqrt{\pi \bar A_h} = \pi C
 && \text{ for $\xi \in (0,h)$},
\end{align}
where $C \in \R$ is an arbitrary constant. Evaluating \eqref{eq:39} at
$\xi = 0$ and in view of $A_h(0) = 0$, we have $C \geq 0$.  Evaluating
\eqref{eq:39} at the maximum point of $\bar A_h$, yields $\max \bar
A_h \geq \pi$. We furthermore note that the solution of \eqref{eq:39}
is monotone in $(0,{h \over 2})$ and takes its maximum at $\xi = {h
  \over 2}$. It is convenient to introduce the rescaled needle radius
$\rho(\xi) \ = (\bar A_h(\xi) / \pi)^{\frac12}$ where in view of the
above the maximum $\rho_m = \rho(h/2)$ satisfies $\rho_m \geq 1$.
Integrating \eqref{eq:39} over $(0, {h \over 2})$, a straightforward
calculation yields
\begin{align} \label{dada} %
  \frac h2 \ = \ \sqrt{2} \int_0^{\rho_m} {\rho \, d \rho \over
    \sqrt{C + \rho - \rho^2}} \
  = \ \frac 1{\sqrt{2}} \left( \sec^{-1}(1 - 2 \rho_m) + 2
    \sqrt{\rho_m (\rho_m - 1)} \right).
\end{align}
Here we used the fact that $C = \rho_m ( \rho_m - 1)$, which follows
by evaluating \eqref{eq:39} at $\xi = h/2$.  In particular, for any
given $h > 0$ a positive critical point of $\bar E_{00}$ exists if and
only if $h = G(\rho_m)$ for some $\rho_m \geq 1$, where
\begin{align} \label{eq:40} %
  G(\rho_m) \ := \ \sqrt{2} \left( \sec^{-1}(1 - 2 \rho_m) + 2
    \sqrt{\rho_m (\rho_m - 1)} \right),
\end{align}
Differentiating \eqref{eq:40}, one gets that ${dG(\rho_m) \over d
  \rho_m} = {4 \sqrt{2 \rho_m (\rho_m - 1)} \over 2 \rho_m - 1} > 0$,
i.e, $G(\rho_m)$ is strictly monotonically increasing for $\rho_m \geq
1$, with $G(1) = \pi \sqrt{2}$ and $G(\rho_m) \to \infty$ as $\rho_m
\to \infty$. In particular, a non--trivial critical point of $\bar
E_{00}$ exists, if and only if $h \geq h_0^* := \pi \sqrt{2}$.  Since
for every local minimizer the Euler-Lagrange equation holds in every
interval of positivity, this also shows that for every $h < h_0^*$,
the only critical point of $\bar E_{00}$ in $H^1_0((0,h))$ is $\bar A
= 0$. This completes the proof about the existence or non--existence
of non--trivial critical points in (i) and (ii) in Theorem
\ref{thm-needle}.

\medskip

\begin{figure}
\begin{center}
   \includegraphics[height=3cm]{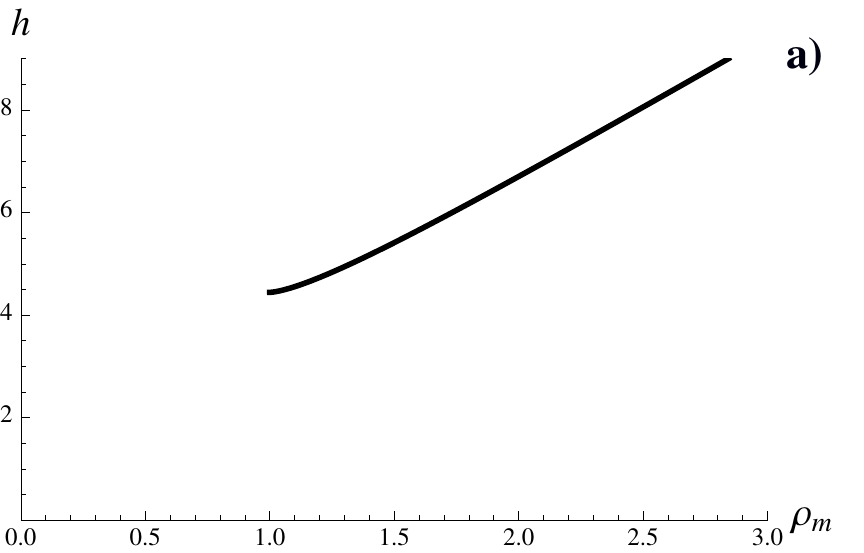} \hspace{8ex}
   \includegraphics[width=5cm]{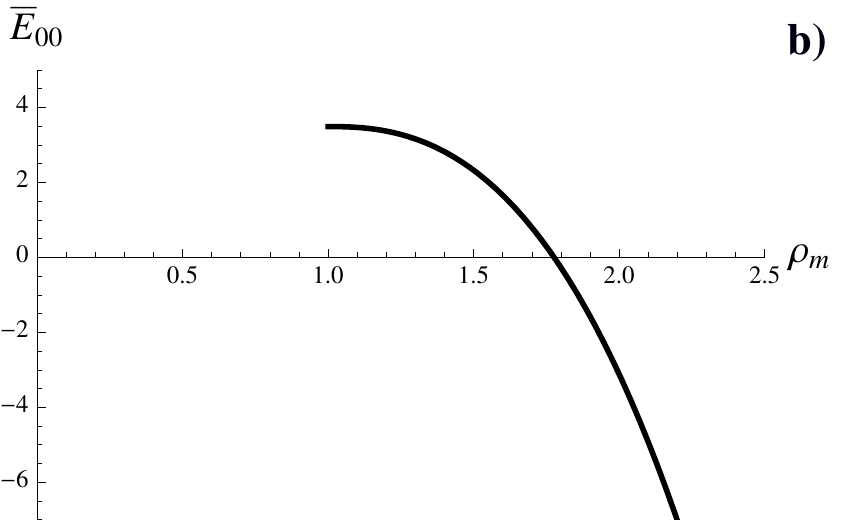}    
 \end{center}
 \caption{a) Parametric dependence of needle height on the radius,
   obtained from \eqref{eq:40}. b) Needle energy as a function of
   radius, obtained from \eqref{eq:41}.}
 \label{fig:h}
\end{figure}

Let us now consider the global minimizers of $\bar E_{00}$, which
exist, in view of coercivity and lower-semicontinuity of $\bar E_{00}$
for all $h > 0$. We first calculate the energy of the needle profile
calculated in the first part of the proof. For this, we define
\begin{align*}
  F(\rho_m) \ := \ \bar E_{00}[\bar A_h] \ = \ {1 \over \pi} \int_0^h
  \left( {d \bar A_h \over d \xi} \right)^2 d \xi - \pi h C.
\end{align*}
An explicit computation then yields
\begin{align} \label{eq:41} %
  F(\rho_m) \ = \ \frac \pi{\sqrt{18}} (3 - 4 \rho_m (\rho_m - 1))
  \sqrt{\rho_m (\rho_m - 1)} + \frac \pi{\sqrt 8} \sec^{-1} (1 - 2
  \rho_m).
\end{align}
Once again, differentiating this function with respect to $\rho_m$,
one gets ${dF(\rho_m) \over d \rho_m} = -{4 \pi \sqrt{2 \rho_m^3
    (\rho_m - 1)^3} \over 2 \rho_m - 1} < 0$, so that $F(\rho_m)$ is
strictly monotonically decreasing for $\rho_m \geq 1$, with $F(1) =
\pi^2/(2 \sqrt{2})$ and $F \to -\infty$ as $\xi \to \infty$. By
monotonicity of $G$ it then follows that $\bar E_{00}[\bar A_h]$ is
strictly decreasing in $h$ for $h \geq h_0^*$. In particular, for
$h_1^* > h_0^*$ defined by \eqref{def-h1}, we have $\bar E_{00}[\bar
A_h] > 0$ for $h \in (h_0^*, h_1^*)$ and $\bar E_{00}[\bar A_h] < 0$
for all $h \in (h_1^*, \infty)$ In particular, for $h \in (0, h_1^*)$,
the only global minimizer of $\bar E_{00}$ is given by $\bar A =
0$. The dependences of $h$ and $\bar E_{00}[\bar A_h]$ on $\rho_m$ are
shown in Fig. \ref{fig:h}.

\medskip

We now claim that for $h > h_1^*$ the global minimizer $\bar A$ is
unique and is given by $\bar A = \bar A_h$. Indeed, we first note that
$\bar A$ is not equal to $0$. In view of the above estimates on $\bar
E_{00}[A_h]$, every interval of positivity contains a point $x^*$ with
$\bar A(x^*) = \pi$. By strict monotonicity of $\bar E_{00}[A_h]$ as a
function of $h$, it is furthermore clear that $\xi = 0$ and $\xi = h$
are boundary points of the intervals of positivity of $\bar A$.
Suppose that $\bar A = 0$ on $I = [\xi_1, \xi_2] \subset\subset (0,h)$
where $\xi_1 \leq \xi_2$. By the above reasoning it follows that there
exist points $\xi_1' \in (0,\xi_1)$ and $\xi_2' \in (\xi_2, h)$ such
that $\bar A(\xi_1') = \bar A(\xi_2') = \pi$.  It follows that $\tilde
A$ defined by $\tilde A := \pi$ in $(\xi_1',\xi_2')$ and $\tilde A :=
\bar A$ outside of $(\xi_1',\xi_2')$ has lower energy than $\bar A$
contradicting the assumption that $\bar A$ is a minimizer.  This shows
that $\bar A > 0$ in $(0,h)$ and hence $\bar A = \bar A_h$.  This
completes the proof of (i)--(iv) in Theorem \ref{thm-needle}.

\medskip

In view of \eqref{eq:39} and \eqref{dada}, an explicit calculation
yields the following parametric equation for the needle profile for $0
< \xi < {h \over 2}$:
\begin{align*}
  &\xi \ = \ \frac{1}{\sqrt{2}} \left ( 2 \sqrt{(\rho_m-1)
      \rho_m}+\tan ^{-1}\left( (2 \rho -1) (2 \sqrt{(\rho_m-\rho )
        (\rho_m+\rho -1)})^{-1} \right)
  \right. \quad \notag \\
  &\qquad \left. -2 \sqrt{(\rho_m-\rho ) (\rho_m+\rho -1)} + \cot
    ^{-1}\left(2 \sqrt{(\rho_m-1) \rho_m}\right) \right),
\end{align*}
see Fig. \ref{fig:needle_hstar}. In particular, one easily checks that
for $0 < \xi \ll 1$ the behavior of the radius $\rho$ of the needle
near the tip is given by
\begin{align} \label{eq:43} %
 \rho(\xi) \ \sim \ \xi^{\frac 12} \quad \text{for } h > h^*_0 %
  && \text{and} && \rho(\xi) \ \sim \ \xi^{\frac 23} \quad \text{ for } h =
  h^*_0.
\end{align}
This concludes the proof of Theorem \ref{thm-needle}.
\begin{figure}
\centerline{\includegraphics[width=6cm]{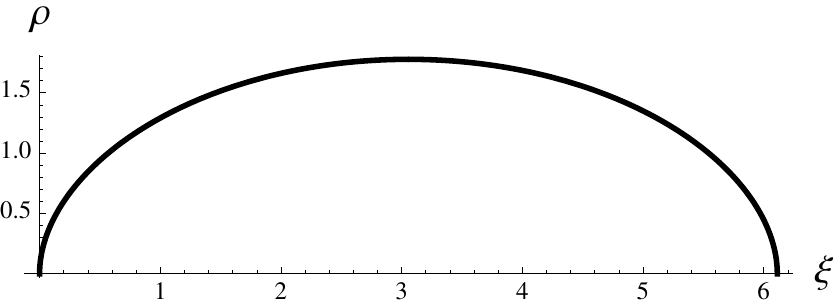}}
 \caption{The needle shape at $h = h_1^*$, obtained from \eqref{eq:38}.}
 \label{fig:needle_hstar}
\end{figure}

\section*{Acknowledgements} 
The authors are indebted to R. V. Kohn for pointing out similarities
of the considered problem with that arising from models of type-I
superconductors, as well as for many valuable discussions. The authors
would also like to acknowledge the hospitality of IAM at the
University of Bonn, where part of the work was done. C. B. M. was
supported, in part, by NSF via grants DMS-0718027 and DMS-0908279.

\bibliographystyle{plain}

\bibliography{branching}

\end{document}